\newcommand{\defproblem}[3]{
  \vspace{2mm}
  \begin{center}
  \noindent\fbox{
  \begin{minipage}{0.96\textwidth}
  \textsc{#1}

  \smallskip
  \noindent
  {\bf{Input:}} #2
  
  \smallskip
  \noindent
  {\bf{Output:}} #3
  \end{minipage}
  }
  \end{center}
  \vspace{2mm}
}
\def\dd{\mathinner{.\,.}} 
\newcommand{\cO}{\mathcal{O}}
\newcommand{\ctO}{\mathcal{\tilde{O}}}
\newcommand{\EDSI}{\text{EDSI}\xspace}
\newcommand{\LCP}{\textsf{LCP}\xspace}
\renewcommand{\L}{\mathcal{L}}
\newcommand{\NN}{\mathbb{N}}
\theoremstyle{plain}
\newtheorem{theorem}{Theorem}[section]
\newtheorem{fact}[theorem]{Fact}
\newtheorem{lemma}[theorem]{Lemma}
\newtheorem{proposition}[theorem]{Proposition}
\newtheorem{corollary}[theorem]{Corollary}
\newtheorem{conjecture}[theorem]{Conjecture}
\theoremstyle{remark}
\newtheorem{definition}[theorem]{Definition}
\newtheorem{example}[theorem]{Example}
\title{Elastic-Degenerate String Comparison\thanks{ This work was partially supported by the PANGAIA, ALPACA and NETWORKS projects that have received funding from the European Union's Horizon 2020 research and innovation programme under the Marie Skłodowska-Curie grant agreements No. 872539, 956229 and 101034253, respectively. Nadia Pisanti was partially supported by MUR PRIN 2022 YRB97K PINC and by NextGeneration EU programme PNRR ECS00000017 Tuscany Health Ecosystem. Jakub Radoszewski was supported by the Polish National Science Center, grants no.\ 2018/31/D/ST6/03991 and 2022/46/E/ST6/00463.}}
\author[1]{Esteban Gabory}
\author[2]{Moses Njagi Mwaniki}
\author[2]{Nadia Pisanti}
\author[1,3]{Solon P.\ Pissis}
\author[4]{Jakub Radoszewski}
\author[1]{Michelle Sweering}
\author[1,4]{Wiktor Zuba}
\affil[1]{CWI, Amsterdam, The Netherlands}
\affil[2]{University of Pisa, Pisa, Italy}
\affil[3]{Vrije Universiteit, Amsterdam, The Netherlands}
\affil[4]{Institute of Informatics, University of Warsaw, Warsaw, Poland}
\date{\today}
\begin{document}

\maketitle

\begin{abstract}
An elastic-degenerate (ED) string $T$ is a sequence of $n$ sets $T[1],\ldots,T[n]$ containing $m$ strings in total whose cumulative length is $N$. We call $n$, $m$, and $N$ the length, the cardinality and the size of $T$, respectively.
The language of $T$ is defined as $\L(T)=\{S_1 \cdots S_n\,:\,S_i \in T[i]\text{ for all }i\in[1,n]\}$.
ED strings have been introduced to represent a set of closely-related DNA sequences, also known as a pangenome. The basic question we investigate here is: Given two ED strings, how fast can we check whether the two languages they represent have a nonempty intersection? We call the underlying problem the \textsc{ED String Intersection} (EDSI) problem.
For two ED strings $T_1$ and $T_2$ of lengths $n_1$ and $n_2$, cardinalities $m_1$ and $m_2$, and sizes $N_1$  and $N_2$, respectively, we show the following:
\begin{itemize}
    \item There is no $\cO((N_1N_2)^{1-\epsilon})$-time algorithm, thus no $\cO\left((N_1m_2+N_2m_1)^{1-\epsilon}\right)$-time algorithm and no $\cO\left((N_1n_2+N_2n_1)^{1-\epsilon}\right)$-time algorithm, for any constant $\epsilon>0$, for \EDSI even when $T_1$ and $T_2$ are over a binary alphabet, unless the Strong Exponential-Time Hypothesis is false.
    \item There is no combinatorial $\cO((N_1+N_2)^{1.2-\epsilon}f(n_1,n_2))$-time algorithm, for any constant $\epsilon>0$ and any function $f$, for \EDSI even when $T_1$ and $T_2$ are over a binary alphabet, unless the Boolean Matrix Multiplication conjecture is false.
    \item An $\cO(N_1\log N_1\log n_1+N_2\log N_2\log n_2)$-time algorithm for outputting a compact (RLE) representation of the intersection language of two unary ED strings. In the case when $T_1$ and $T_2$ are given in a compact representation, we show that the problem is NP-complete.
    \item An $\cO(N_1m_2+N_2m_1)$-time algorithm for \EDSI.
    \item An $\ctO(N_1^{\omega-1}n_2+N_2^{\omega-1}n_1)$-time algorithm for \EDSI, where $\omega$ is the exponent of matrix multiplication; the $\ctO$ notation suppresses factors that are polylogarithmic in the input size.
\end{itemize}

We also show that the techniques we develop here have many applications even outside of bioinformatics. 
\end{abstract}


\section{Introduction}\label{sec:intro}

Sequence (or string) comparison is a fundamental task in computer science, with numerous applications in computational biology~\cite{DBLP:books/cu/Gusfield1997}, signal processing~\cite{Signal1979}, information retrieval~\cite{DBLP:books/aw/Baeza-YatesR2011}, file comparison~\cite{DBLP:journals/cacm/Heckel78}, pattern recognition~\cite{DBLP:journals/prl/AyadBP17}, security~\cite{DBLP:journals/ipl/ManberW94}, and elsewhere~\cite{DBLP:journals/csur/Navarro01}. 
Given two or more sequences and a distance function, the task is to compare the sequences in order to infer or visualize their (dis)similarities~\cite{DBLP:books/daglib/0020103}.

Many sequence representations have been introduced over the years to account for \emph{unknown} or \emph{uncertain} letters, a phenomenon that often occurs in data that comes from experiments~\cite{DBLP:conf/icdm/0001CGGLPPPSS20}. In the context of computational biology, for example, the IUPAC notation~\cite{IUPAC} is used to represent loci in a DNA sequence for which several alternative nucleotides are possible as variants. This gives rise to the notion of \emph{degenerate string} (or \emph{indeterminate string}): a sequence of finite sets of \emph{letters}~\cite{DBLP:journals/fuin/AlzamelABGIPPR20}. When all sets are of size 1, we are in the special case of a \emph{standard string} (or \emph{deterministic string}). Degenerate strings can encode the consensus of a population of DNA sequences~\cite{consensus2021} in a gapless multiple sequence alignment (MSA). Iliopoulos et al.~generalized this notion to also encode insertions and deletions (gaps) occurring in MSAs by introducing the notion of \emph{elastic-degenerate string}: a sequence of finite sets of \emph{strings}~\cite{DBLP:journals/iandc/IliopoulosKP21}.

The main motivation to consider elastic-degenerate (ED) strings is that they can be used to represent a \emph{pangenome}: a \emph{collection} of closely-related genomic sequences that are meant to be analyzed together~\cite{PanGenomeConsortium18}. Several other, more powerful, pangenome representations have been proposed in the literature, mostly graph-based ones; see the comprehensive survey by Carletti et al.~\cite{DBLP:conf/gbrpr/CarlettiFG0RV19} or by Baaijens et al.~\cite{DBLP:journals/nc/BaaijensBBVPRS22}. Compared to these more powerful representations, ED strings have algorithmic advantages, as they support: (i) fast and simple on-line string matching~\cite{DBLP:conf/cpm/GrossiILPPRRVV17,DBLP:journals/bioinformatics/CislakGH18}; (ii) (deterministic) subquadratic string matching~\cite{DBLP:conf/cpm/AoyamaNIIBT18,bernardini_et_al:LIPIcs:2019:10597,elasticSICOMP}; and (iii) efficient approximate string matching~\cite{tcs-ed2020,DBLP:conf/latin/0001GPSSZ22}.

Our main goal here is to give the first algorithms and lower bounds for comparing two pangenomes represented by two ED strings.\footnote{Pangenome comparison is one of the central goals of two large EU funded projects on computational pangenomics: PANGAIA (\url{https://www.pangenome.eu/}) and ALPACA (\url{https://alpaca-itn.eu/}).} We consider the most basic notion of matching, namely, to decide whether two ED strings, each encoding a language, have a nonempty intersection. Like with standard strings, algorithms for pairwise ED string comparison can serve as computational primitives for many analysis tasks (e.g., phylogeny reconstruction); lower bounds for pairwise ED string comparison can serve as meaningful lower bounds for the comparison of more powerful pangenome representations such as, for instance, variation graphs~\cite{DBLP:conf/gbrpr/CarlettiFG0RV19}.

Let us start with some basic definitions and notation.
An \emph{alphabet} $\Sigma$ is a finite nonempty set of elements called \emph{letters}. By $\Sigma^*$ we denote the set of all strings over $\Sigma$ including the \emph{empty string} $\varepsilon$ of length $0$.
For a string $S=S[1] \cdots S[n]$ over $\Sigma$, we call $n=|S|$ its \emph{length}. The fragment $S[i\dd j]$ of $S$ is an \emph{occurrence} of the underlying \emph{substring} $P=S[i]\cdots S[j]$. 
We also say that $P$ occurs at \emph{position} $i$ in $S$. 
A {\em prefix} of $S$ is a fragment of $S$ of the form $S[1\dd j]$ and a {\em suffix} of $S$ is a fragment of $S$ of the form $S[i\dd n]$. 
An \emph{elastic-degenerate string} (ED string, in short) $T$ is a sequence $T=T[1] \cdots T[n]$ of $n$ finite sets, where $T[i]$ is a subset of $\Sigma^*$. The total size of $T$ is defined as $N=N_{\varepsilon}+\sum^{n}_{i=1}\sum_{S\in T[i]} |S|$, where $N_{\varepsilon}$ is the total number of empty strings in $T$. By $m$ we denote the total number of strings in all $T[i]$, i.e., $m=\sum_{i=1}^n |T[i]|$. We say that $T$ has \emph{length} $n=|T|$, \emph{cardinality} $m$ and \emph{size} $N=||T||$. An ED string $T$ can be treated as a compacted nondeterministic finite automaton (NFA) with $n+1$ states, called \emph{segments}, numbered $1,\ldots,n+1$, and $m$ transitions labeled by strings in $\Sigma^*$. State $1$ is \emph{starting} and state $n+1$ is \emph{accepting}. For each index $i \in [1,n]$ and string $S \in T[i]$, there is a transition from state $i$ to state $i+1$ with label $S$; inspect also~\cref{fig:NFA} for an example. 
The language $\L(T)$ generated by the ED string $T$ is the language accepted by this compacted NFA. That is, $\L(T)=\{S_1 \cdots S_n\,:\,S_i \in T[i]\text{ for all }i\in[1,n]\}$. 

\begin{figure}[htpb]
    \centering
    \includegraphics[width=11cm]{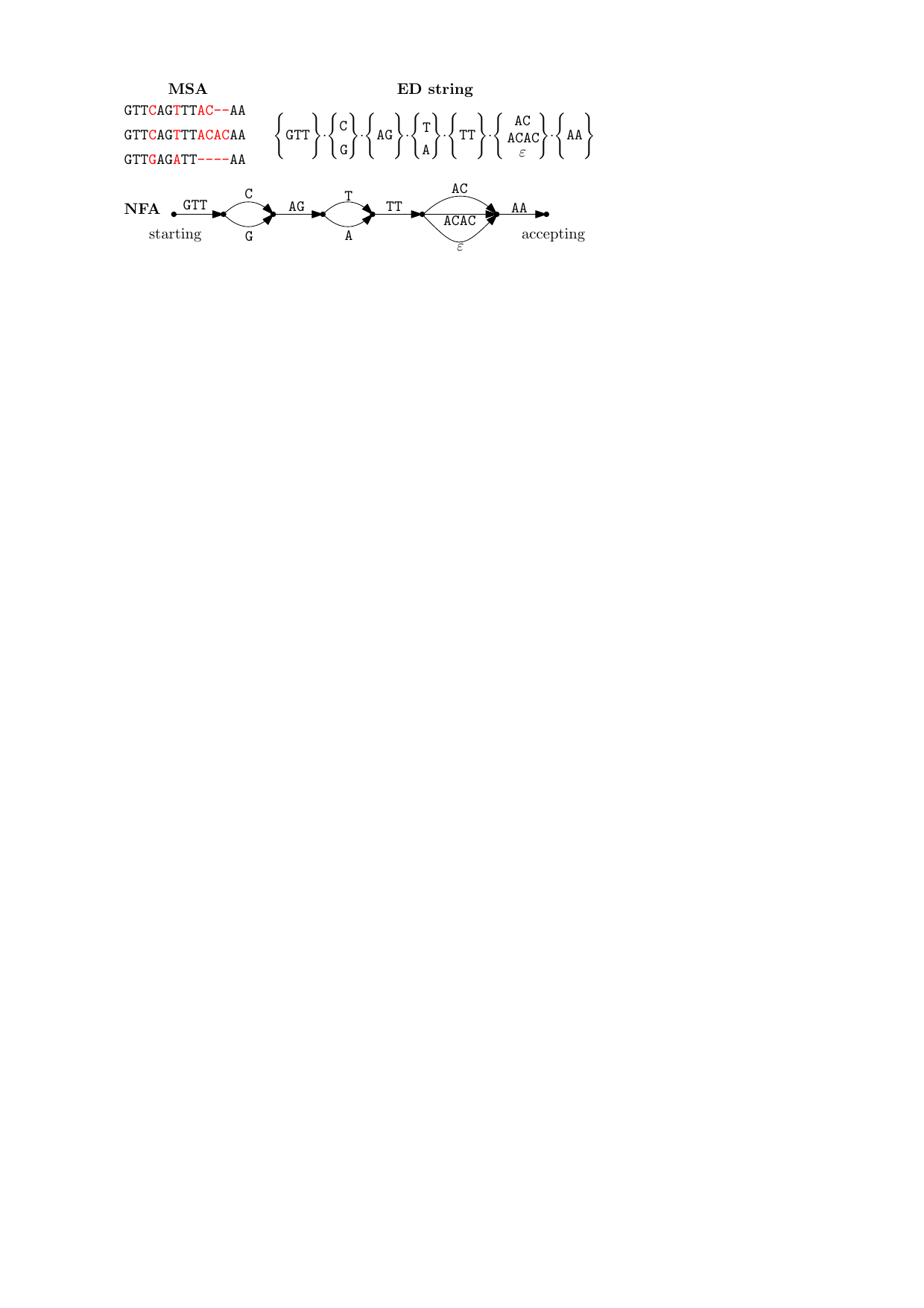}
    \caption{An example of an MSA and its corresponding (non-unique) ED string $T$ of length $n=7$, cardinality $m=11$ and size $N=20$, and the compacted NFA for $T$. The compacted NFA can be seen as a special case of an edge-labeled directed acyclic graph.}
    \label{fig:NFA}
\end{figure}

We next define the main problem in scope; inspect also Figure~\ref{fig:EDS} for an example.

\defproblem{ED String Intersection (\EDSI)}{Two ED strings, $T_1$ of length $n_1$, cardinality $m_1$ and size $N_1$, and $T_2$ of length $n_2$, cardinality $m_2$ and size $N_2$.}{YES if $\L(T_1)$ and $\L(T_2)$ have a nonempty intersection, NO otherwise.}

\begin{figure}[htpb]
    \centering
    \includegraphics[width=11cm]{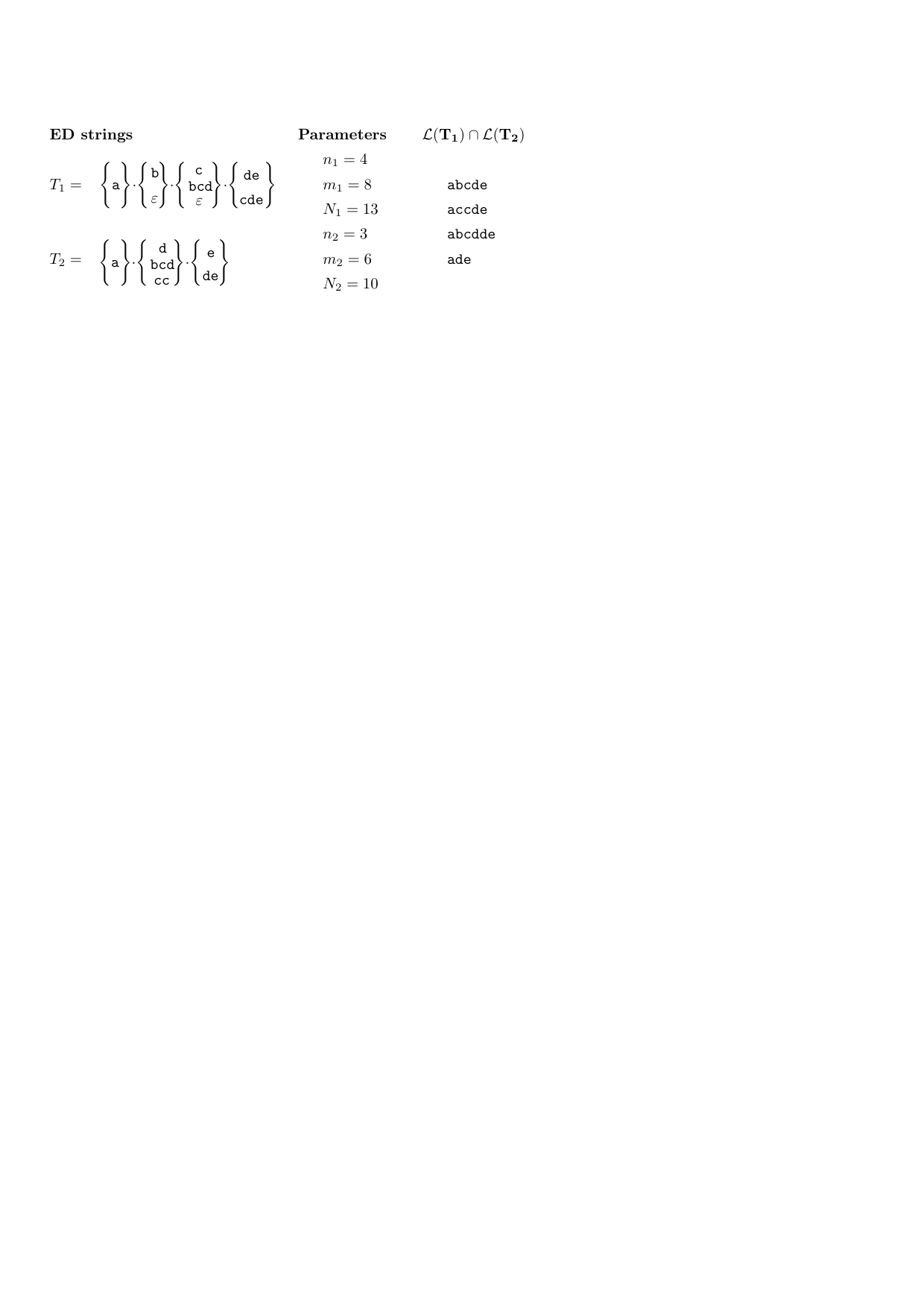}
    \caption{An example of two ED strings $T_1$ and $T_2$ with their parameters and the intersection of their languages. In this instance, we see that $\L(T_1)$ and $\L(T_2)$ have a nonempty intersection.}
    \label{fig:EDS}
\end{figure}

\subparagraph{Our Results} We make the following specific contributions:

\begin{enumerate}
     \item In Section~\ref{sec:SETH}, we give several conditional lower bounds. In particular, we show that there is no $\cO((N_1N_2)^{1-\epsilon})$-time algorithm, thus no $\cO\left((N_1m_2+N_2m_1)^{1-\epsilon}\right)$-time algorithm and no $\cO\left((N_1n_2+N_2n_1)^{1-\epsilon}\right)$-time algorithm, for any constant $\epsilon>0$, for \EDSI even when $T_1$ and $T_2$ are over a binary alphabet, unless the Strong Exponential-Time Hypothesis (SETH)~\cite{DBLP:journals/jcss/ImpagliazzoP01,DBLP:journals/jcss/ImpagliazzoPZ01} or the Orthogonal Vectors (OV) conjecture~\cite{DBLP:journals/tcs/Williams05} is false.
    \item In Section~\ref{sec:BMM}, we present other conditional lower bounds. In particular, we show that there is no combinatorial\footnote{The notion of ``combinatorial algorithm'' is informal but widely used in the literature. Typically, we call an algorithm ``combinatorial'' if it does not not call an oracle for ring matrix multiplication.} $\cO((N_1+N_2)^{1.2-\epsilon}f(n_1,n_2))$-time algorithm, for any constant $\epsilon>0$ and any function $f$, for \EDSI even when $T_1$ and $T_2$ are over a binary alphabet, unless the Boolean Matrix Multiplication (BMM) conjecture~\cite{DBLP:conf/focs/AbboudW14} is false.
    \item In Section~\ref{sec:unary}, we show an $\cO(N_1\log N_1\log n_1+N_2\log N_2\log n_2)$-time algorithm for outputting a compact (RLE) representation of the intersection language of two unary ED strings. In the case when $T_1$ and $T_2$ are given in a compact representation, we show that the problem is NP-complete.
    \item In Section~\ref{sec:graph}, we show an $\cO(N_1m_2+N_2m_1)$-time combinatorial algorithm for \EDSI in which we assume that the ED strings are over an integer alphabet $[1,(N_1+N_2)^{\cO(1)}]$.
    \item In Section~\ref{sec:MM}, we show an $\ctO(N_1^{\omega-1}n_2+N_2^{\omega-1}n_1)$-time algorithm for \EDSI, where $\omega$ is the matrix multiplication exponent.
\end{enumerate}

Interestingly, we show that the techniques we develop here have applications outside of bioinformatics. Given a sequence $P = P_1,\ldots,P_n$ of $n$ standard strings, we define an \emph{acronym} of $P$ as a string $A=A_1\cdots A_n$, where $A_i$ is a possibly empty prefix of $P_i$, for all $i\in[1,n]$. In the \textsc{Acronym Generation} (AG) problem, we are given a dictionary $D$ of $k$ strings of total length $K$ and a sequence $P$ of $n$ strings of total length $N$, and we are asked to say YES if and only if some acronym of $P$ belongs to $D$. In Section~\ref{sec:AG}, we show how our techniques for \EDSI can be modified to solve AG in $\cO(nK+N)$ time. 

In \cref{sec:EDapps}, we show how intersection graphs can be used to solve different ED string comparison tasks.
In \cref{sec:DEDSM}, we show how intersection graphs can be used for string matching in the general case when both the pattern and the text are ED strings.
In \cref{sec:AEDSI}, we extend our results for EDSI to the approximate case (under the Hamming or edit distance). We conclude this paper in \cref{sec:fin} with some open problems.

This is a full and extended version of a conference paper~\cite{DBLP:conf/cpm/GaboryMPPRSZ23}.

\subparagraph{Related Work} Apart from its applications to pangenome comparison, \EDSI is interesting theoretically on its own as a special case of regular expression (regex) matching. Regex is a basic notion in formal languages and automata theory. Regular expressions are commonly used in practical applications to define search patterns. Regex matching and membership testing are widely used as computational primitives in programming languages and text processing utilities (e.g., the widely-used \texttt{agrep}). The classic algorithm for solving these problems constructs and simulates an NFA corresponding to the regex, which gives an $\cO(MN)$ running time, where $M$ is the length of the pattern and $N$ is the length of the text. Unfortunately, significantly faster solutions are unknown and unlikely~\cite{DBLP:conf/focs/BackursI16}. However, much faster algorithms exist for many special cases of the problem: dictionary matching, wildcard matching, subset matching, and the word break problem (see~\cite{DBLP:conf/focs/BackursI16} and references therein) as well as for sparse regex matching~\cite{DBLP:conf/soda/BilleG24}.


Special cases of \EDSI have also been studied.
First, let us consider the case when both $T_1$ and $T_2$ are degenerate strings. In this case, the problem is trivial: \EDSI has a positive answer if and only if for every $i$, $T_1[i] \cap T_2[i]$ is nonempty. Alzamel et al.~\cite{wabi18,DBLP:journals/fuin/AlzamelABGIPPR20} studied the case when $T_1$ and $T_2$ are \emph{generalized degenerate strings}: for any $i\in[1,n_1]$ and $j\in[1,n_2]$ all strings in $T_1[i]$ have the same length $\ell_{1,i}$ and all strings in $T_2[j]$ have the same length $\ell_{2,j}$. In the case of generalized degenerate strings, they showed that deciding if $\L(T_1)$ and $\L(T_2)$ have a nonempty intersection can be done in $\cO(N_1+N_2)$ time. If $T_2$ is a standard string, i.e., an ED string with $m_2=n_2=1$, then we can resort to the results of Bernardini et al.~\cite{elasticSICOMP} for ED string matching. In particular: there is no combinatorial algorithm for \EDSI working in $\cO(n_1N_2^{1.5-\epsilon}+N_1)$ time unless the BMM conjecture is false; and we can solve \EDSI in $\ctO(n_1N_2^{\omega-1}+N_1)$ time. Moreover, Gawrychowski et al.~\cite{DBLP:conf/cpm/GawrychowskiGL20} provided a systematic study of the complexity of degenerate string comparison under different notions of matching: Cartesian tree matching; order-preserving matching; and parameterized matching.

Similar to ED strings (and to generalized degenerate strings) is the representation of pangenomes via \emph{founder graphs}. The idea behind founder graphs is that a multiple alignment of few \emph{founder sequences} can be used to approximate the input MSA, with the feature that each row of the MSA is a recombination of the founders. Like founder graphs, ED strings support the recombination of different rows of the MSA between consecutive columns. Unlike ED strings, for which no efficient index is probable~\cite{DBLP:conf/spire/Gibney20} (and indeed their value is to enable fast on-line string matching), some subclasses of founder graphs are indexable, and a recent research line is devoted to constructing and indexing such structures~\cite{DBLP:conf/wabi/Ascone0CEGGP24,DBLP:journals/algorithmica/EquiNACTM23,makinen2020linear,DBLP:journals/tcs/RizzoENM24}. In general, both ED strings and founder graphs are special cases of labeled graphs. Unfortunately, indexing labeled graphs is unlikely to have an efficient solution~\cite{DBLP:journals/talg/EquiMTG23}.

\section{Conditional Lower Bounds}\label{sec:CLB}

In this section, we show several conditional lower bounds for the \EDSI problem.
Bounds in the first group (see Section~\ref{sec:SETH}) are based on the popular Strong Exponential-Time Hypothesis (SETH)~\cite{DBLP:conf/iwpec/CalabroIP09}; the second group of bounds (see Section~\ref{sec:BMM}) is based, instead, on the Boolean Matrix Multiplication (BMM) conjecture~\cite{DBLP:conf/focs/AbboudW14}.

\subsection{Lower Bounds Based on SETH}\label{sec:SETH}

We are going to reduce the \textsc{Orthogonal Vectors} (OV, in short) problem to the \EDSI problem. In the OV problem we are given a set $V = \{v^1,\ldots,v^k\}$ of $k$ binary vectors, each of length $d$, and we are asked to decide whether or not there are any two vectors in $V$  which are orthogonal; i.e., the dot product of the two vectors is zero. The OV conjecture, implied by SETH (see~\cite{DBLP:journals/tcs/Williams05}), is the following.
\begin{conjecture}[OV conjecture~\cite{DBLP:journals/tcs/Williams05}]
The OV problem for $k$ binary vectors, each of length $d=\Theta(\log k)$, cannot be solved in $\cO(k^{2-\epsilon})$ time, for any constant $\epsilon>0$. 
\end{conjecture}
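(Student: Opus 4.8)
The final statement is a \emph{conjecture}, not a theorem, so there is no proof for me to reconstruct: it asserts an \emph{unconditional} lower bound --- that no algorithm solves OV on $k$ vectors of dimension $d=\Theta(\log k)$ in $\cO(k^{2-\epsilon})$ time --- and establishing such a statement would resolve a central open problem in fine-grained complexity. Any honest ``proof plan'' must therefore begin by conceding that no unconditional proof is known; what I can lay out is where our confidence comes from and, crucially, which step makes a genuine proof unattainable with current techniques.

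The natural first move would be reductive: invoke the reduction of Williams~\cite{DBLP:journals/tcs/Williams05} showing that a truly subquadratic OV algorithm refutes SETH. But this establishes only the \emph{conditional} implication ``SETH $\Rightarrow$ OV hardness'' and so transfers belief from one hypothesis to another rather than proving the unconditional claim; it cannot be the basis of a proof of the conjecture as worded. The only remaining evidence is algorithmic: despite sustained effort, no algorithm solves OV at dimension $d=\Theta(\log k)$ in $\cO(k^{2-\epsilon})$ time for a \emph{fixed} constant $\epsilon>0$. The trivial algorithm takes $\cO(k^2 d)=\cO(k^2\log k)$ time, and the strongest general improvements --- the polynomial-method algorithms --- save only sub-polynomial factors in this low-dimensional regime, leaving the quadratic exponent intact. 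This persistent barrier is the empirical ground for the conjecture, but barriers are not proofs.

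The hard part --- in fact the decisive obstacle --- is that a direct proof would amount to an \emph{unconditional polynomial-time lower bound for an explicit problem in $\mathsf{P}$}. We possess essentially no techniques for proving superlinear time lower bounds against general models of computation (the word-RAM, equivalently unrestricted Boolean circuits) for problems already known to lie in polynomial time; ruling out a $k^{2-\epsilon}$ algorithm is far beyond the state of the art, and a proof would entail circuit-lower-bound separations that are themselves long-standing open questions. Accordingly, my ``plan'' ends here: the statement is presented as a conjecture precisely because it is unproven, and every hardness result derived from it in \cref{sec:SETH} is correspondingly \emph{conditional} on its truth.
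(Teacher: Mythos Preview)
Your assessment is correct and matches the paper's treatment: the statement is presented as a \emph{conjecture} with no accompanying proof, serving only as the hypothesis under which the conditional lower bounds of \cref{sec:SETH} are derived. There is nothing to reconstruct, and your explanation of why the statement remains open is accurate.
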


We show the following reduction.

\begin{theorem}
Given any set $V=\{v^1,\ldots, v^k\}$ of $k$ binary vectors of length $d$, we can construct in linear time two ED strings $T_1$ and $T_2$ over a binary alphabet such that:
\begin{itemize}
    \item $T_1$ has length, cardinality, and size $\Theta(kd)$;
    \item $T_2$ has length $\Theta(\log k)$, cardinality $\Theta(k)$ and size $\Theta(kd)$; and
    \item $V$ contains two orthogonal vectors if and only if $T_1$ and $T_2$ have a nonempty intersection.
\end{itemize}
\end{theorem}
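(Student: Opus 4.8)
\emph{The plan} is to reduce from \textsc{Orthogonal Vectors}. Fix an OV instance $V=\{v^1,\dots,v^k\}\subseteq\{0,1\}^d$; recall $v^a\perp v^b$ iff no coordinate $j$ has $v^a_j=v^b_j=1$. The heart of the reduction is a per‑vector ``choice gadget'': for $a\in[1,k]$ let $G_a$ be the ED string $p_{a,1}\,p_{a,2}\cdots p_{a,d}$ (of length $d$) with $p_{a,j}=\{0\}$ if $v^a_j=1$ and $p_{a,j}=\{0,1\}$ if $v^a_j=0$, and write $E_b:=v^b_1v^b_2\cdots v^b_d$ for the plain string spelling out $v^b$. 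Two facts then do all the work: $0^d\in\L(G_a)$ for every $a$; and $E_b\in\L(G_a)$ \emph{iff} $v^a\perp v^b$. So if we force one string $T_1$ to generate exactly the sequences of $k$ ``chunks'' $R_1,\dots,R_k$ with $R_a\in\L(G_a)$, and the other string $T_2$ to generate exactly the sequences of $k$ chunks that equal $0^d$ everywhere except at one freely chosen position $a$ where the chunk is some freely chosen $E_b$, then $\L(T_1)\cap\L(T_2)\neq\emptyset$ iff some position $a$ admits $E_b\in\L(G_a)$, i.e.\ iff $V$ has an orthogonal pair.

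\emph{The two ED strings.} Take a fresh separator $\#$ and set $T_1:=\{\#\}\,G_1\,\{\#\}\,G_2\,\{\#\}\cdots\{\#\}\,G_k\,\{\#\}$ (concatenation of ED strings), so $\L(T_1)=\{\,\#R_1\#R_2\#\cdots\#R_k\#:R_a\in\L(G_a)\,\}$; every string of $\L(T_1)$ has the same length, exactly $k+1$ copies of $\#$ and $k$ chunks, and crucially chunk $a$ is produced by $G_a$ alone since the $\#$-segments are rigid singletons. This gives $n_1=m_1=N_1=\Theta(kd)$. For $T_2$ the only delicate point is making the position $a$ a free parameter with only $O(\log k)$ segments: use a ``prefix binary counter'' whose $i$-th segment ($1\le i\le\lceil\log k\rceil$) is $\{\varepsilon,(\#0^d)^{2^{i-1}}\}$, generating exactly $\{(\#0^d)^{t}:0\le t\le 2^{\lceil\log k\rceil}-1\}$; then one segment $\{\#E_1,\dots,\#E_k\}$; then a symmetric ``suffix binary counter''; then a final segment $\{\#\}$. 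Hence $\L(T_2)=\{(\#0^d)^{t_1}\#E_b(\#0^d)^{t_2}\#:0\le t_1,t_2\le 2^{\lceil\log k\rceil}-1,\ b\in[1,k]\}$, and counting segments, strings and lengths yields $n_2=\Theta(\log k)$, $m_2=\Theta(k)$, $N_2=\Theta(kd)$. Everything is written down in linear time.

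\emph{Correctness and the main obstacle.} If $v^a\perp v^b$, instantiate $t_1=a-1$, $t_2=k-a$ (both $<2^{\lceil\log k\rceil}$) and $E_b$; the resulting string of $\L(T_2)$ has its position‑$a$ chunk equal to $E_b\in\L(G_a)$ and every other chunk equal to $0^d\in\L(G_{a'})$, so it lies in $\L(T_1)$. Conversely, a string common to $\L(T_1)$ and $\L(T_2)$ has $t_1+1+t_2$ chunks (from the $T_2$ side) and exactly $k$ chunks (from the $T_1$ side), so $t_1+t_2=k-1$; hence $t_1\le k-1$, the $E_b$-chunk sits at a genuine position $a:=t_1+1\in[1,k]$, and membership in $\L(T_1)$ forces $E_b\in\L(G_a)$, i.e.\ $v^a\perp v^b$. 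To get a binary alphabet, encode $0\mapsto 00$, $1\mapsto 01$, $\#\mapsto 11$: this is an injective homomorphism on a $2$-bit grid (the block $11$ occurs exactly at images of $\#$), so the chunk/separator structure, and therefore the intersection, is preserved, while all sizes change by a constant factor. The step I expect to be the real obstacle is getting this interface exactly right: it is tempting to make $T_1$'s per‑vector blocks individually \emph{optional} (so $\L(T_1)$ would directly ``pick one vector''), but an optional block lets a single $d$-symbol chunk be stitched together from coordinates of several different vectors, which would make the reduction detect a far weaker property than the existence of an orthogonal pair. Keeping $T_1$ a fixed-length product — with the counting of $\#$'s and chunks doing the bookkeeping — is exactly what forecloses such spurious matches, and it is also what lets $T_2$, rather than $T_1$, carry the ``selection'' and hence be short.
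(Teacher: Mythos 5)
Your proposal is correct and is essentially the paper's reduction: the same per-vector gadget ($\{0\}$ for coordinates where $v^a_j=1$, $\{0,1\}$ where $v^a_j=0$) concatenated over all $k$ vectors for $T_1$, and the same doubling ``binary counter'' of padding blocks around the segment $V$ for $T_2$. The only cosmetic difference is that you enforce chunk alignment with a separator $\#$ (and then re-encode to a binary alphabet), whereas the paper stays on $\{0,1\}$ throughout and gets the alignment for free from the fact that every string in every segment of $T_2$ has length divisible by $d$.
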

\begin{proof}
Let $u^i = 1^d - v^i$ for all $i \in [1,k]$. For a length-$d$ vector $v$ and $j \in [1,d]$, by $v_j$ we denote the $j$th component of $v$. We construct $T_1$ and $T_2$ as follows~(see Example~\ref{ex:OVreduction}):\footnote{By the $\prod$ notation we denote a sequence of concatenations of segments in an ED string.}

\begin{align*}
    T_1 = \prod_{i=1}^k\ \prod_{j=1}^d\ \left\{0, u^i_j\right\},\quad\quad\quad
    T_2 = \prod_{i=0}^{\lfloor\log_2 k\rfloor}\left\{0^{d\cdot2^i}, \varepsilon\right\}\cdot V\cdot \prod_{i=0}^{\lfloor\log_2 k\rfloor}\left\{0^{d\cdot2^i}, \varepsilon\right\}.
\end{align*}

We now show that $\mathcal{L}(T_1)$ and $\mathcal{L}(T_2)$ have a nonempty intersection if and only if there exists a pair of orthogonal vectors in $V$.
\begin{itemize}
    \item Suppose $v^a$ and $v^b$ are orthogonal. Then for all $j \in [1,d]$, $v^b_j \in \left\{0, u^a_j\right\}$ and hence $v^b \in \prod_j \left\{0, u^a_j\right\}$. It follows that \[0^{(a-1)d}v^b0^{(k-a)d} \in 0^{(a-1)d}\prod_j \left\{0, u^a_j\right\}0^{(k-a)d}\subseteq \mathcal{L}(T_1).\]
    By decomposing $a-1 = \sum_{i \in S_{a-1}} 2^i$ and $k-a = \sum_{i \in S_{k-a}} 2^i$, where for any integer $p$, the set $S_p$ contains the positions with a $1$ in the binary representation of $p$, we find that
    \[0^{(a-1)d}v^b0^{(k-a)d} \in \prod_{i\in S_{a-1}} 0^{d\cdot2^i} \cdot V\cdot \prod_{i \in S_{k-a}} 0^{d\cdot2^i} \subseteq \mathcal{L}(T_2).\]
    We conclude that $0^{(a-1)d}v^b0^{(k-a)d} \in \mathcal{L}(T_1) \cap \mathcal{L}(T_2)$.
    \item Conversely, suppose that $\mathcal{L}(T_1)$ and $\mathcal{L}(T_2)$ have a nonempty intersection and consider a string $S\in \mathcal{L}(T_1) \cap \mathcal{L}(T_2)$. Let $v^b$ be the vector from $V$ which is chosen in $T_2$ when constructing $S$. The strings in the sets of $T_2$ all have length divisible by $d$. Thus $v^b$ starts at an index $(a-1)d + 1$ of string $S$ for some integer $a$. Since $S\in \mathcal{L}(T_1)$, we have $v^b \in \prod_{j=1}^d\ \left\{0, u^a_j\right\}$. This implies that $v^a$ and $v^b$ are orthogonal.
\end{itemize}
Therefore, solving the orthogonal vectors problem for $V$ is equivalent to checking whether $\mathcal{L}(T_1)$ and $\mathcal{L}(T_2)$ have a nonempty intersection.
\end{proof}

\begin{example}\label{ex:OVreduction}
Let $k=3$, $d=2$ and $V=
\{
v^1=(1,0),
v^2=(0,1),
v^3=(1,1)\}$.

We have that $T_1=
\begin{Bmatrix}0\end{Bmatrix}\cdot
\begin{Bmatrix}0\\1\end{Bmatrix}\cdot
\begin{Bmatrix}0\\1\end{Bmatrix}\cdot
\begin{Bmatrix}0\end{Bmatrix}\cdot
\begin{Bmatrix}0\end{Bmatrix}\cdot
\begin{Bmatrix}0\end{Bmatrix}$

and $T_2=
\begin{Bmatrix}00\\\varepsilon\end{Bmatrix}\cdot
\begin{Bmatrix}0000\\\varepsilon\end{Bmatrix}\cdot
\begin{Bmatrix}10\\01\\11\end{Bmatrix}\cdot
\begin{Bmatrix}00\\\varepsilon\end{Bmatrix}\cdot
\begin{Bmatrix}0000\\\varepsilon\end{Bmatrix}
$.

One can observe that each string from $\mathcal{L}(T_1)\cap\mathcal{L}(T_2)$ corresponds to a pair of orthogonal vectors from $V$. For example, the string $010000$ is in $\mathcal{L}(T_2)$ because $v^2=(0,1)\in V$. Since the vector $v^1=(1,0)\in V$ is orthogonal to $v^2$, one also has $010000\in\mathcal{L}(T_1)$. This is because the two first segments of $T_1$ are constructed to encode any vector which is orthogonal to $v^1$.
\end{example}

Note that when $d=\Theta(\log k)$, the length $n_1$, the cardinality $m_1$ and the size $N_1$ of $T_1$ are $\cO(k\log k)$, whereas $T_2$ has length $n_2=\cO(\log k)$, cardinality $m_2=\cO(k)$ and size $N_2=\cO(k\log k)$. Moreover, both ED strings are over a binary alphabet $\Sigma = \{0, 1\}$. This implies various hardness results for \EDSI. For example, we can see that, for any constant $\epsilon > 0$, and an alphabet $\Sigma$ of size at least $2$ the problem cannot be solved in 
\[\cO\left((N_1 + N_2 + n_1 + n_2)^{2-\epsilon}\cdot \text{poly}(n_2)\right)\]
time, conditional on the OV conjecture.
By using the fact that $n_1\leq m_1 \leq N_1$ and $n_2\leq m_2 \leq N_2$, we obtain the following bounds.

\begin{corollary}
For any constant $\epsilon > 0$, there exists no
\begin{itemize}
    \item $\cO((N_1N_2)^{1-\epsilon})$-time
    \item $\cO\left((N_1m_2+N_2m_1)^{1-\epsilon}\right)$-time
    \item $\cO\left((N_1n_2+N_2n_1)^{1-\epsilon}\right)$-time
\end{itemize}
algorithm for the \EDSI problem, unless the OV conjecture is false.
\end{corollary}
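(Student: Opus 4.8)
The plan is to derive all three statements directly from the reduction of the preceding theorem, instantiated at $d=\Theta(\log k)$, so that an OV instance on $k$ vectors of length $\Theta(\log k)$ is turned, in linear time, into an \EDSI instance $(T_1,T_2)$ whose relevant size parameters are all within a polylogarithmic factor of $k^2$. First I would record the parameter bounds supplied by the theorem together with $d=\Theta(\log k)$: namely $n_1,m_1,N_1=\Theta(kd)=\Theta(k\log k)$, $N_2=\Theta(kd)=\Theta(k\log k)$, $m_2=\Theta(k)$, and $n_2=\Theta(\log k)$. Using $n_i\le m_i\le N_i$ for $i\in\{1,2\}$, each of the three quantities $N_1N_2$, $N_1m_2+N_2m_1$, and $N_1n_2+N_2n_1$ is $\cO(N_1N_2)=\cO(k^2\log^2 k)$ (in fact all three are $\Theta(k^2\log^2 k)$, but only the $\cO$ direction is needed here).

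Next I would argue by contradiction. Suppose that for some constant $\epsilon>0$ there is an algorithm solving \EDSI in time $\cO(X^{1-\epsilon})$, where $X$ denotes one of the three quantities above. Running it on $(T_1,T_2)$ takes time $\cO\big((k^2\log^2 k)^{1-\epsilon}\big)$. Since the reduction itself runs in linear (hence $\cO(k\log k)$) time and, by the theorem, a YES-answer for \EDSI on $(T_1,T_2)$ is equivalent to the existence of a pair of orthogonal vectors in $V$, we would obtain an algorithm for OV on $k$ vectors of length $d=\Theta(\log k)$ running in time $\cO\big((k^2\log^2 k)^{1-\epsilon}\big)$.

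It then remains to absorb the polylogarithmic factor into a genuine polynomial improvement. For any $\delta>0$ we have $\log^2 k=\cO(k^{\delta})$, so $(k^2\log^2 k)^{1-\epsilon}=\cO\big(k^{(2+\delta)(1-\epsilon)}\big)$; choosing $\delta=\epsilon$ gives exponent $(2+\epsilon)(1-\epsilon)=2-\epsilon-\epsilon^2<2$. Hence OV would be solvable in $\cO(k^{2-\epsilon'})$ time with $\epsilon'=\epsilon+\epsilon^2>0$, contradicting the OV conjecture. Finally, since $N_1m_2+N_2m_1=\cO(N_1N_2)$ and $N_1n_2+N_2n_1=\cO(N_1N_2)$, any algorithm attaining either of the latter two time bounds also attains the first one, so the single argument above simultaneously rules out all three items.

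The main (minor) obstacle is the polylog-to-polynomial step, i.e.\ making sure the $\log^2 k$ overhead incurred by taking $d=\Theta(\log k)$ does not eat up the constant $\epsilon$; this is standard and handled by the $\delta$-trick above. Everything else is a routine repackaging of the theorem, so I do not expect any genuine difficulty.
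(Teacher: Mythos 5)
Your proposal is correct and follows essentially the same route as the paper, which establishes this corollary in the remark preceding it: instantiate the reduction at $d=\Theta(\log k)$, observe that all three parameter combinations are $\cO(k^2\log^2 k)$ using $n_i\le m_i\le N_i$, and conclude a contradiction with the OV conjecture. Your explicit $\delta$-trick for absorbing the $\log^2 k$ factor into the polynomial savings is a detail the paper leaves implicit, but it is the standard and correct way to finish the argument.
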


\subsection{Combinatorial Lower Bounds Based on BMM Conjecture}\label{sec:BMM}

In the \textsc{Triangle Detection} (TD, in short) problem we are given three $D \times D$ Boolean matrices $A$, $B$, $C$ and one has to check whether there are three indices $i,j,k \in [0,D)$ such that $A[i,j]=B[j,k]=C[k,i]=1$.
It is known that Boolean Matrix Multiplication (BMM) and TD either both have truly subcubic combinatorial algorithms or none of them do~\cite{DBLP:conf/focs/WilliamsW10}. The BMM conjecture is stated as follows.

\begin{conjecture}[BMM conjecture~\cite{DBLP:conf/focs/AbboudW14}]
Given two $D \times D$ Boolean matrices, there is no combinatorial algorithm for BMM working in $\cO(D^{3-\epsilon})$ time, for any constant $\epsilon>0$.
\end{conjecture}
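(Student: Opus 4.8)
The plan is to be upfront that the statement in question is a \emph{conjecture}, i.e., a hypothesized lower bound rather than a theorem, and that establishing it would amount to an unconditional superlinear lower bound ruling out an entire (and only informally specified) class of algorithms. No such bound is known, and proving one lies well beyond the reach of current complexity theory: we cannot presently prove even superlinear circuit lower bounds for explicit functions, let alone an $\Omega(D^{3-\epsilon})$ bound for a concrete problem such as BMM. So rather than a derivation, the most one can offer is a description of what a hypothetical proof would have to accomplish and where it would inevitably stall.

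First one would have to fix a precise computational model, since the phrase ``combinatorial algorithm'' is deliberately informal (as the paper's own footnote concedes: an algorithm that does not invoke an oracle for ring matrix multiplication). Any honest proof must replace this with a formal restriction---for instance, a model that forbids algebraic cancellation, or a structured circuit or branching-program model over the Boolean semiring---and then establish the lower bound within that model. This step is already the crux: the informality is not cosmetic but reflects that we lack a robust definition capturing ``avoids fast matrix multiplication'' while remaining rich enough to rule out genuine cleverness.

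Second, within whatever model is fixed, one would aim to prove that computing the product of two $D \times D$ Boolean matrices requires $\Omega(D^{3-\epsilon})$ operations. A natural route is via the equivalence cited above from~\cite{DBLP:conf/focs/WilliamsW10}, namely that subcubic combinatorial BMM and subcubic combinatorial Triangle Detection stand or fall together: it would suffice to lower-bound either problem. One would then try to exhibit a hard distribution over inputs and argue, via an information-theoretic or adversary argument, that no combinatorial procedure can certify the presence or absence of a triangle without inspecting $\Omega(D^{3-\epsilon})$ entry-triples.

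The hard part---indeed the very reason this remains a conjecture rather than a theorem---is exactly this last step. Unconditional lower bounds of the required strength are not attainable with known techniques; the absence of any superlinear bound for an explicit problem, together with the standard barriers to proving such bounds, means that any such argument would constitute a major breakthrough. The honest conclusion is therefore that the statement cannot be proved here and is instead \emph{assumed} as a hypothesis, under which the conditional combinatorial lower bounds for \EDSI in this section are derived.
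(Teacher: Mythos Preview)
Your assessment is correct: the statement is a conjecture, not a theorem, and the paper does not attempt to prove it either---it simply states the BMM conjecture as a standing hypothesis and then derives conditional lower bounds for \EDSI from it. There is nothing further to compare.
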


Our reduction from TD to EDSI is based on the construction of Bernardini et al.~from~\cite{elasticSICOMP} for ED string matching. 

\begin{theorem}
If \EDSI over a binary alphabet can be solved in $\cO((N_1+N_2)^{1.2-\epsilon}f(n_1,n_2))$ time, for any constant $\epsilon>0$ and any function $f$, then there exists a truly subcubic combinatorial algorithm for TD.
\end{theorem}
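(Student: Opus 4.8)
The plan is to reduce \textsc{Triangle Detection} (TD) to \EDSI. Recall that TD asks, given $D\times D$ Boolean matrices $A,B,C$, whether there exist $i,j,k\in[0,D)$ with $A[i,j]=B[j,k]=C[k,i]=1$. I would show how to build, in $\ctO(D^{2.5})$ time, two ED strings $T_1,T_2$ over the binary alphabet with $n_1,n_2=\cO(1)$ and $N_1+N_2=\ctO(D^{2.5})$ such that $\L(T_1)\cap\L(T_2)\neq\varnothing$ if and only if $(A,B,C)$ contains a triangle. Since $n_1,n_2$ are absolute constants, $f(n_1,n_2)$ is a constant, so running an assumed combinatorial $\cO((N_1+N_2)^{1.2-\epsilon}f(n_1,n_2))$-time \EDSI algorithm on this instance decides TD in time $\ctO(D^{2.5})+\ctO(D^{3-2.5\epsilon})=\cO(D^{3-\delta})$ for some $\delta>0$; as both the reduction and the assumed algorithm are combinatorial, this is a truly subcubic combinatorial algorithm for TD, which is the claim (and, by~\cite{DBLP:conf/focs/WilliamsW10}, refutes the BMM conjecture).

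For the construction I would encode a candidate triple $(i,j,k)$ as one binary word in which $i$, $j$, $k$ each occupy a fixed-length $\Theta(\log D)$-bit field, so that concatenations parse unambiguously, and -- crucially -- I would first partition $[0,D)$, say the range of $j$, into $\Theta(\sqrt D)$ consecutive blocks of size $\Theta(\sqrt D)$ and write $j$ as a (block, offset) pair. Each of the three matrix constraints is then enforced by one dedicated segment, of $T_1$ or of $T_2$, that lists all index tuples realising a $1$-entry of that matrix; such a segment has $\cO(D^2)$ words of length $\cO(\log D)$, hence size $\ctO(D^2)$. In addition I would introduce a constant number of duplicated copies of each of $i,j,k$, lay the resulting index-slots out linearly, and add a constant number of ``equality'' segments wiring together the copies of each logical variable, distributed between $T_1$ and $T_2$ so that the cyclic dependency $i\!-\!j\!-\!k\!-\!i$ is split across the two strings. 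This is essentially the ED-string-matching gadget of Bernardini et al.~\cite{elasticSICOMP}, adapted so that \emph{both} ED strings have $\cO(1)$ segments -- the original construction can afford $\Theta(D)$ segments on one side, which we cannot.

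I expect the layout to be the main obstacle. Since the triangle constraint is a $3$-cycle on $\{i,j,k\}$ and every pair is constrained, in any linear arrangement in which each of $A,B,C$ sits inside a single segment touching at most two logical values, some segment is forced to straddle all three values at once; enumerating its valid contents naively costs $\Theta(D^3)$, which is far too large. The $\sqrt D$-block split is precisely the device that tames this: the layout is chosen so that the unavoidable ``wide'' segments see the full range of two indices but only the $\Theta(\sqrt D)$-size block component (or offset component) of $j$, so each has size $\cO(D^2\cdot\sqrt D)=\cO(D^{2.5})$; balancing the two such segments -- one per constraint in which $j$ occurs -- pins the block count at $\Theta(\sqrt D)$ and the exponent at $1.2=6/5$. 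Arranging all of this at once -- constant length on both sides, every matrix constraint in a cheap segment, every copy-equality enforced by $T_1$ or $T_2$ through a short chain alternating between the two strings, and each wide segment exposed to only half of the $j$-encoding -- is the delicate point; everything else (binarisation with an $\cO(\log D)$ blow-up, padding fields to equal length, and counting segments) is routine.

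Correctness would then be checked in two directions. If $(i_0,j_0,k_0)$ is a triangle, the word encoding it with every duplicated copy set consistently is accepted by each segment of $T_1$ and of $T_2$, hence lies in $\L(T_1)\cap\L(T_2)$. Conversely, any $w\in\L(T_1)\cap\L(T_2)$ is parsed by both ED strings; the three matrix segments then certify $A[i,j]=B[j,k]=C[k,i]=1$ for the indices $i,j,k$ decoded from $w$, and the equality segments certify that all copies of each of $i,j,k$ inside $w$ agree, so $(i,j,k)$ is a genuine triangle. As every segment is produced by direct enumeration, the reduction runs in time linear in its output size $N_1+N_2=\ctO(D^{2.5})$, which completes the plan.
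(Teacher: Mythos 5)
Your high-level plan is the same as the paper's: reduce TD to \EDSI{} by producing two ED strings of constant length and size $\ctO(D^{2.5})$, using a $\Theta(\sqrt D)$ blocking of the intermediate index so that $(D^{2.5})^{1.2}=D^3$, then binarize with a $\Theta(\log D)$ blow-up. The complexity accounting at the end is correct. However, there is a genuine gap: the gadget itself --- which is the entire substance of the proof --- is never constructed. You explicitly defer it (``arranging all of this at once \dots is the delicate point''), and the sketch you do give does not obviously work. Concretely, you propose to encode $i,j,k$ as fixed-length binary fields, place each matrix constraint in a segment of size $\ctO(D^2)$ listing its $1$-entries, and enforce consistency of the duplicated index copies by separate ``equality segments'' chained between $T_1$ and $T_2$. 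In the ED-string formalism a segment's strings are concatenated freely with its neighbours' strings, so a gadget that ties together two copies of a $\Theta(\log D)$-bit index sitting at distant positions of the word must enumerate, for each of the $D$ index values, all admissible intervening content; you give no argument that such segments stay within $\ctO(D^{2.5})$, and for the index $i$ (which links the first and last constraint) the intervening content is essentially the whole word. Your size claims are also internally inconsistent: the matrix segments are first said to have size $\ctO(D^2)$ and later $\cO(D^{2.5})$.

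The paper resolves exactly this difficulty by a different mechanism, which your sketch does not reproduce. It writes $j=x\cdot\frac{D}{s}+j'$ and $k=y\cdot\frac{D}{s}+k'$, keeps the blocks $x,y$ and the outer index $i$ as single letters, and encodes the \emph{offsets} $j',k'$ in unary. Then $T_1=X_1X_2X_3$ with $X_1$ containing $v_ix\,a^{j'}$ for each $1$ of $A$, $X_2$ containing $a^{D/s-j'}x\$\$y\,a^{D/s-k'}$ for each $1$ of $B$, and $X_3$ containing $a^{k'}yv_i$ for each $1$ of $C$; the single segment of $T_2$ enumerates the $Ds^2$ template strings $P(i,x,y)=v_ix\,a^{D/s}x\$\$y\,a^{D/s}yv_i$. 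Equality of the offsets across adjacent segments is then forced by length arithmetic ($a^{j'}\cdot a^{D/s-j''}=a^{D/s}$ iff $j'=j''$), equality of $i,x,y$ by letter matching against the template, and no separate equality gadgets are needed; the balance $N_1=\cO(D^3/s)$ versus $N_2=\cO(D^2s)$ gives $s=\lfloor\sqrt D\rfloor$. Without this (or an equivalent) concrete construction and the accompanying two-directional correctness argument, your proposal establishes the right target parameters but not the theorem.
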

\begin{proof}

Let $D$ be a positive integer and let $A$, $B$, and $C$ be three $D\times D$ Boolean matrices. Further let $s\in [1,D]$ be an integer to be set later. In the rest of the proof, we can assume that $s$ divides $D$, up to adding $\alpha$ rows and columns containing only $0$'s to all three matrices, where $\alpha$ is the smallest non-negative representative of the equivalence class $-D\bmod{s}$.

Let us first construct an ED string $T_1=X_1X_2X_3$ over a large alphabet with $n_1=3$, where each $X_p$, $p\in[1,n_1]$, contains a string for each occurrence of value $1$ in $A$, $B$ and $C$, respectively. Below $i$ iterates over $[0,D)$, $j$ and $k$ iterate over $[0,\frac{D}{s})$, and $x$ and $y$ iterate over $[0,s)$. Moreover, $x,y\in[0,s)$, $v_i$ for $i \in [0,D)$, and $a$, $\$$ are all letters.
\begin{itemize}
    \item If $A[i,x\cdot \frac{D}{s}+j]=1$, then $X_1$ contains the string $v_ixa^j$.
    \item If $B[x\cdot \frac{D}{s}+j,y\cdot\frac{D}{s}+k]=1$, then $X_2$ contains the string $a^{\frac{D}{s}-j}x\$\$ya^{\frac{D}{s}-k}$.
    \item If $C[y\cdot\frac{D}{s}+k,i]=1$, then $X_3$ contains the string $a^kyv_i$.
\end{itemize}
The length of each string in each $X_p$ is $\cO(D/s)$ and the total number $m_1$ of strings is up to $3D^2$. Overall, $N_1=\cO(D^3/s)$.

We construct an ED string $T_2$ with $n_2=1$ containing the following strings:
\[P(i,x,y)=v_ixa^{\frac{D}{s}}x\$\$ya^{\frac{D}{s}}yv_i\quad\text{for every }x,y \in [0,s)\text{ and }i\in[0,D).\]
Each string has length $\cO(D/s)$ and there are $m_2=Ds^2$ strings, so $N_2=\cO(D^2s)$.

We use the following fact.
\begin{fact}[\cite{elasticSICOMP}]
$P(i,x,y)\in \mathcal{L}(T_1)$ if and only if the following holds for some $j, k \in [0,D /s)$:
\[A[i, x \cdot \tfrac{D}{s} + j] = B[x \cdot \tfrac{D}{s} + j, y \cdot \tfrac{D}{s} + k] = C[y \cdot \tfrac{D}{s} + k, i] = 1.\]
\end{fact}

We choose $s=\lfloor \sqrt{D} \rfloor$; then $N_1,N_2=\cO(D^{2.5})$ and $n_1,n_2=\cO(1)$. Then indeed an $\cO((N_1+N_2)^{1.2-\epsilon}f(n_1,n_2))$-time algorithm for \EDSI would yield an $\cO(D^{3-2.5\epsilon})$-time algorithm for the TD problem.

Note also that even though the size of the alphabet used above is $\Theta(s+D)=\Theta(D)$, we can encode all letters by equal-length binary strings blowing $N_1$ and $N_2$ up only by a factor of $\Theta(\log D)$ and, hence, obtain the same lower bound for a binary alphabet.
\end{proof}

Both $m_1$ and $m_2$ in the reduction are $\cO(D^2)$, thus an $\cO((m_1+m_2)^{1.5-\epsilon}f(n_1,n_2))$-time algorithm would yield an $\cO(D^{3-2\epsilon})$-time algorithm for TD. We obtain the following.

\begin{corollary}
If \EDSI over a binary alphabet can be solved in $\cO((N_1^{1.2}+N_2^{1.2}+m_1^{1.5}+m_2^{1.5})^{1-\epsilon}f(n_1,n_2))$ time, for any constant $\epsilon>0$ and any function $f$, then there exists a truly subcubic combinatorial algorithm for TD. 
\end{corollary}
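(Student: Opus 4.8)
The plan is to reuse verbatim the construction of $T_1$ and $T_2$ from the proof of the preceding theorem, and to re-read off its parameters under the same fixed choice $s=\lfloor\sqrt{D}\rfloor$. Recall that with this choice we already obtained $N_1,N_2=\cO(D^{2.5})$, $n_1=3$, and $n_2=1$; in addition the proof exhibits $m_1\le 3D^2$ and $m_2=Ds^2=\cO(D^2)$. Hence, starting from a TD instance on $D\times D$ matrices, the resulting \EDSI instance simultaneously satisfies $N_1^{1.2},N_2^{1.2}=\cO(D^3)$ and $m_1^{1.5},m_2^{1.5}=\cO(D^3)$, while $n_1$ and $n_2$ are absolute constants. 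The only quantitative point worth a line of checking is that a single value of $s$ balances all four terms at $D^3$ at once: $N_1=\cO(D^3/s)$ and $N_2=\cO(D^2s)$ force $s=\Theta(\sqrt{D})$ through the exponent $1.2$, and this same $s$ makes $m_2=Ds^2=\cO(D^2)$, matching $m_1$.

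Next I would substitute these bounds into the hypothesized running time. Since $N_1^{1.2}+N_2^{1.2}+m_1^{1.5}+m_2^{1.5}=\cO(D^3)$, raising this sum to the power $1-\epsilon$ gives $\cO(D^{3-3\epsilon})$, and the factor $f(n_1,n_2)=f(3,1)$ is a constant that does not affect the exponent. Thus an \EDSI algorithm meeting the assumed bound would decide the TD instance combinatorially in $\cO(D^{3-3\epsilon})$ time, i.e., truly subcubic time; this is exactly the conclusion of the corollary, and it also contradicts the BMM conjecture via the combinatorial subcubic equivalence of TD and BMM~\cite{DBLP:conf/focs/WilliamsW10}.

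Finally I would handle the alphabet exactly as in the theorem: the construction uses $\Theta(D)$ letters, but encoding each letter by an equal-length binary string of length $\Theta(\log D)$ multiplies $N_1$ and $N_2$ by $\Theta(\log D)$ and leaves $m_1,m_2,n_1,n_2$ unchanged. Then $N_1^{1.2},N_2^{1.2}$ become $\cO(D^{3}\log^{1.2}D)=\cO(D^{3+o(1)})$, the sum of the four terms is still $\cO(D^{3+o(1)})$, and after the $(1-\epsilon)$-th power it is $\cO(D^{3-\epsilon})$ for all sufficiently large $D$ (absorbing the polylogarithmic slack, e.g.\ by replacing $\epsilon$ with $2\epsilon$). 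So the reduction already works over a binary alphabet.

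I do not expect a real obstacle: the corollary is a recombination of the quantitative facts established in the theorem, so the ``hard part'' is merely the bookkeeping — verifying that $s=\lfloor\sqrt{D}\rfloor$ simultaneously drives all four of $N_1^{1.2},N_2^{1.2},m_1^{1.5},m_2^{1.5}$ down to $D^3$, and that the polylogarithmic blow-up from the binary encoding is harmless once the $(1-\epsilon)$-th power is taken.
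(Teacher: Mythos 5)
Your proposal is correct and follows essentially the same route as the paper: the paper simply observes that $m_1,m_2=\cO(D^2)$ in the reduction of the preceding theorem (with $s=\lfloor\sqrt{D}\rfloor$ already giving $N_1,N_2=\cO(D^{2.5})$ and $n_1,n_2=\cO(1)$), so all four terms $N_1^{1.2},N_2^{1.2},m_1^{1.5},m_2^{1.5}$ are $\cO(D^{3+o(1)})$ and the hypothesized running time becomes truly subcubic in $D$. Your additional bookkeeping (the balancing role of $s$ and the absorption of the $\log^{1.2}D$ factor from the binary encoding) is accurate but not a departure from the paper's argument.
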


\section{\EDSI: The Unary Case}\label{sec:unary}

An ED string is called \emph{unary} if it is over an alphabet of size 1. In this special case, if both $T_1$ and $T_2$ are over the same alphabet $\Sigma=\{a\}$, \EDSI boils down to checking whether there exists any $b \ge 0$ such that $a^b$ belongs to both $\L(T_1)$ and $\L(T_2)$.

Let $T$ be a unary ED string of length $n$ over alphabet $\Sigma=\{a\}$.
We define the \emph{compact representation} $R(T)$ of $T$ as the following sequence of sets of integers: \[\forall i\in[1,n]\ R(T)[i]=\{b_{i,1},b_{i,2},\dots,b_{i,m_i}\} \iff T[i]=\{a^{b_{i,1}},a^{b_{i,2}},\dots,a^{b_{i,m_i}}\},\]

\noindent where $b_{i,j}\geq0$ for all $i\in[1,n]$ and $j\in[1,m_i]$, the cardinality of $T$ is $m=\sum_{i=1}^{n} m_i$, and its size is $N=N_{\varepsilon} +\sum_{i=1}^{n} \sum_{j=1}^{m_i} b_{i,j}$, where $N_{\varepsilon}$ is the total number of empty strings in $T$.

\begin{theorem}
If $T_1$ and $T_2$ are unary ED strings and each is given in a compact representation, the problem of deciding whether $\L(T_1) \cap \L(T_2)$ is nonempty is NP-complete.
\end{theorem}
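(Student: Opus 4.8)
The plan is to first translate the problem into pure arithmetic. For a unary ED string $T$ over $\{a\}$ with compact representation $R(T)$, a string $a^b$ lies in $\L(T)$ if and only if $b=b_1+\cdots+b_n$ for some choice $b_i\in R(T)[i]$; that is, the set of realizable lengths is the Minkowski sum (sumset) $B(T):=\{\,b_1+\cdots+b_n : b_i\in R(T)[i]\,\}$. Hence, for unary ED strings given in compact representation, \EDSI is exactly the question: is $B(T_1)\cap B(T_2)\neq\emptyset$? Because the integers $b_{i,j}$ are written in binary, both $B(T_1)$ and $B(T_2)$ may have size exponential in the input length, which is precisely what separates this from the non-compact case and opens the door to NP-hardness.

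For membership in NP, I would \emph{not} take the length $b$ of a common string $a^b$ as the witness: even though $b$ can be chosen with only polynomially many bits, verifying $a^b\in\L(T_1)$ is itself a subset-sum-type feasibility question and is not obviously in P. Instead the certificate should be the explicit segment choices, i.e.\ a tuple of indices $j_1,\ldots,j_{n_1}$ and $j'_1,\ldots,j'_{n_2}$ selecting one exponent from each $R(T_1)[i]$ and each $R(T_2)[i]$. This certificate has size $\cO\big((m_1+m_2)\log(m_1+m_2)\big)$, hence polynomial, and the verifier only has to check the single equality $\sum_{i=1}^{n_1} b_{i,j_i}=\sum_{i=1}^{n_2} b'_{i,j'_i}$, which is arithmetic on polynomially many integers of polynomial bit-length and thus runs in polynomial time. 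So the problem is in NP.

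For NP-hardness I would reduce from \textsc{Subset Sum}: given positive integers $w_1,\ldots,w_\ell$ and a target $t$ in binary, is there $S\subseteq[1,\ell]$ with $\sum_{i\in S}w_i=t$? Construct the unary ED string $T_1$ with compact representation $R(T_1)[i]=\{0,w_i\}$ for $i\in[1,\ell]$, and $T_2$ with the single segment $R(T_2)[1]=\{t\}$. Then $B(T_1)$ is exactly the set of subset sums of $\{w_1,\ldots,w_\ell\}$ and $B(T_2)=\{t\}$, so $\L(T_1)\cap\L(T_2)\neq\emptyset$ if and only if the \textsc{Subset Sum} instance is a YES-instance; the reduction merely copies the binary encodings and runs in polynomial time, and both ED strings are over the unary alphabet $\{a\}$. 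Together with the NP-membership argument, this establishes NP-completeness (and, as a byproduct, shows hardness already when one of the two ED strings consists of a single segment).

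The step I expect to be the real obstacle is the NP-membership argument: one must resist using $a^b$ itself as the witness and instead certify the decomposition, precisely because deciding membership of a target length in a single $B(T)$ is already NP-hard --- indeed the reduction above with $T_2$ trivial proves exactly that. The reformulation and the hardness reduction are then routine.
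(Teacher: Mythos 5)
Your proposal is correct and follows essentially the same route as the paper: NP-membership via guessing one exponent per segment and checking that the two sums agree, and NP-hardness via the reduction from \textsc{Subset Sum} with $R(T_1)[i]=\{0,w_i\}$ and $R(T_2)[1]=\{t\}$. Your additional remarks on why the certificate must be the decomposition rather than the common length $a^b$ are a sensible elaboration but do not change the argument.
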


\begin{proof}
The problem is clearly in NP, as it is enough to guess a single element for each set in both $T_1$ and $T_2$, and then simply check if the sums match in linear time.
We show the NP-hardness through a reduction from the \textsc{Subset Sum} problem, which takes $n$ integers $b_1,b_2,\dots,b_n$ and an integer $c$, and asks whether there exist $x_i\in\{0,1\}$, for all $i\in[1,n]$, such that $\sum_{i=1}^n x_ib_i = c$. \textsc{Subset Sum} is NP-complete~\cite{DBLP:books/daglib/0015106} also for non-negative integers. For any instance of \textsc{Subset Sum}, we set $R(T_1)[i]=\{b_i,0\}$ for all $i\in[1,n]$, $n_2=1$ and $R(T_2)[1]=\{c\}$. Then the answer to the \textsc{Subset Sum} instance is YES if and only if $\L(T_1)\cap\L(T_2)$ is nonempty.
\end{proof}

In what follows, we provide an algorithm which runs in polynomial time in the size of the two unary ED strings when they are given uncompacted.

The set $\L(T)$ can be represented as a set $L(T) \subset \NN$ such that $\L(T)=\{a^\ell\,:\,\ell\in L(T)\}$. The set $L(T)$ will be stored as a sorted list (without repetitions). We will show how to efficiently compute $L(T_1)$ and $L(T_2)$. Then one can compute $L(T_1) \cap L(T_2)$ in $\cO(N_1+N_2)$ time, which allows, in particular, to check if $L(T_1) \cap L(T_2) = \emptyset$ (which is equivalent to $\L(T_1) \cap \L(T_2) = \emptyset$).

We show the computation for $L(T_1)$. The workhorse is an algorithm from the following \cref{lem:concat} that allows to compute the set $L(X_1X_2)$ of concatenation of two ED strings based on their sets $L(X_1),L(X_2)$.

\begin{lemma}\label{lem:concat}
Let $X_1$ and $X_2$ be ED strings. 
Given $L(X_1)$ and $L(X_2)$ such that $t_1=\max L(X_1)$ and $t_2=\max L(X_2)$, we can compute $L(X_1X_2)$ in $\cO((t_1+t_2) \log (t_1+t_2))$ time.
\end{lemma}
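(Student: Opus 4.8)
The plan is to observe that $L(X_1X_2)$ is exactly the sumset $L(X_1) + L(X_2) = \{\ell_1 + \ell_2 : \ell_1 \in L(X_1),\ \ell_2 \in L(X_2)\}$, since a string in $\mathcal{L}(X_1X_2)$ is a concatenation of a string from $\mathcal{L}(X_1)$ and a string from $\mathcal{L}(X_2)$, and over a unary alphabet concatenation corresponds to addition of lengths. Hence computing $L(X_1X_2)$ reduces to computing a sumset of two sets of non-negative integers bounded by $t_1$ and $t_2$ respectively, and returning the result as a sorted list without repetitions.

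First I would encode each set $L(X_i)$ as the support of a $0/1$ polynomial $p_i(z) = \sum_{\ell \in L(X_i)} z^\ell$ of degree $t_i$. Then the sumset $L(X_1)+L(X_2)$ is precisely the set of exponents appearing with a nonzero coefficient in the product $p_1(z)\cdot p_2(z)$, a polynomial of degree $t_1 + t_2$. I would compute this product via FFT-based multiplication in $\cO((t_1+t_2)\log(t_1+t_2))$ time; the coefficients are non-negative integers bounded by $\min(t_1,t_2)+1$, which is polynomial in the input size, so they fit in $\cO(\log(t_1+t_2))$-bit words and the arithmetic cost is subsumed. Finally I would scan the coefficient array once in $\cO(t_1+t_2)$ time, collecting every index with a positive coefficient into a sorted list, which yields $L(X_1X_2)$ in the required format. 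Note $\max L(X_1 X_2) = t_1 + t_2$, so the output representation is consistent with the lemma's conventions for chaining further concatenations.

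One subtlety I would flag: the empty string must be handled. If $\varepsilon \in \mathcal{L}(X_i)$ then $0 \in L(X_i)$, which the polynomial encoding captures correctly (a constant term), so no special casing is needed; the sumset formula and the whole argument go through verbatim. Another minor point is exactness of the FFT: since we only need to detect whether each resulting coefficient is zero or positive (and the true coefficients are bounded integers), I would either use an integer/NTT-based convolution to avoid floating-point error entirely, or round the floating-point result and rely on the error bound being below $1/2$ given the polynomially bounded coefficients; either way the time bound is unaffected.

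The main obstacle, such as it is, is not conceptual but a matter of being careful about the model of computation — namely justifying that the $\cO((t_1+t_2)\log(t_1+t_2))$ bound genuinely holds once one accounts for the bit-length of the coefficients and the cost of reading $L(X_1)$ and $L(X_2)$ off the sorted-list representation and writing back the sorted output. All of these are $\cO(t_1+t_2)$ or better given that the integers involved are at most $t_1+t_2$ and hence fit in a constant number of machine words, so the dominant term remains the convolution. The sumset reinterpretation itself is immediate from the definition of $\mathcal{L}$.
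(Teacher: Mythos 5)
Your proposal is correct and follows essentially the same route as the paper: identify $L(X_1X_2)$ with the sumset $L(X_1)+L(X_2)$ and compute it by FFT-based convolution of the $0/1$ indicator polynomials in $\cO((t_1+t_2)\log(t_1+t_2))$ time. The additional remarks on empty strings and FFT exactness are sound but not needed beyond what the paper's one-line argument already asserts.
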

\begin{proof}
For two sets $A,B\subset \NN$, by $A+B$ we denote the set $\{a+b\,:\,a \in A,\,b \in B\}$. We then have $L(X_1X_2)=L(X_1)+L(X_2)$. 
Fast Fourier Transform (FFT)~\cite{DBLP:books/daglib/0023376} can be used directly to compute $L(X_1)+L(X_2)$ in $\cO((t_1+t_2) \log (t_1+t_2))$ time. 
\end{proof}

\begin{lemma}\label{lem:LT}
$L(T_1)$ can be computed in $\cO(N_1 \log N_1 \log n_1)$ time.
\end{lemma}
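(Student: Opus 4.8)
The plan is to compute $L(T_1)$ by a balanced binary divide-and-conquer over the $n_1$ segments of $T_1$, using \cref{lem:concat} as the merging primitive. Write $T_1 = T_1[1]\cdots T_1[n_1]$ and for a contiguous block of segments $[\ell,r]$ let $T_1^{[\ell,r]} = T_1[\ell]\cdots T_1[r]$, an ED string in its own right. I would recursively split $[\ell,r]$ at the midpoint $\mathrm{mid} = \lfloor(\ell+r)/2\rfloor$, recursively compute $L(T_1^{[\ell,\mathrm{mid}]})$ and $L(T_1^{[\mathrm{mid}+1,r]})$, and then combine them via \cref{lem:concat} to obtain $L(T_1^{[\ell,r]})$. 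At the leaves, for a single segment $T_1[i]=\{a^{b_{i,1}},\dots,a^{b_{i,m_i}}\}$, I simply set $L(T_1[i]) = \{b_{i,1},\dots,b_{i,m_i}\}$ as a sorted list (sorting costs $\cO(m_i\log m_i)$, which is dominated by the budget). The output of the root call is $L(T_1)=L(T_1^{[1,n_1]})$.

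The key quantitative observation is that the maximum element $\max L(T_1^{[\ell,r]})$ is at most the sum $\sum_{i=\ell}^{r}\big(\max_{S\in T_1[i]}|S|\big) \le N_1$, since every string in $\L(T_1^{[\ell,r]})$ is a concatenation of one string from each segment $T_1[i]$, $i\in[\ell,r]$. Hence every invocation of \cref{lem:concat} in the recursion runs on arguments whose maximum values sum to $\cO(N_1)$, so each call costs $\cO(N_1\log N_1)$ time. Because the recursion tree is balanced with $n_1$ leaves, it has depth $\cO(\log n_1)$, and the total work at each level is $\cO(N_1\log N_1)$ (the calls on a single level operate on disjoint blocks, and $\sum$ of their individual $\cO((t_\ell+t_r)\log(\cdot))$ bounds is $\cO(N_1\log N_1)$ by the additivity of the per-block maxima). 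Summing over the $\cO(\log n_1)$ levels gives the claimed $\cO(N_1\log N_1\log n_1)$ bound. I would also note that, the recursion tree being balanced, the subtle point is only to charge the FFT cost against the \emph{sum} of the two children's maxima rather than their product; since both are $\le N_1$ this is automatic.

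The main obstacle, such as it is, is making the level-by-level accounting airtight: one has to check that within a fixed recursion level the blocks partition $[1,n_1]$, so that $\sum (t_{\text{left child}} + t_{\text{right child}})$ telescopes to at most $2\max L(T_1^{[1,n_1]}) = \cO(N_1)$ rather than blowing up, and that the $\log(t_1+t_2)$ factor can be uniformly bounded by $\log N_1$. Equivalently one can avoid any level-based argument and just solve the recurrence $W(\text{block of total size } s) = W(s_1) + W(s_2) + \cO(s\log s)$ with $s_1+s_2 = s$ over a balanced split tree of depth $\cO(\log n_1)$, which again yields $\cO(N_1\log N_1\log n_1)$. A symmetric computation gives $L(T_2)$ in $\cO(N_2\log N_2\log n_2)$ time, and then intersecting the two sorted lists in $\cO(N_1+N_2)$ time settles \EDSI for unary ED strings, recovering the running time stated in \cref{sec:unary}.
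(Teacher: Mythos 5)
Your proposal is correct and follows essentially the same route as the paper: a balanced divide-and-conquer over the $n_1$ segments with \cref{lem:concat} (FFT sumset) as the merge step, leaves computed na\"{\i}vely, and a level-by-level accounting that uses the disjointness of blocks on each level and the bound $\max L(T_1^{[\ell,r]})\le N_1$ to get $\cO(N_1\log N_1)$ per level over $\cO(\log n_1)$ levels. No gaps.
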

\begin{proof}
We apply the recursive algorithm described in Algorithm~\ref{algo:unary} to $T_1$.

\begin{algorithm}[htpb]
\begin{algorithmic}
\If{$k=1$}
\State Compute $L(T[1])$ na\"{\i}vely
\EndIf
\State{$i\gets\lfloor k/2 \rfloor$}
\State{$L_1\gets $Compute-$L(T[1] \cdots T[i])$}
\State{$L_2\gets$Compute-$L(T[i+1] \cdots T[k])$}
\State{\textbf{return} $L_1 + L_2$}
\caption{Compute-$L(T[1] \cdots T[k])$}\label{algo:unary}
\end{algorithmic}
\end{algorithm}

Let $N_{1,i}=\sum_{x\in L(T_1)[i]}\max(1,x)$  and $t_{1,i}=\max L(T_1[i])$ for $i \in [1,n_1]$. Obviously, $t_{1,i} \le N_{1,i}$.

We analyze the complexity of the recursion by levels. For the bottom level, $L(T_1[i])$ can be computed in $\cO(N_{1,i})$ time for each $i \in [1,n_1]$, which sums up to $\cO(N_1)$. For the remaining levels, we notice that $\max L(T_1[i] \cdots T_1[j]) = t_{1,i}+\cdots+t_{1,j}$. On each level, the fragments of $T_1$ that are considered are disjoint. Thus, the complexity on each level via \cref{lem:concat} is $\cO((\sum_{i=1}^{n_1} t_{1,i}) \log (\sum_{i=1}^{n_1} t_{1,i})) = \cO(N_1 \log N_1)$. The number of levels of recursion is $\cO(\log n_1)$; the complexity follows.
\end{proof}

\begin{theorem}
If $T_1$ and $T_2$ are unary ED strings, then $L(T_1) \cap L(T_2)$ can be computed in $\cO(N_1\log N_1\log n_1+N_2\log N_2\log n_2)$ time.
\end{theorem}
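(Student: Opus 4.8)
The plan is to obtain this directly from \cref{lem:LT} together with a trivial merging step. First I would invoke \cref{lem:LT} to compute $L(T_1)$ in $\cO(N_1\log N_1\log n_1)$ time, and then apply the same lemma symmetrically to $T_2$ (swapping the roles of the two inputs) to compute $L(T_2)$ in $\cO(N_2\log N_2\log n_2)$ time. By construction each of these is stored as a sorted list of distinct non-negative integers, and since every element of $L(T_j)$ is the length of some string in $\L(T_j)$, it is bounded above by $N_j$; hence $L(T_1)$ has $\cO(N_1)$ entries and $L(T_2)$ has $\cO(N_2)$ entries.

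Next I would compute $L(T_1)\cap L(T_2)$ by a single linear-time merge of the two sorted lists: run two pointers through $L(T_1)$ and $L(T_2)$, always advancing the one with the smaller current value, and emit a value whenever the two agree. This takes $\cO(N_1+N_2)$ time, which is dominated by the cost of the two calls to \cref{lem:LT}. Summing the three contributions yields the claimed $\cO(N_1\log N_1\log n_1+N_2\log N_2\log n_2)$ bound. In particular, testing whether $L(T_1)\cap L(T_2)=\emptyset$ — equivalently, whether $\L(T_1)\cap\L(T_2)=\emptyset$ for unary ED strings — falls out within the same time bound.

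There is essentially no remaining obstacle at this level: all the real work has already been pushed into \cref{lem:concat} and \cref{lem:LT}. The only point worth keeping in mind is that the intermediate lists never grow beyond $\cO(N_j)$ entries, which is exactly what makes the FFT convolutions of \cref{lem:concat} affordable; this is guaranteed because $\max L(X_1X_2)=\max L(X_1)+\max L(X_2)$, so on every level of the balanced recursion of \cref{lem:LT} the maxima involved sum to at most $\sum_i t_{1,i}\le N_1$ (resp.\ $N_2$). Given that invariant, the statement reduces to bookkeeping over the $\cO(\log n_j)$ recursion levels plus the final linear merge.
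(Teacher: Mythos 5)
Your proposal is correct and matches the paper's proof: both simply apply \cref{lem:LT} to each of $T_1$ and $T_2$ and then intersect the two resulting integer sets in $\cO(N_1+N_2)$ time (the paper uses bucket sort where you use a two-pointer merge of the sorted lists, an immaterial difference). Your additional remarks on the size bound $\max L(T_j)\le N_j$ are accurate but already implicit in \cref{lem:LT}.
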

\begin{proof}
We use \cref{lem:LT} to compute $L(T_1)$ and $L(T_2)$ in the required complexity. Then $L(T_1) \cap L(T_2)$ can be computed via bucket sort.
\end{proof}

\section{\EDSI: General Case}\label{sec:EDSI}
Assuming that the two ED strings, $T_1$ and $T_2$, of total size $N_1+N_2$  are over an integer alphabet $[1,(N_1+N_2)^{\cO(1)}]$, we can sort the suffixes of all strings in $T_1[i]$, for all $i\in[1,n_1]$, and the suffixes of all strings in $T_2[j]$, for all $j\in[1,n_2]$, in $\cO(N_1+N_2)$ time~\cite{DBLP:conf/focs/Farach97}.

By $\LCP(X,Y)$ let us denote the length of the longest common prefix of two strings $X$ and $Y$. Given a string $S$ over an integer alphabet, we can construct a data structure over $S$ in $\cO(|S|)$ time, so that
when $i,j\in[1,|S|]$ are given to us on-line, we can determine $\LCP(S[i\dd |S|],S[j\dd |S|])$ in $\cO(1)$ time~\cite{DBLP:conf/latin/BenderF00}.

\subsection{Compacted NFA Intersection}\label{sec:graph}

In this section we show an algorithm for computing a representation of the intersection of the languages of two ED strings using techniques from formal languages and automata theory.

\begin{definition}[NFA]
\emph{A nondeterministic finite automaton} (NFA) is a 5-tuple $(Q,\Sigma,\delta,q_0,F)$, where $Q$ is a finite set of states; $\Sigma$ is an alphabet; $\delta:Q\times(\Sigma\cup\{\varepsilon\})\rightarrow \mathcal{P}(Q)$ is a transition function, where $\mathcal{P}(Q)$ is the power set of $Q$; $q_0\in Q$ is the starting state; and $F\subseteq Q$ is the set of accepting states.
\end{definition}

Using the folklore product automaton construction, one can check whether two NFA have a nonempty intersection in $\cO(N_1\cdot N_2)$ time, where $N_1$ and $N_2$ are the sizes of the two NFA~\cite{FiniteAutomata2004}.
We use a different, compacted representation of automata, which in some special cases allows a more efficient algorithm for computing and representing the intersection.

\begin{definition}[Compacted NFA]
An \emph{extended transition} is a transition function of the form $\delta^{\mathit{ext}}:Q\times\Sigma^*\rightarrow \mathcal{P}(Q)$, where $Q$ is a finite set of states, $\Sigma^*$ is the set of strings over alphabet $\Sigma$,
and $\mathcal{P}(Q)$ is the power set of $Q$.
A \emph{compacted NFA} is an NFA in which we allow extended transitions.
Such an NFA can also be represented by a standard (uncompacted) NFA, where each extended transition is subdivided into standard one-letter transitions (and $\varepsilon$-transitions), $\delta:Q\times(\Sigma\cup\{\varepsilon\})\rightarrow \mathcal{P}(Q)$. 
The states of the compacted NFA are called \emph{explicit}, while the states obtained due to these subdivisions are called \emph{implicit}.
\end{definition}

Given a compacted NFA $A$ with $V$ explicit states and $E$ extended transitions, we denote by $V^u$ and $E^u$ the number of states and transitions, respectively, of its uncompacted version $A^u$. Henceforth we assume that in the given NFA every state is reachable, and hence we have $V^u=\cO(E^u)$ and $V=\cO(E)$.

\begin{lemma}\label{lem:NFA-intersection}
Given two compacted NFA $A_1$ and $A_2$, with $V_1$ and $V_2$ explicit states and $E_1$ and $E_2$ extended transitions, respectively, a compacted NFA representing the intersection of $A_1$ and $A_2$ with $\cO(V_1^u V_2+V_1V_2^u)$ explicit states and $\cO(E_1^uE_2+E_1E_2^u)$ extended transitions can be computed in $\cO(E_1^uE_2+E_1E_2^u)$ time if $A_1$ and $A_2$ are over an integer alphabet $[1,(E_1^u+E_2^u)^{\cO(1)}]$. 
\end{lemma}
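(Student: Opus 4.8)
The plan is to generalize the classical product-automaton construction so that it operates directly on the compacted representation, subdividing extended transitions only as much as necessary. First I would take the two compacted NFA $A_1=(Q_1,\Sigma,\delta_1^{\mathit{ext}},q_{0,1},F_1)$ and $A_2=(Q_2,\Sigma,\delta_2^{\mathit{ext}},q_{0,2},F_2)$ and, rather than fully uncompacting both, build a product whose explicit states are pairs consisting of an explicit (or implicit) state of one automaton and an explicit state of the other. Concretely, the explicit states of the product will be $(V_1^u\times V_2)\cup(V_1\times V_2^u)$, i.e.\ we keep one side compacted while walking letter-by-letter along an extended transition of the other side. The starting state is $(q_{0,1},q_{0,2})$ and the accepting states are $F_1\times F_2$ (intersected with the explicit-state set). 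This immediately gives the claimed bound $\cO(V_1^uV_2+V_1V_2^u)$ on the number of explicit states.

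The core of the argument is the construction of the extended transitions of the product and the proof that it recognizes $\L(A_1)\cap\L(A_2)$. For an extended transition $p_1\xrightarrow{S_1}q_1$ in $A_1$ (with $S_1\in\Sigma^*$) and an extended transition $p_2\xrightarrow{S_2}q_2$ in $A_2$, I would pair them up by finding $\ell=\LCP(S_1,S_2)$ using the suffix-sorting and LCP data structures of \cref{sec:EDSI}; if $\ell=|S_1|=|S_2|$ we add a single extended transition $(p_1,p_2)\xrightarrow{S_1}(q_1,q_2)$, and if $\ell=|S_1|<|S_2|$ (the symmetric case being analogous) we add $(p_1,p_2)\xrightarrow{S_1}(q_1,p_2')$ where $p_2'$ is the implicit state of $A_2$ reached after reading the prefix $S_1$ of $S_2$ — this is exactly one of the states in $V_1\times V_2^u$. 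One also has to handle $\varepsilon$-transitions on either side in the obvious way (moving in one coordinate only). The number of such pairings is $\cO(E_1^uE_2+E_1E_2^u)$: each extended transition of $A_1$ gets paired against at most all explicit states and then walked along extended transitions of $A_2$, and vice versa; and each pairing is processed in $\cO(1)$ time after the linear-time preprocessing, giving the stated time bound. The alphabet assumption $[1,(E_1^u+E_2^u)^{\cO(1)}]$ is precisely what is needed to invoke the $\cO(N)$-time suffix sorting of~\cite{DBLP:conf/focs/Farach97} and the $\cO(1)$-time LCP queries of~\cite{DBLP:conf/latin/BenderF00}.

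The main obstacle I anticipate is bookkeeping the implicit states correctly so that the transition count does not blow up: when we walk along an extended transition $p_2\xrightarrow{S_2}q_2$ of $A_2$ from a product state $(p_1,p_2)$, we must make sure we do not generate a fresh chain of implicit product states for every pair $(p_1,p_2)$ but instead reuse the implicit states of the individual automata wherever possible — i.e.\ once we commit to "$A_1$ stays at explicit state $q_1$", the continuation inside $S_2$ is governed solely by $A_2$'s own uncompacted structure and contributes to the $E_1\cdot E_2^u$ term, not to a product of uncompacted sizes. Making this precise requires a careful case analysis of which extended transitions of the product end at a state in $V_1^u\times V_2$ versus $V_1\times V_2^u$ versus $V_1\times V_2$ (a ``genuinely explicit'' product state), and arguing that every accepting run of the product projects to a pair of accepting runs in $A_1$ and $A_2$ spelling the same string, and conversely.

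Finally I would verify correctness: a string $w$ is accepted by the product iff there is a path from $(q_{0,1},q_{0,2})$ to some state in $F_1\times F_2$ spelling $w$. By induction on the length of the path, such a path exists iff $w$ is spelled by a path from $q_{0,1}$ to $F_1$ in $A_1^u$ and simultaneously by a path from $q_{0,2}$ to $F_2$ in $A_2^u$ — the pairing construction above is exactly the synchronized product of the two uncompacted automata, merely with long synchronized stretches kept compacted. Hence $\L$ of the product equals $\L(A_1)\cap\L(A_2)=\L(A_1^u)\cap\L(A_2^u)$, which completes the proof.
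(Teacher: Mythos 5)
Your proposal matches the paper's proof essentially step for step: the same product whose explicit states pair an explicit state of one automaton with a (possibly implicit) state of the other, the same LCP-query test that keeps a pairing only when the LCP equals the length of one of the two labels (discarding genuine mismatches), and the same $\cO(E_1^uE_2+E_1E_2^u)$ counting of transition-pair checks at $\cO(1)$ each after linear-time LCP preprocessing. The correctness argument via the uncompacted synchronized product is also the one the paper uses, so this is the same approach, correctly executed.
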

\begin{proof}
We start by constructing an LCP data structure over the concatenation of all the labels of extended transitions of both NFA of total size $\cO(E_1^u+E_2^u)$.
It requires $\cO(E_1^u+E_2^u)$-time preprocessing and allows answering LCP queries on any two substrings of such labels in $\cO(1)$ time.

We construct $B$, a compacted NFA representing the intersection of $A_1$ and $A_2$.

Every state of $B$ is composed of a pair: an explicit state of one automaton and any explicit or implicit state of the other automaton (or equivalently a state of the uncompacted version of the automaton). Thus the total number of explicit states of $B$ is $\cO(V_1^uV_2+V_1V_2^u)$.

We need to compute the extended transitions of $B$.
For a state $(u,v)$ we check every string pair $(P,Q)$, where $P$ iterates over all extended transitions going out of $u$ and $Q$ iterates over all extended transitions going out of $v$ (a transition going out of an implicit state is represented by a suffix of the transition it belongs to). For every pair $(P,Q)$ we ask an $\LCP(P,Q)$ query.
If $\LCP(P,Q)$ is equal to one of $|P|$, $|Q|$ (possibly both), we create an extended transition between $(u,v)$ and the pair of states reachable through those transitions (if one of the transitions is strictly longer, we prune it to the right length, ending it at an implicit state of its input NFA).
Otherwise such a transition does not lead to any explicit state of $B$ and thus cannot be used to reach the accepting state; hence we ignore it.

Finally, the starting (resp.~accepting) state of $B$ corresponds to a pair of starting (resp.~accepting) states of $A_1$ and $A_2$.

Since any pair representing an explicit state of $B$ contains an explicit state of $A_1$ or $A_2$, the number of such transition pair checks (and hence also the number of the extended transitions of $B$) is $\cO(E_1^uE_2+E_1E_2^u)$.
Since each such check takes $\cO(1)$ time, the construction complexity follows.
Note that NFA $B$ may contain unreachable states; such states can be removed afterwards in linear time. The algorithms' correctness follows from the observation that $B^u$ is in fact the standard intersection automaton of $A_1^u$ and $A_2^u$ with some states, that do not belong to any path between the starting and the accepting states, removed.
\end{proof}

\begin{figure}
    \centering
    \includegraphics[width=0.8\textwidth]{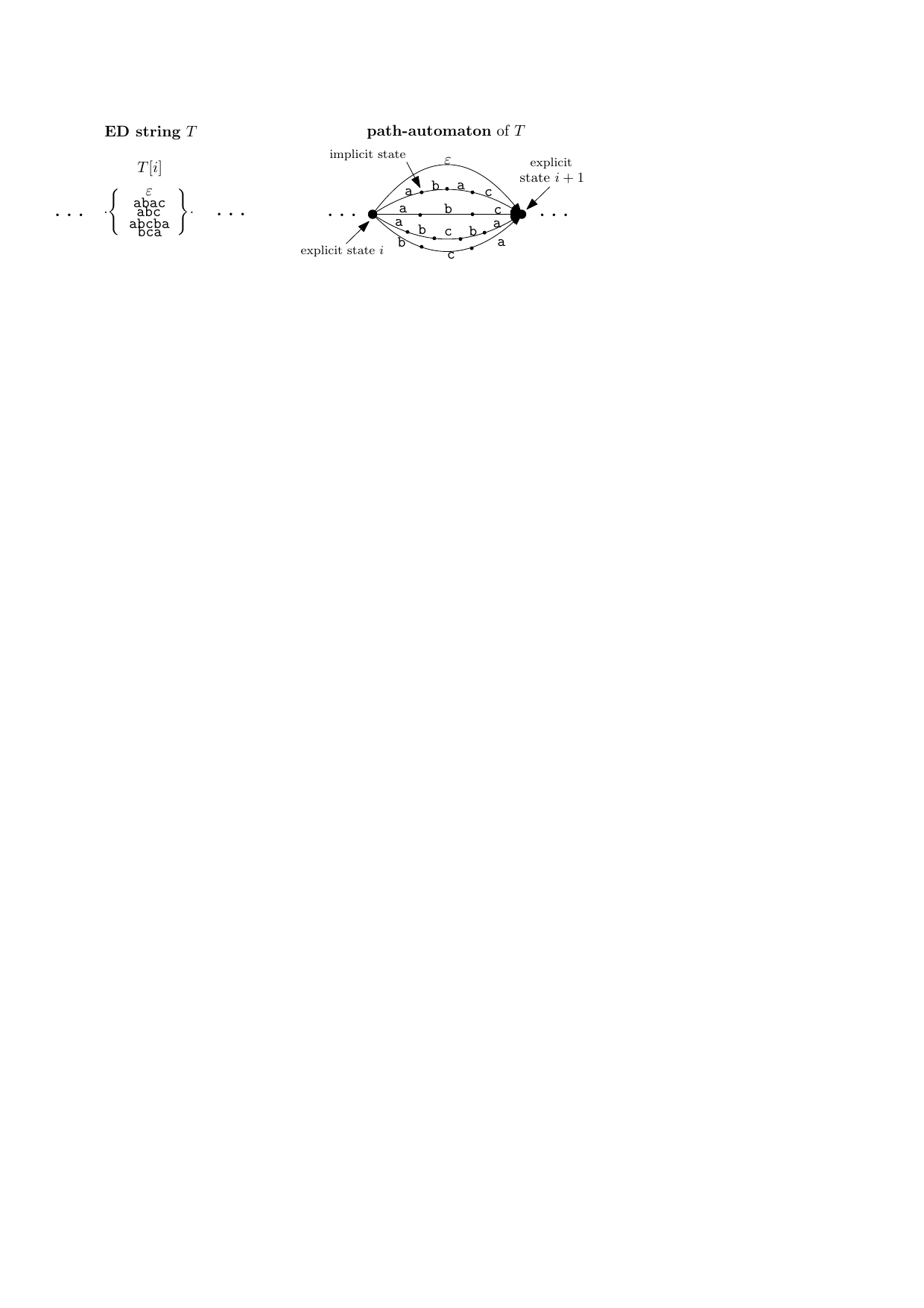}
    \caption{On the left: an ED string; on the right: the corresponding path-automaton.}
    \label{fig:automata}
\end{figure}

We next define the path-automaton of an ED string (inspect~\cref{fig:automata} for an example).

\begin{definition}[Path-automaton]
Let $T$ be an ED string of length $n$, cardinality $m$, and size $N$.
The \emph{path-automaton} of $T$ is the compacted NFA consisting of: 
\begin{itemize}
\item $V=n+1$ explicit states, numbered from $1$ through $n+1$.
State $1$ is the starting state and state $n+1$ is the accepting state. State $i\in[2,n]$ is the state \emph{in-between} $T[i-1]$ and $T[i]$.
\item  $m_i$ extended transitions from state $i$ to state $i+1$ labeled with the strings in $T[i]$, for all $i\in[1, n]$, where $E=m=\sum_i{m_i}$.
\end{itemize}
The \emph{path-automaton} of $T$ accepts exactly $\mathcal{L}(T)$.
The uncompacted version of this path-automaton has $V^u=\cO(N)$ states and $E^u=N$ transitions.

An example is constructed in Figure~\ref{fig:auto}.
\end{definition}

	\begin{figure}[ht]
		\centering
         \includegraphics[trim={0 13.5cm 0 0},clip,page=4,width=0.65\textwidth]{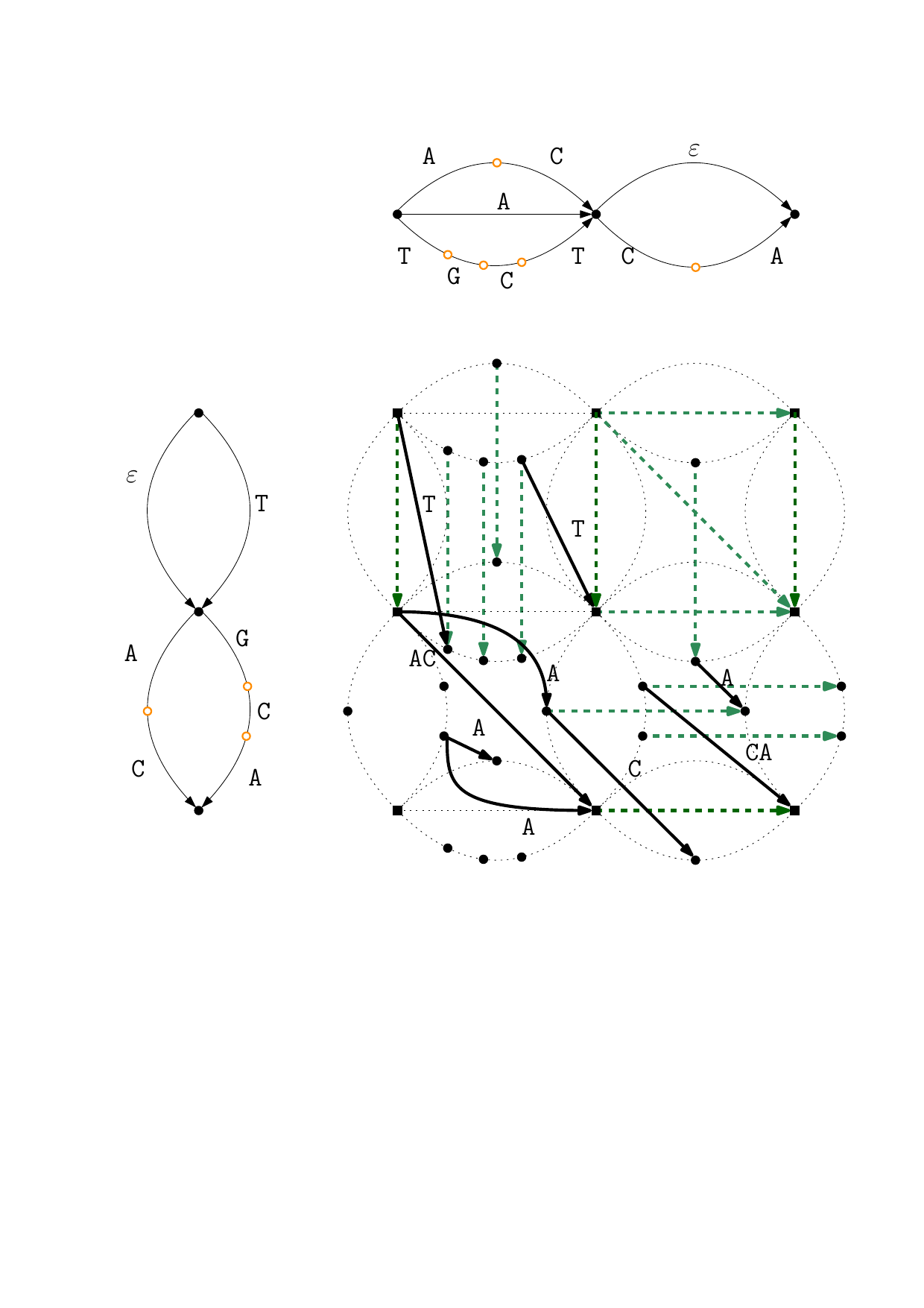}			\caption{The path-automata $A_1$ and $A_2$ for ED strings $T_1=     
\left\{\begin{array}{c}
          \texttt{AC} \\ \texttt{A}\\ \texttt{TGCT}
        \end{array}\right\}
        \cdot
         \left\{\begin{array}{c}
            \varepsilon\\ \texttt{CA}
        \end{array}\right\}$ and $T_2=
         \left\{\begin{array}{c}
             \texttt{T} \\ \varepsilon
        \end{array}\right\}
        \cdot
         \left\{\begin{array}{c}
            \texttt{GCA}\\ \texttt{AC}
        \end{array}\right\}$.
				The filled black nodes are explicit states, while the orange empty nodes are implicit states.
			}
			\label{fig:auto}
		\end{figure}

\cref{lem:NFA-intersection} thus implies the following result.

\begin{corollary}\label{cor:path-automata intersection}
The compacted NFA representing the intersection of two path-automata with $\cO(N_1n_2+N_2n_1)$ explicit states and $\cO(N_1m_2+N_2m_1)$ extended transitions can be constructed in $\cO(N_1m_2+N_2m_1)$ time if the path-automata are over an integer alphabet.
\end{corollary}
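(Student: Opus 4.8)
The plan is to apply \cref{lem:NFA-intersection} essentially verbatim to the path-automata $A_1$ and $A_2$ of $T_1$ and $T_2$, after recording their four size parameters. By the definition of the path-automaton, $A_i$ has $V_i=n_i+1$ explicit states and $E_i=m_i$ extended transitions, while its uncompacted version has $V_i^u=\cO(N_i)$ states and $E_i^u=N_i$ transitions. Before invoking the lemma I would first discard the degenerate inputs: if $n_1=0$ or $n_2=0$, or if some segment $T_1[i]$ or $T_2[j]$ is empty (equivalently $m_1=0$ or $m_2=0$), then one of the languages is either $\{\varepsilon\}$ or $\emptyset$, and the intersection is trivial and can be reported in $\cO(N_1+N_2)$ time; otherwise $n_1,n_2\ge 1$ and $m_1,m_2\ge 1$, every explicit and implicit state of both path-automata is reachable, so the standing hypothesis of \cref{lem:NFA-intersection} that every state is reachable is met and $V_i^u=\cO(E_i^u)$ holds.

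In the nondegenerate case, plugging the parameters above into \cref{lem:NFA-intersection} gives a compacted NFA with
\[\cO(V_1^u V_2 + V_1 V_2^u)=\cO\big(N_1(n_2+1)+(n_1+1)N_2\big)=\cO(N_1 n_2 + N_2 n_1)\]
explicit states --- here the terms $N_1$ and $N_2$ arising from the ``$+1$'' are absorbed because $n_1,n_2\ge 1$ --- and with
\[\cO(E_1^u E_2 + E_1 E_2^u)=\cO(N_1 m_2 + m_1 N_2)\]
extended transitions, built in $\cO(E_1^u E_2 + E_1 E_2^u)=\cO(N_1 m_2 + N_2 m_1)$ time. The integer-alphabet requirement of \cref{lem:NFA-intersection} is that the alphabet be $[1,(E_1^u+E_2^u)^{\cO(1)}]=[1,(N_1+N_2)^{\cO(1)}]$, which is exactly the standing assumption of \cref{sec:EDSI}, so nothing extra is needed there. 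The resulting compacted NFA accepts $\L(T_1)\cap\L(T_2)$ by the correctness part of \cref{lem:NFA-intersection} together with the fact that each path-automaton accepts exactly the language of its ED string.

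Since the corollary is a direct specialization of \cref{lem:NFA-intersection}, there is no genuinely hard step; the only points requiring a little care are (i) separating out the trivial inputs so that the lemma's reachability precondition holds and the lower-order $+1$ terms in the state bound get absorbed into $\cO(N_1 n_2 + N_2 n_1)$, and (ii) checking that the alphabet range demanded by \cref{lem:NFA-intersection}, namely $[1,(N_1+N_2)^{\cO(1)}]$, coincides with the one assumed for ED strings here. Both are routine bookkeeping rather than a real obstacle.
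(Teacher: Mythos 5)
Your proposal is correct and matches the paper exactly: the paper derives \cref{cor:path-automata intersection} by directly instantiating \cref{lem:NFA-intersection} with the path-automaton parameters $V_i=n_i+1$, $E_i=m_i$, $V_i^u=\cO(N_i)$, $E_i^u=N_i$, which is precisely what you do. Your extra bookkeeping about degenerate inputs and the reachability precondition is harmless and slightly more careful than the paper, which states the corollary without further comment.
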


 	\begin{figure}[ht]
		\centering
		\includegraphics[page=2,width=0.8\textwidth]{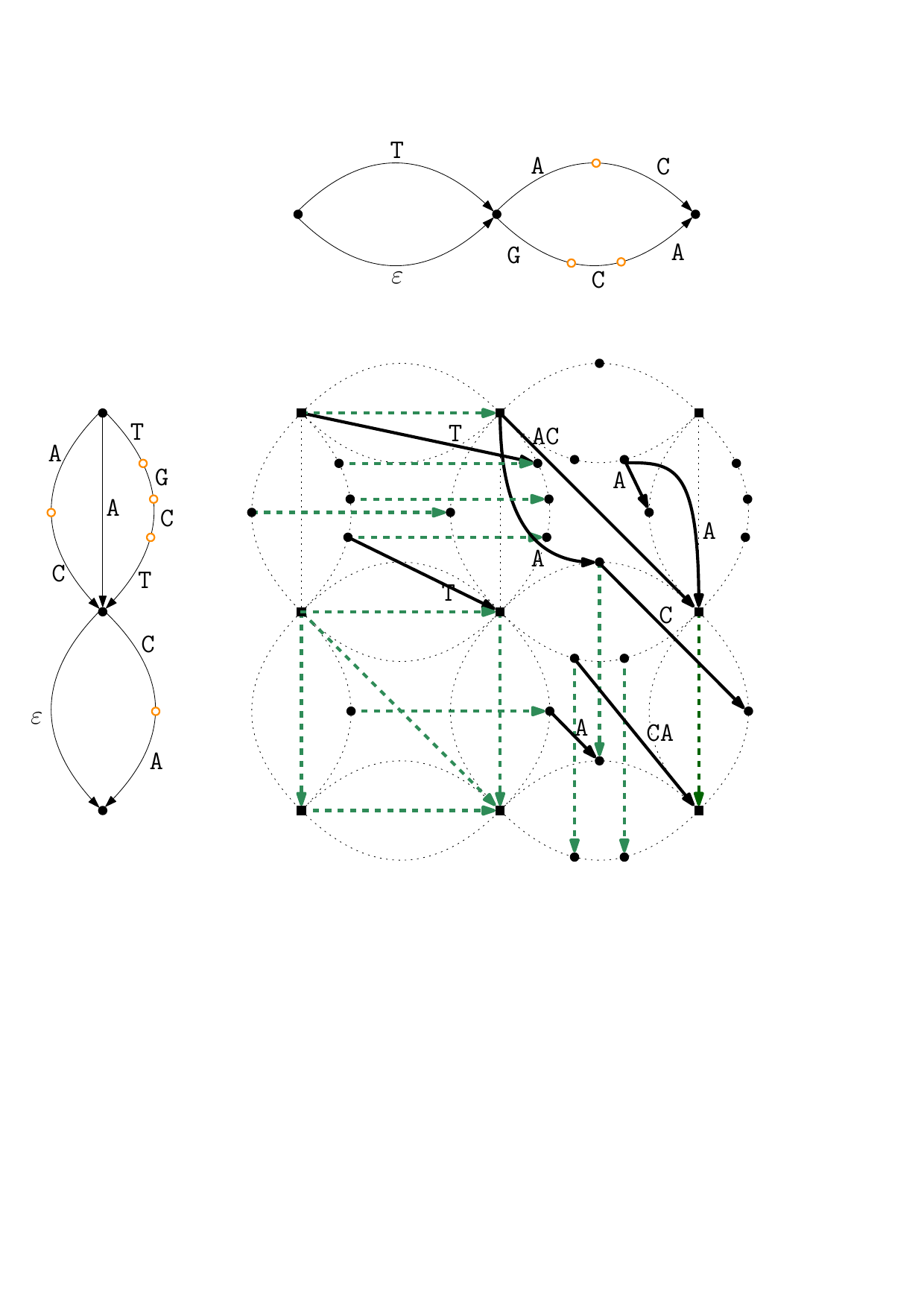}
		\caption{Intersection automaton for $T_1$ and $T_2$ as in Figure~\ref{fig:auto} where the string $\texttt{AC}$ in $\L(T_1) \cap \L(T_2)$ that determines a positive answer to the \EDSI can be spelled in the path from the starting state to the accepting state. The path-automata $A_1$ and $A_2$ are shown on the left and on the top, respectively, and nodes of the intersection automaton are arranged along dotted lines that correspond to copies of the layout of $A_1$ and $A_2$, to simplify the understanding of $G$. The dashed edges of the intersection automata correspond to $\varepsilon$-transitions (namely, transitions such that no letter is read when traversed), while the solid edges correspond to the other extended transitions.}
		\label{fig:edsi}
	\end{figure}

\begin{theorem}\label{thm:NFA-EDSI}
$\EDSI$ for ED strings over an integer alphabet can be solved in $\cO(N_1m_2+N_2m_1)$ time. If the answer is YES, a witness can be reported within the same time complexity.
\end{theorem}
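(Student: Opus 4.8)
The plan is to build on \cref{cor:path-automata intersection}, which already gives us the intersection automaton $B$ of the two path-automata $A_1,A_2$ in $\cO(N_1m_2+N_2m_1)$ time, with $\cO(N_1n_2+N_2n_1)$ explicit states and $\cO(N_1m_2+N_2m_1)$ extended transitions. The key observation is that $\L(T_1)\cap\L(T_2)=\L(B)$ is nonempty if and only if the accepting state $(n_1+1,n_2+1)$ of $B$ is reachable from the starting state $(1,1)$ in $B^u$ (equivalently, reachable in $B$ through extended transitions). So after constructing $B$, we simply run a graph search (BFS or DFS) from $(1,1)$ and check whether $(n_1+1,n_2+1)$ is visited.

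The main thing to verify is that this reachability check runs within the claimed time bound. The uncompacted automaton $B^u$ could have as many as $\cO(N_1N_2)$ states, which is too large to traverse explicitly. The fix is to traverse the \emph{compacted} automaton $B$ directly: its explicit states number $\cO(N_1n_2+N_2n_1)$ and its extended transitions number $\cO(N_1m_2+N_2m_1)$, so a search over $B$ treated as a directed graph on explicit states, with extended transitions as edges, takes $\cO(N_1m_2+N_2m_1)$ time. Reachability of the accepting state in $B$ (over explicit states) is equivalent to reachability in $B^u$ because every explicit state of $B$ lies on the subdivision path of some extended transition, and an implicit state is reachable in $B^u$ iff the explicit endpoint of its transition chain is; in particular the accepting state, being explicit, is reached in $B$ iff it is reached in $B^u$. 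I would spell this equivalence out carefully, since it is the one place the compaction could in principle hide a path — but it does not, precisely because \cref{lem:NFA-intersection} only keeps transitions whose LCP matches one of the two lengths, so every extended transition of $B$ does correspond to a genuine (possibly pruned) path in $B^u$, and conversely the construction keeps all such transitions.

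For the witness, when the answer is YES, the BFS/DFS naturally yields a path in $B$ from $(1,1)$ to the accepting state; concatenating the labels of the extended transitions along this path (reconstructing, for a transition out of an implicit state, the appropriate suffix of the original $T_1$- or $T_2$-string, and truncating the longer of the matched pair to the LCP length exactly as in the proof of \cref{lem:NFA-intersection}) gives an explicit string in $\L(T_1)\cap\L(T_2)$. Recording a parent pointer and the transition used during the search lets us output this string in time linear in its length, which is $\cO(N_1+N_2)$, hence within the overall budget.

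I do not expect a serious obstacle here; the work is essentially bookkeeping. The one subtle point, and the only place I would be careful, is the correctness argument that reachability in the compacted $B$ coincides with nonemptiness of $\L(B^u)$ — that is, confirming that the pruning of ``bad'' transitions in \cref{lem:NFA-intersection} (those whose LCP matches neither length) cannot destroy a path that would otherwise witness a common string. This follows because any common string $S$, read synchronously in $A_1^u$ and $A_2^u$, visits a sequence of pairs of states in which every ``breakpoint'' of either automaton is an explicit state there, so the induced path in $B^u$ decomposes into extended transitions of exactly the form the lemma retains; hence the same path survives in $B$. Once that is nailed down, the theorem is immediate.
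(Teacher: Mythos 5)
Your proposal is correct and follows essentially the same route as the paper: construct the two path-automata, apply \cref{cor:path-automata intersection} to obtain the compacted intersection automaton, and then find a path from the starting to the accepting state in time linear in the size of that automaton, reading off the labels for the witness. The extra care you take in arguing that reachability over the explicit states of the compacted automaton matches nonemptiness of the uncompacted intersection is already covered by the correctness remark at the end of the proof of \cref{lem:NFA-intersection}, so your write-up is just a more detailed version of the paper's argument.
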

\begin{proof}
The path-automaton of an ED string of size $N$ can be constructed in $\cO(N)$ time.
Given two ED strings, we can construct their path-automata in linear time and apply \cref{cor:path-automata intersection}.
By finding any path in the intersection automaton from the starting to the accepting state in linear time (if it exists), we obtain the result. The construction is detailed on an example in Figure~\ref{fig:edsi}
\end{proof}

Notice that the path-automata representing ED strings, as well as their intersection, are always acyclic, but may contain $\varepsilon$-transitions. In the following we are only interested in the graph underlying the path-automaton, that is the directed acyclic graph (DAG), where every \emph{node} represents an explicit state and every labeled directed \emph{edge} represents an extended transition of the path-automaton (inspect also~\cref{fig:NFA}). 

\subsection{An \texorpdfstring{$\ctO(N_1^{\omega-1}n_2+N_2^{\omega-1}n_1)$}{O(N1(omega-1) n2+ N2(omega-1) n1)}-time Algorithm for EDSI}\label{sec:MM}
In this section, we start by showing a construction of the \emph{intersection graph} computed by means of \cref{lem:NFA-intersection} in the case when the input is a pair of path-automata that allows an easier and more efficient implementation.
The construction is then adapted to obtain an $\ctO(N_1^{\omega-1}n_2+N_2^{\omega-1}n_1)$-time algorithm for solving the \EDSI problem.

For $x\in\{1,2\}$ by $A_x$ we denote the compacted NFA (henceforth, graph $A_x$) representing the ED string $T_x$. By $I_x[i]$ we denote the set of implicit states (henceforth, implicit nodes) appearing on the extended transitions (henceforth, edges) between explicit states (henceforth, explicit nodes) $i$ and $i+1$. For convenience, the implicit nodes in the sets $I_x[1],\ldots,I_x[n_x]$ can be numbered consecutively starting from $n_x+2$.

Let $U_{i,j}=\{(i,k): k \in \{j\}\cup I_2[j]\}$ and $U'_{i,j}=\{(k,j):k\in \{i\}\cup I_1[i]\}$, for all $i\in[1,n_1+1]$ and $j\in[1,n_2+1]$. As in the construction of~\cref{lem:NFA-intersection}, the union of all $U_{i,j}$ and $U'_{i,j}$ is the set of explicit nodes of the intersection graph that we construct; this can be represented graphically by a grid, where the horizontal and vertical lines correspond to $U_{i,j}$ and $U'_{i,j}$, respectively (inspect \cref{fig:grid}). In particular, we would like to compute the edges between these explicit nodes (inspect~\cref{fig:edges}) in $\cO(N_1m_2+N_2m_1)$ time.  

Consider an explicit node of the intersection graph; this node is represented by a pair of nodes: one from $A_1$ and one from $A_2$. We need to consider two cases: explicit \emph{vs} explicit node; or explicit \emph{vs} implicit node.
By symmetry, it suffices to consider an explicit  first node.
Let us denote this pair by $(i,k)\in U_{i,j}$, where $i$ is an explicit node of $A_1$ and $k$ is a node of $A_2$.
Let us further denote by $\ell_1$ the label of one of the edges going from node $i$ to node $i+1$.
For $k$, we have two cases. If $k$ is explicit (i.e., $k=j$) then we denote by $\ell_2$ the label of one of the edges going from $k$ to $k+1$. Otherwise ($k$ is implicit), we denote by $\ell_2$ the path label (concatenation of labels) from node $k$ to node $j+1$.

As noted in the proof of \cref{lem:NFA-intersection}, an edge is constructed only if $\textsf{LCP}(\ell_1,\ell_2)=\min(|\ell_1|,|\ell_2|)$. If $\textsf{LCP}(\ell_1,\ell_2)=|\ell_2|<|\ell_1|$ (a prefix of a string in $T_1[i]$ is equal to the suffix of a string in $T_2[j]$ starting at the position corresponding to node $k\in \{j\} \cup I_2[j]$), the edge ends in a node from $U'_{i,j+1}$
(\cref{fig:green}). If $\textsf{LCP}(\ell_1,\ell_2)=|\ell_1|<|\ell_2|$ (a whole string from $T_1[i]$ occurs in a string from $T_2[j]$ starting at the position corresponding to node $k\in \{j\} \cup I_2[j]$), the edge ends in a node from $U_{i+1,j}$
(\cref{fig:blue}).
Otherwise ($\textsf{LCP}(\ell_1,\ell_2)=|\ell_1|=|\ell_2|$; the two strings are equal) the edge ends in $(i+1,j+1)$.
Symmetrically (i.e., the second node is explicit), the edge going out of a node from $U'_{i,j}$ ends at a node from the set $U'_{i,j+1}\cup U_{i+1,j}\cup \{(i+1,j+1)\}$ (inspect~\cref{fig:edges}).

\begin{figure}[t]
     \centering
     \begin{subfigure}[b]{0.47\textwidth}
         \centering
         \includegraphics[width=0.75\textwidth]{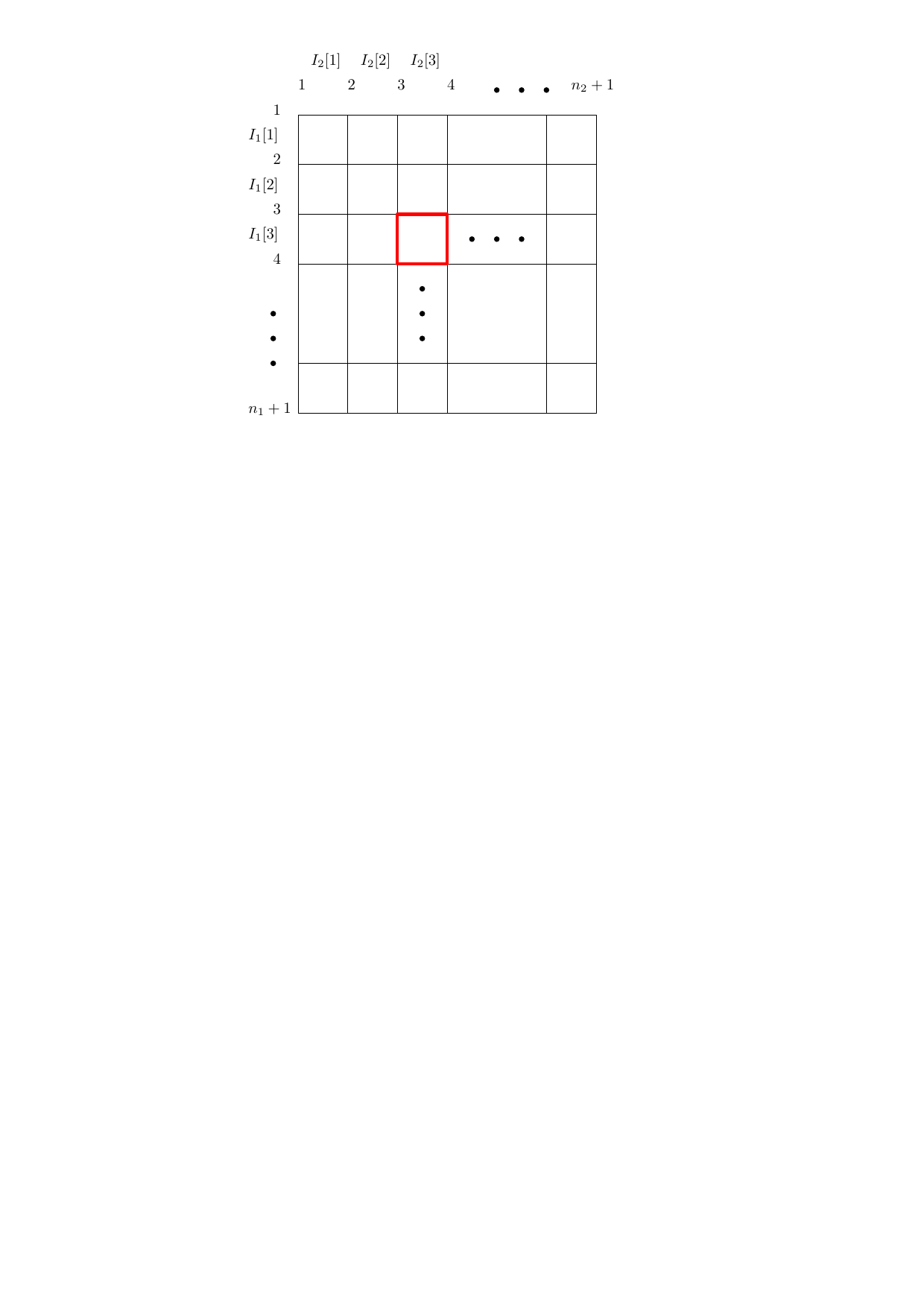}
         \caption{The grid of $U_{i,j}$ and $U'_{i,j}$.}
         \label{fig:grid}
     \end{subfigure}
     \hfill
     \begin{subfigure}[b]{0.47\textwidth}
         \centering
         \includegraphics[width=0.75\textwidth]{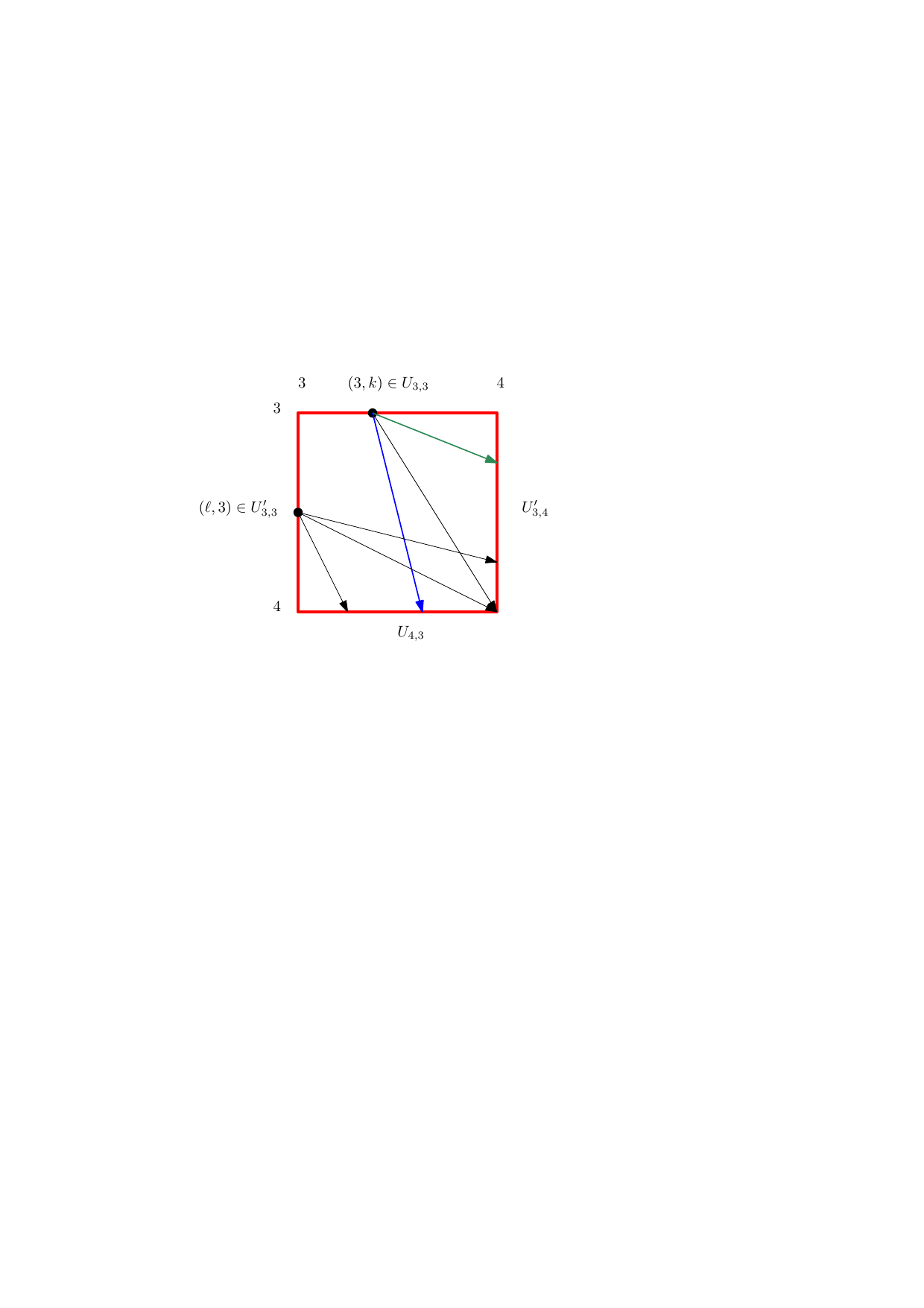}
         \caption{The edges.}
         \label{fig:edges}
     \end{subfigure}
     \hfill 
     \begin{subfigure}[b]{0.47\textwidth}
         \centering
         \includegraphics[width=0.75\textwidth]{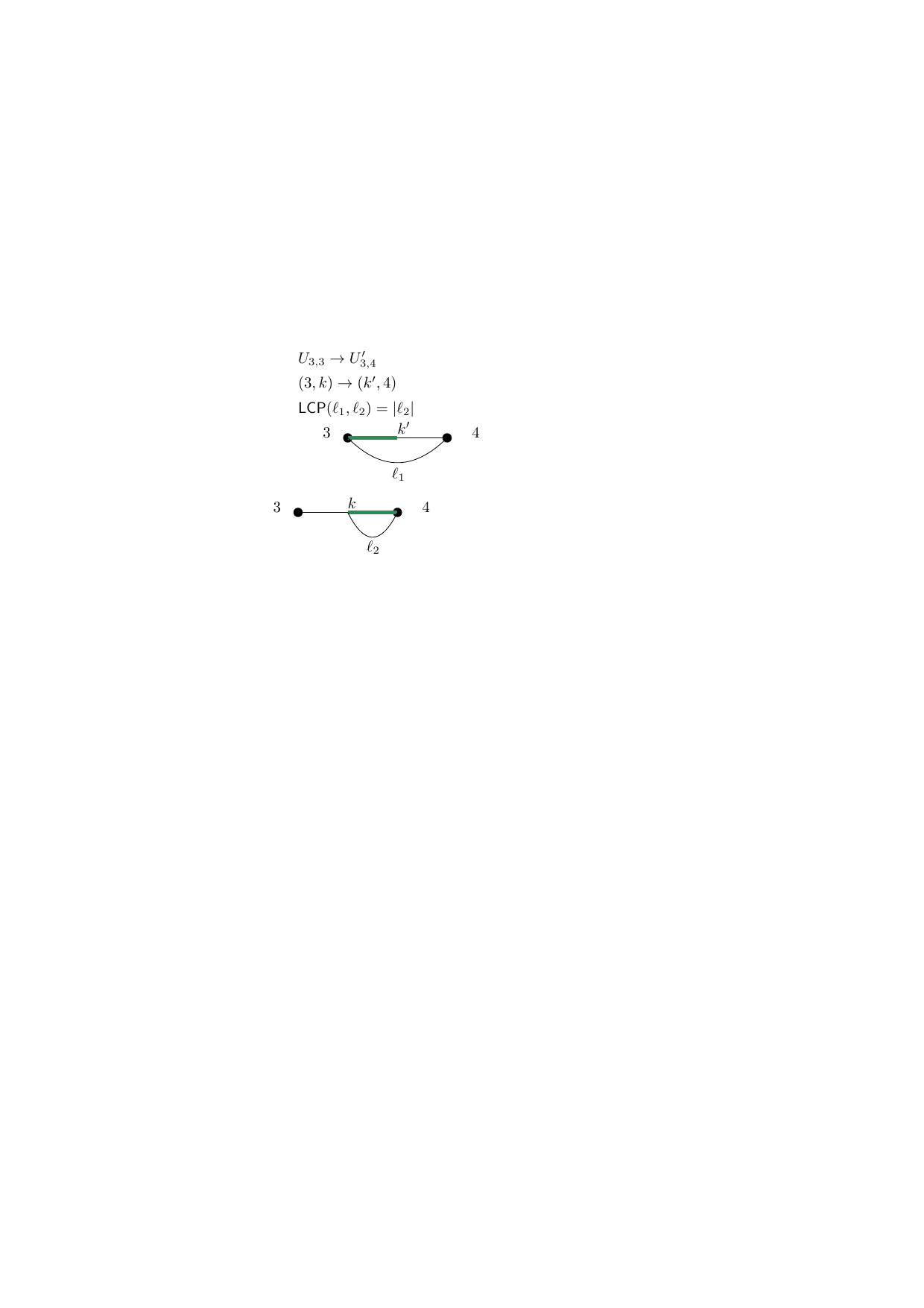}
         \caption{The green edge denotes a prefix-suffix match.}
         \label{fig:green}
     \end{subfigure}
     \hfill
     \begin{subfigure}[b]{0.47\textwidth}
         \centering
         \includegraphics[width=0.75\textwidth]{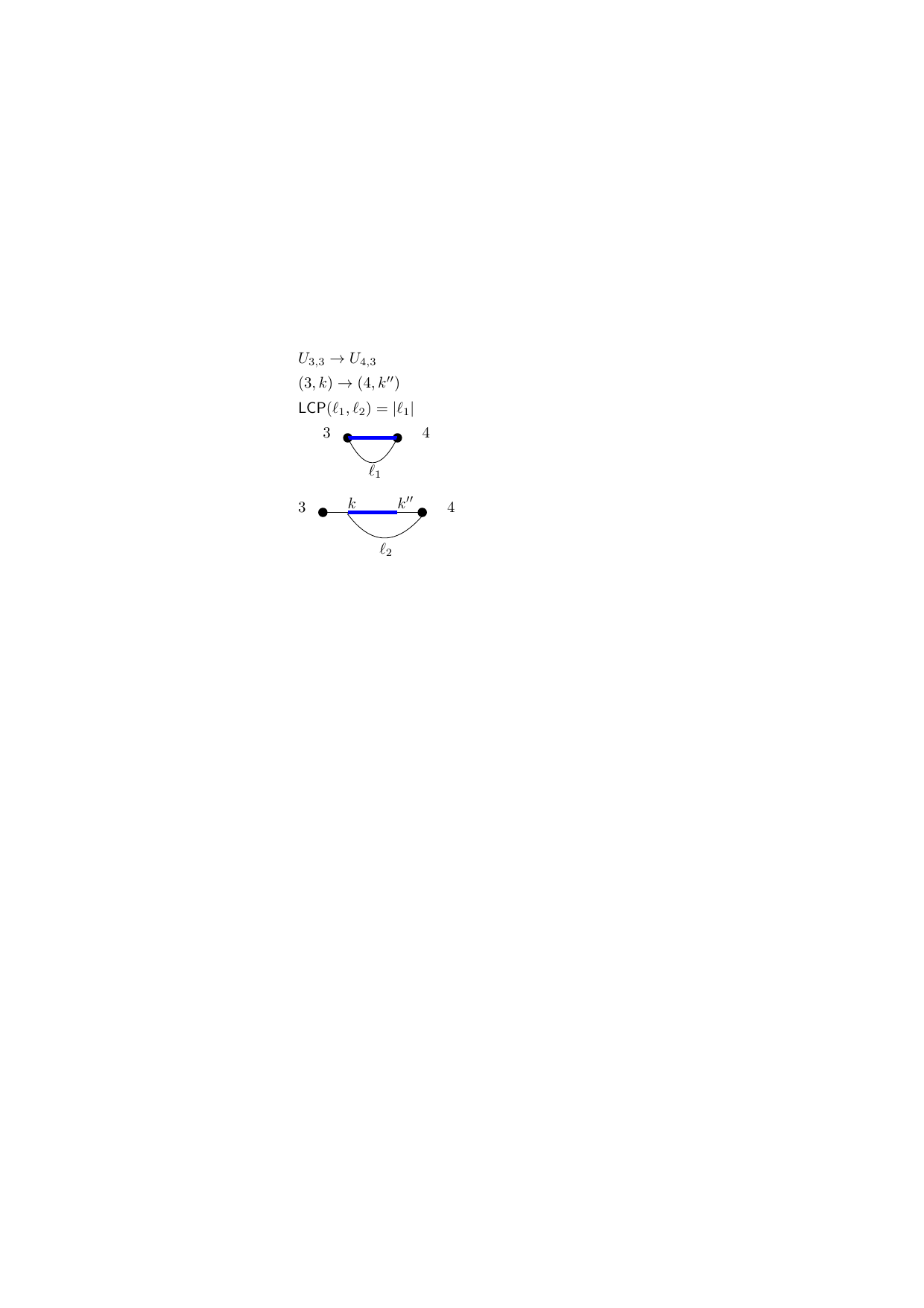}
         \caption{The blue edge denotes a full match.}
         \label{fig:blue}
     \end{subfigure}
        \caption{An overview of the edges computed by the algorithm.}
        \label{fig:grid-edges}
\end{figure}

We next show how to construct the intersection graph by computing all such edges going out of $U_{i,j}$ or $U'_{i,j}$ in a \emph{single batch} using suffix trees (inspect~\cref{fig:lem4} for an example). This construction allows an easier and more efficient implementation in comparison to the LCP data structure used in the general NFA intersection construction. Let us recall that $||T||$ denotes the size of the ED string $T$. Henceforth we denote $N_{1,i}=||T_1[i]||$ for $i \in [1,n_1]$ and $N_{2,j}=||T_2[j]||$ for $j \in [1,n_2]$.

\begin{figure}[ht]
\centering
\begin{tikzpicture}[yscale=0.8,xscale=0.8,auto,node distance=0.1cm]
\usetikzlibrary{calc}

\tikzstyle{dot}=[inner sep=0.03cm, circle, draw]

\node[] at (11,11.7){};
\node[dot,fill=gray] (t) at (11,11) {};
\foreach \name/\dx/\parent/\l in {
  l/-2/t/\texttt{b},
  r/+2/t/\texttt{a},
  lr/+1/l/\texttt{a},
  ll/-1/l/\texttt{b},
  rl/-1/r/\texttt{b},
  rr/+1/r/\texttt{a},
  rll/-1/rl/\texttt{ba},
  rrl/-1/rr/\texttt{b},
  rrr/+1/rr/\texttt{a},
  rrrr/-0.5/rrr/\texttt{b}
  } {
  \node[dot,fill=gray] (\name) at ($(\parent)+(\dx,-1.5)$) {};
  \draw[black,thick] (\parent)--(\name) node[midway,auto] {\small \l};
}

\node[dot,fill=green,label={left:$T_2,(1,4)$}] (new) at (l) {.};
\node[dot,fill=green, label={below:$T_2,1$}] (new) at (lr) {.};
\node[dot,fill=green, label={left:$T_2,2$}] (new) [left= of r] {.};
\node[dot,fill=green, label={left:$T_2,3$}] (new) at (rl) {.};
\node[dot,fill=green, label={below:$T_2,2$}] (new) at (rrl) {.};
\node[dot,fill=green, label={left:$T_2,1$}] (new) at (rrrr) {.};
\node[dot,fill=blue,label={below:$T_1$}] (new) at (rll) {.};
\node[dot,fill=blue,label={right:$T_1$}] (new) at (rrr) {.};
\node[dot,fill=blue,label={below:$T_1$}] (new) at (ll) {.};
\node[dot,fill=blue,label={right:$T_1$}] (new) [right= of r] {.};

\end{tikzpicture}
\caption{The annotated compacted trie constructed for 
$T_1[i]=\begin{Bmatrix}
\texttt{abba}\\\texttt{aaa}\\\texttt{bb}\\\texttt{a}
\end{Bmatrix}$ and 
$T_2[j]=\begin{Bmatrix}
\texttt{ba}\\\texttt{aaab}\\\texttt{b}
\end{Bmatrix}$ in~\cref{lem:suffix tree graph construction}.
The node corresponding to $\texttt{b}$ has two $T_2$ labels and is an ancestor of the node corresponding to $\texttt{bb}$ with a $T_1$ label; hence two corresponding edges to $U'_{i,j+1}$ are constructed. The node corresponding to $\texttt{aaa}$ has a $T_1$ label and is an ancestor of the node corresponding to $\texttt{aaab}$ with a $T_2$ label; hence a corresponding edge to $U_{i+1,j}$ is constructed. The node corresponding to $\texttt{a}$ has both a $T_1$ and a $T_2$ label; hence a corresponding edge to $(i+1,j+1)$ is constructed.}\label{fig:lem4}
\end{figure}

\begin{lemma}\label{lem:suffix tree graph construction}
For any $i\in[1,n_1+1]$ and $j\in[1,n_2+1]$, we can construct all $K_{i,j}$ edges going out of nodes in $U_{i,j}$ in $\cO(N_{1,i}+N_{2,j}+K_{i,j})$ time using the generalized suffix tree of the strings in $T_2[j]$. We assume that the letters of strings in $T_2[j]$ are over an integer alphabet $[1,N_{2,j}^{\cO(1)}]$.
\end{lemma}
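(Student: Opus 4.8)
The plan is to turn each out-edge of $U_{i,j}$ into an ancestor/descendant relation between two marked nodes of a single compacted trie, and then to read off all these relations in one traversal whose running time is linear in the trie size plus the number of edges produced. Recall from the discussion just before the lemma that an out-edge of a node $(i,k)\in U_{i,j}$ is specified by a string $\ell_1\in T_1[i]$ (a label of an edge from $i$ to $i+1$ in $A_1$) and a node $k\in\{j\}\cup I_2[j]$, to which corresponds the suffix $\ell_2$ of a string of $T_2[j]$ spelled from $k$ to $j+1$ in $A_2$; the edge exists iff $\ell_1$ and $\ell_2$ are prefix-comparable, and whether it is a \emph{green} edge ($\ell_2\prec\ell_1$, target in $U'_{i,j+1}$), a \emph{blue} edge ($\ell_1\prec\ell_2$, target in $U_{i+1,j}$), or an \emph{equal} edge ($\ell_1=\ell_2$, target $(i+1,j+1)$) is decided by $|\ell_1|$ versus $|\ell_2|$. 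So it suffices, for every $\ell_1\in T_1[i]$ and every suffix $\ell_2$ of $T_2[j]$, to detect the prefix-comparable pairs and, for each, to name its endpoint implicit nodes.

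First I would build the generalized suffix tree $\mathcal{G}_2$ of the strings of $T_2[j]$, which is possible in $\cO(N_{2,j})$ time by the integer-alphabet assumption. Every suffix $\ell_2$ of a string of $T_2[j]$ has a locus in $\mathcal{G}_2$; I make all these loci explicit (subdividing edges when necessary, adding $\cO(N_{2,j})$ nodes) and attach to each one a \emph{$T_2$-tag} recording the node $k\in\{j\}\cup I_2[j]$ it represents together with $|\ell_2|$ (one node may carry several tags, but their total number is $\cO(N_{2,j})$). After a linear-time relabelling of the letters of $T_1[i]$ to be compatible with the alphabet of $\mathcal{G}_2$ (letters not occurring in $T_2[j]$ become a sentinel that matches no edge), I spell each $\ell_1\in T_1[i]$ from the root of $\mathcal{G}_2$, at cost $\cO(|\ell_1|)$; while doing so I immediately output, for every $T_2$-tag met strictly before the end of $\ell_1$, the corresponding green edge, directed from $(i,k)$ to the node $(k_1,j+1)\in U'_{i,j+1}$ where $k_1$ is the node of $A_1$ at depth $|\ell_2|$ on the $\ell_1$-edge. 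If the spelling of $\ell_1$ completes, I make its end locus explicit and give it a \emph{$T_1$-tag} naming the $\ell_1$-edge of $A_1$ (strings that fall off have no blue or equal out-neighbour, so they need no node). The resulting annotated compacted trie $\mathcal{C}$ — all suffixes of $T_2[j]$ together with the completed strings of $T_1[i]$, as in \cref{fig:lem4} — has $\cO(N_{1,i}+N_{2,j})$ nodes and, since the loci of all $\ell_1$'s were located during the spellings, is built within $\cO(N_{1,i}+N_{2,j})$ time overall.

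It remains to produce the blue and equal edges, i.e.\ for every node carrying a $T_1$-tag, the $T_2$-tags at it (equal, detected by depth $=|\ell_1|$) and strictly below it (blue). I would do this in one post-order traversal of $\mathcal{C}$ that maintains, for the subtree just finished, the list of $T_2$-tags it contains, merging the children's lists in $\cO(1)$ each via linked lists; when the traversal leaves a node carrying a $T_1$-tag for some $\ell_1$, I emit, for each $T_2$-tag in that list, either the edge from $(i,k)$ to $(i+1,k_2)\in U_{i+1,j}$ with $k_2$ the node of $A_2$ at distance $|\ell_2|-|\ell_1|$ from $j+1$ (tag at a proper descendant), or the edge to $(i+1,j+1)$ (tag at the node itself), and then pass the list unchanged to the parent. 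The cases $\varepsilon\in T_1[i]$ or $\varepsilon\in T_2[j]$ are just tags at the root and need no separate treatment. Correctness is precisely the trichotomy recalled above; the total cost is $\cO(|\mathcal{C}|)$ for the traversal plus $\cO(1)$ per emitted edge, i.e.\ $\cO(N_{1,i}+N_{2,j}+K_{i,j})$, and the symmetric statement for $U'_{i,j}$ follows by swapping the roles of $T_1[i]$ and $T_2[j]$.

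The step I expect to be the crux is this batched extraction of blue edges: handling each $\ell_1$ in isolation would re-explore the subtree of its locus, costing $\Theta(N_{1,i}N_{2,j})$ when the $\ell_1$-loci are nested, whereas threading the $T_2$-tag lists up the tree and reusing them makes the total work $\sum_{\ell_1}\#\{T_2\text{-tags below the }\ell_1\text{-locus}\}$, which telescopes to exactly the number of blue and equal out-edges, hence $\cO(K_{i,j})$. The only other point needing care is the bookkeeping of implicit-node indices on both sides, so that each emitted edge names its genuine endpoints in $I_1[i]$ and $I_2[j]$ in constant time; this is handled by storing with each $T_1$-tag and each $T_2$-tag the offset along the relevant $A_1$- or $A_2$-edge, together with the chosen consecutive numbering of the implicit nodes.
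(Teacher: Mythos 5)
Your proposal is correct and follows essentially the same route as the paper: build the generalized suffix tree of $T_2[j]$, mark the suffix loci with $T_2$-labels, spell the strings of $T_1[i]$ into it to place $T_1$-labels, and read off the three edge types from ancestor--descendant relations between labels in a single output-sensitive traversal. The one detail you gloss over is how to branch to the correct child in $\cO(1)$ time while spelling each $\ell_1$ over an integer alphabet: the paper does this with perfect hashing (randomized, w.h.p.) and obtains a deterministic variant by instead building the generalized suffix tree of $T_1[i]\cup T_2[j]$ jointly and trimming, whereas your relabelling of letters alone does not yield constant-time child lookup.
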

\begin{proof}
Let us start with a simple implementation of the lemma that works \emph{with high probability}. We will then give the details for a deterministic implementation.

We first construct the generalized suffix tree of the strings in $T_2[j]$ in $\cO(N_{2,j})$ time~\cite{DBLP:conf/focs/Farach97}.
We also mark each node corresponding to a suffix of a string in $T_2[j]$ with a $T_2$-label.
Each such node is also decorated with one or multiple starting positions, respectively, from one or multiple elements of $T_2[j]$ sharing the same suffix. For each branching node of the suffix tree, we construct a hash table, to ensure that any outgoing edge can be retrieved in constant time based on the first letter (the key) of its label.
This can be done in $\cO(N_{2,j})$ time with perfect hashing~\cite{DBLP:journals/jacm/FredmanKS84}. We next spell each string from $T_1[i]$ from the root of the suffix tree making implicit nodes explicit or adding new ones if necessary to create the compacted trie of all those strings; and, finally, we mark the reached nodes of the suffix tree with a $T_1$-label. Spelling all strings from $T_1[i]$ takes $\cO(N_{1,i})$ time.

Every pair of different labels marking two nodes in an ancestor-descendant relationship corresponds to exactly one outgoing edge of the nodes in $U_{i,j}$: (i) if a node marked with a $T_2$-label is an ancestor of a node marked with a $T_1$-label, then the suffix of a string from $T_2[j]$ matches a prefix of a string from $T_1[i]$ forming an edge ending in $U'_{i,j+1}$; (ii) if a node marked with a $T_1$-label is an ancestor of a node marked with a $T_2$-label, then a string from $T_1[i]$ occurs in a string from $T_2[j]$ extending its prefix and forming an edge ending in $U_{i+1,j}$; (iii) if a node is marked with a $T_1$-label and with a $T_2$-label, then the suffix of a string from $T_2[j]$ matches a string from $T_1[i]$ forming an edge ending in $(i+1,j+1)$. After constructing the generalized suffix tree of $T_2[j]$ and spelling the strings from $T_1[i]$, it suffices to make a DFS traversal on the annotated tree to output all $K_{i,j}$ such pairs of nodes. 

Let us note that the perfect hashing can be avoided under the assumption of the lemma stating that strings in $T_1[i] \cup T_2[j]$ are over an integer alphabet $[1,(N_{1,i}+N_{2,j})^{\cO(1)}]$). It suffices to construct the generalized suffix tree of $T_1[i] \cup T_2[j]$. Then one can trim all the nodes of the tree that do not have in their subtree a node corresponding to a (whole) string in $T_1[i]$ or a substring of a string in $T_2[j]$. This makes the construction deterministic.
\end{proof}

\begin{theorem}\label{the:ST}
We can construct the intersection graph of $T_1$ and $T_2$ in $\cO(N_1m_2+N_2m_1)$ time using the suffix tree data structure and tree search traversals if $T_1$ and $T_2$ are over an integer alphabet $[1,(N_1+N_2)^{\cO(1)}]$.
\end{theorem}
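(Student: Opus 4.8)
The plan is to derive the theorem by invoking \cref{lem:suffix tree graph construction} once for each pair of segments and summing the resulting bounds. I would first record the symmetric form of that lemma: exchanging the roles of $T_1$ and $T_2$ in its proof (build the generalized suffix tree of the strings in $T_1[i]$, spell the strings of $T_2[j]$ into it, then do a DFS on the annotated tree) shows that for any $i,j$ all edges leaving the vertical node set $U'_{i,j}$ can be produced in $\cO(N_{1,i}+N_{2,j}+K'_{i,j})$ time, where $K'_{i,j}$ is their number. I would then run the original lemma for every $i\in[1,n_1]$, $j\in[1,n_2+1]$ and the symmetric one for every $i\in[1,n_1+1]$, $j\in[1,n_2]$. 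Since a node $(i,k)$ with $k\in I_2[j]$ belongs only to $U_{i,j}$, a node $(k,j)$ with $k\in I_1[i]$ belongs only to $U'_{i,j}$, and each corner $(i,j)$ belongs to both $U_{i,j}$ and $U'_{i,j}$, the only bookkeeping point is that corners must not be processed twice; this is arranged by suppressing, in the symmetric runs, the edges leaving corners (e.g.\ by not $T_1$-marking the suffix-tree nodes that spell whole strings of $T_1[i]$, as exactly those edges were already emitted by the original run on $U_{i,j}$). With this convention every edge of the intersection graph is created exactly once, and a source-to-accepting path (a witness for \EDSI) is then found by a single linear-time search.

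For the running time I would split the cost of \cref{lem:suffix tree graph construction} into its three kinds of terms. The generalized suffix tree of $T_2[j]$ is rebuilt from scratch for each of the $\cO(n_1)$ choices of $i$, but this is exactly what the $N_{2,j}$ term pays for. Summing the $N_{1,i}$ terms gives $\sum_{i\le n_1}\sum_{j\le n_2+1}N_{1,i}=(n_2+1)N_1=\cO(n_2 N_1)$, and symmetrically the $N_{2,j}$ terms sum to $\cO(n_1 N_2)$. The $K_{i,j}$ and $K'_{i,j}$ terms sum to the total number of edges of the intersection graph, which is $\cO(N_1m_2+N_2m_1)$; this can be cited from \cref{cor:path-automata intersection}, or seen directly from $|U_{i,j}|=1+|I_2[j]|=\cO(N_{2,j})$ together with the observation that each node of $U_{i,j}$ emits at most $|T_1[i]|$ edges (one per string of $T_1[i]$, pinned down by a single \LCP value), so $\sum_{i,j}K_{i,j}=\cO(N_2 m_1)$ and symmetrically $\sum_{i,j}K'_{i,j}=\cO(N_1 m_2)$. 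Since $n_1\le m_1$ and $n_2\le m_2$, all of this is $\cO(N_1m_2+N_2m_1)$, matching \cref{cor:path-automata intersection} but with a construction that only uses suffix trees and tree traversals.

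The one genuinely delicate step — and the place I would flag as the main obstacle — is meeting the alphabet hypothesis of \cref{lem:suffix tree graph construction}, which asks the letters of a segment pair to lie in $[1,(N_{1,i}+N_{2,j})^{\cO(1)}]$, whereas the theorem only grants the global range $[1,(N_1+N_2)^{\cO(1)}]$. I would handle this by an $\cO(N_1+N_2)$-time preprocessing: an $\cO(1)$-pass radix sort relabels all letter occurrences of $T_1$ and $T_2$ consistently into $[1,N_1+N_2]$, and a radix sort of the $\cO(N_1+N_2)$ (segment-index, letter) occurrences yields, for each segment, its characters bucketed by letter value (with pointers back to their positions) and hence the sorted list of that segment's distinct letters. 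For a pair $(i,j)$ one then merges the two pre-sorted distinct-letter lists in $\cO(N_{1,i}+N_{2,j})$ time, assigns local ranks in $[1,\cO(N_{1,i}+N_{2,j})]$, and propagates these ranks through the bucketed occurrences — relabeling $T_1[i]\cup T_2[j]$ in time linear in $N_{1,i}+N_{2,j}$ and without any logarithmic overhead (the naive alternative, a binary search per character, would cost an extra logarithmic factor and is precisely what this bucketing avoids). After this preprocessing, \cref{lem:suffix tree graph construction} and its symmetric version apply verbatim to every segment pair, and the accumulated bound $\cO(N_1m_2+N_2m_1)$ gives the theorem.
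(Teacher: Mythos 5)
Your proposal is correct and follows essentially the same route as the paper: apply \cref{lem:suffix tree graph construction} (and its symmetric version) to every segment pair, renumber each pair's alphabet via one global radix sort, and bound the node terms by $\cO(N_1n_2+N_2n_1)$ and the edge terms by $\cO(N_1m_2+N_2m_1)$ via \cref{cor:path-automata intersection}. The paper's proof is terser, omitting the explicit treatment of the corner nodes and the bucketing details of the radix sort, but these are the same ideas spelled out.
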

\begin{proof}
We will apply \cref{lem:suffix tree graph construction} for $U_{i,j}$ and $U'_{i,j}$, for all $i\in[1,n_1+1]$ and $j\in[1,n_2+1]$. To this end, before computing $U_{i,j}$ or $U'_{i,j}$, for each $i\in[1,n_1+1]$ and $j\in[1,n_2+1]$ we may need to renumber the letters in strings in $T_1[i]$ and $T_2[j]$ with consecutive integers to make sure that they belong to an integer alphabet $[1,(||T_1[i]||+||T_2[j]||)^{\cO(1)}]$. This can be done in $\cO(N_1n_2+n_1N_2)$ total time using one global radix sort.

We have that the total number of nodes is $\sum_{i,j} \cO(N_{1,i}+N_{2,j})=\cO(N_1n_2+N_2n_1)$, and then the number of all output edges is bounded by $\cO(N_1m_2+N_2m_1)$ by \cref{cor:path-automata intersection}. 
\end{proof}

Note that if we are interested only in checking whether the intersection is nonempty, and not in the computation of its graph representation, it suffices to check which of the nodes are \emph{reachable} from the starting node, which may be more efficient as there are $\cO(N_1n_2+N_2n_1)$ explicit nodes in this graph.

Let $X$ be the set of nodes of $U_{i,j}$ that are reachable from the starting node. From this set of nodes we need to compute two types of edges (inspect~\cref{fig:edges}). The first type of edges, namely, the ones from $X$ to $U'_{i,j+1}\cup\{(i+1,j+1)\}$ (green edges in~\cref{fig:edges}), are computed by means of~\cref{lem:pref=suf}, which is similar to~\cref{lem:suffix tree graph construction}.
For the second type of edges, namely, the ones from $X$ to $U_{i+1,j}\cup\{(i+1,j+1)\}$ (blue edges in~\cref{fig:edges}), we use a reduction to the so-called \emph{active prefixes extension} problem~\cite{elasticSICOMP} (\cref{lem:extensions}).

\begin{lemma}\label{lem:pref=suf}
For any given $X\subseteq U_{i,j}$, we can compute the subset of $U'_{i,j+1}\cup\{(i+1,j+1)\}$ containing all and only the nodes that are reachable from the nodes of $X$ in $\cO(N_{1,i}+N_{2,j})$ time. We assume that the letters of strings in $T_1[i] \cup T_2[j]$ are over an integer alphabet $[1,(N_{1,i}+N_{2,j})^{\cO(1)}]$.
\end{lemma}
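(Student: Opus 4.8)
The plan is to mirror the construction of~\cref{lem:suffix tree graph construction}, but now restricting attention to a prescribed subset $X\subseteq U_{i,j}$ and to only the "green" edges (prefix--suffix matches into $U'_{i,j+1}$ and the full-match edge into $(i+1,j+1)$), and to do so in time independent of the number of such edges — i.e.\ $\cO(N_{1,i}+N_{2,j})$ rather than $\cO(N_{1,i}+N_{2,j}+K_{i,j})$. The key observation is that we do not need to \emph{list} the edges: we only need the \emph{set} of target nodes in $U'_{i,j+1}\cup\{(i+1,j+1)\}$ that are reachable from some node of $X$. A node of $U'_{i,j+1}$ is a pair $(k,j+1)$ with $k\in\{i\}\cup I_1[i]$, i.e.\ it corresponds to a prefix of some string of $T_1[i]$; such a node is reachable from $X$ iff that prefix equals the suffix (starting at the position encoded by the chosen $k'\in\{j\}\cup I_2[j]$) of some string in $T_2[j]$ for which $(i,k')\in X$.

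First I would build the generalized suffix tree of the strings in $T_1[i]\cup T_2[j]$ in $\cO(N_{1,i}+N_{2,j})$ time~\cite{DBLP:conf/focs/Farach97}, using the integer-alphabet assumption, and (as in the deterministic variant of~\cref{lem:suffix tree graph construction}) trim the nodes whose subtree contains no whole string of $T_1[i]$ and no suffix of a string of $T_2[j]$. Next, for each node $(i,k')\in X$ I mark the suffix-tree locus of the corresponding suffix of $T_2[j]$ with a "$T_2$-active" flag; and I mark every suffix-tree locus that corresponds to a whole string of $T_1[i]$ (equivalently, to a node $k\in\{i\}\cup I_1[i]$ of $A_1$, via the prefixes read along the spelled strings of $T_1[i]$) with a "$T_1$" flag together with the identity of that node. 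Then a single DFS on the trimmed tree, carrying a boolean "an ancestor is $T_2$-active" down the recursion, lets me collect exactly the $T_1$-nodes that lie below an active $T_2$-node: each such $T_1$-node $k$ yields the reachable target $(k,j+1)\in U'_{i,j+1}$, and the case where a locus is simultaneously $T_2$-active and $T_1$-marked yields the target $(i+1,j+1)$. Since every locus and every flag is touched $\cO(1)$ times, the DFS runs in $\cO(N_{1,i}+N_{2,j})$ time, and the output is a subset of $U'_{i,j+1}\cup\{(i+1,j+1)\}$ of size at most $\cO(N_{1,i})$, so writing it down also fits the budget.

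The main subtlety — the step I expect to need the most care — is the bookkeeping that ties suffix-tree loci back to the \emph{nodes} of the path-automata $A_1$ and $A_2$: an implicit node $k\in I_1[i]$ corresponds to a prefix of a particular string of $T_1[i]$ at a particular length, and several such prefixes may coincide as strings while being distinct automaton nodes, and similarly for the suffixes indexing $I_2[j]$; so each suffix-tree locus must carry a \emph{list} of the automaton nodes it represents, and the DFS must emit one target per (ancestor $T_2$-node, descendant $T_1$-node) relationship it witnesses. One must argue that the total size of these lists is $\cO(N_{1,i}+N_{2,j})$ (each automaton node contributes to exactly one locus) and that, because we only want the \emph{set} of reachable targets and a target is a $T_1$-side node $k$, it suffices to output $k$ once as soon as \emph{some} ancestor is $T_2$-active — so the per-locus work stays linear and the $K_{i,j}$ factor is eliminated. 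Correctness then follows from the same case analysis as in~\cref{lem:suffix tree graph construction}: $\LCP(\ell_1,\ell_2)=|\ell_1|\le|\ell_2|$ with $\ell_1$ a full $T_1[i]$-string and $\ell_2$ a $T_2[j]$-suffix is exactly the ancestor/descendant pattern the DFS detects, and equality of lengths is the simultaneously-marked locus.
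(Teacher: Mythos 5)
Your overall strategy is the same as the paper's: build the generalized suffix tree of $T_1[i]\cup T_2[j]$ over the integer alphabet, keep only the $T_2$-labels corresponding to elements of $X$, collapse the possibly many labels sitting at one locus into a single ``active'' bit (since we only need the set of reachable targets, not which node of $X$ each edge originates from), and detect the ancestor--descendant pattern by one traversal, charging the work to the at most $N_{1,i}$ ancestors (counted with multiplicity) of the $T_1$-labelled loci. That is exactly the paper's argument, and your opening observation --- that the $K_{i,j}$ term disappears because we need targets rather than edges --- is the right key insight.

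However, your description of the output step is internally inconsistent, and the version you actually spell out as the algorithm is wrong. The target of a prefix--suffix edge is not ``the $T_1$-node'': it is the implicit node of $A_1$ at depth $d$ on the edge labelled by $S\in T_1[i]$, where $d$ is the string depth of the \emph{particular} active $T_2$-locus realizing the match, i.e.\ the locus of $S[1\dd d]$ must itself be active. Consequently, a single $T_1$-labelled locus (for a string $S$) with active ancestors at several depths $d_1<d_2<\cdots$ contributes several distinct reachable targets, one per depth; a boolean ``an ancestor is $T_2$-active'' carried down the DFS, combined with ``output $k$ once as soon as some ancestor is active'', both loses these depths and conflates ``the locus of $S[1\dd d]$ is active'' with ``some ancestor of the locus of $S$ is active'' --- for example, if $S=\texttt{abc}$ and only the suffix $\texttt{a}$ is active, the depth-$1$ implicit node is reachable but the depth-$2$ one is not, even though its locus also lies below the active one. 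The correct bookkeeping is the one you sketch in your last paragraph: for each $T_1$-labelled locus enumerate \emph{all} of its active $T_2$-ancestors (e.g.\ by carrying a stack of active ancestors during the DFS) and emit one target per (active ancestor, $T_1$-descendant) pair; distinct pairs give distinct implicit nodes of $A_1$, and the total number of pairs is bounded by the total number of ancestors of $T_1$-labelled loci, which is $\cO(N_{1,i})$. With that version adopted as the algorithm (and the boolean-flag shortcut discarded), your proof coincides with the paper's.
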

\begin{proof}
In~\cref{lem:suffix tree graph construction}, the edges from nodes of $U_{i,j}$ to nodes of $U'_{i,j+1}$ come from a pair of nodes in the generalized suffix tree of $T_2[j]$ enriched with strings from $T_1[i]$: one marked with a $T_1$-label and its ancestor marked with a $T_2$-label. Notice that the $T_2$-labels are in a correspondence with the elements of $U_{i,j}$ (the labels on a proper suffix of a string in $T_2$ are in a one-to-one correspondence with $U_{i,j}\backslash \{(i,j)\}$, and $(i,j)$ corresponds to whole strings in $T_2[j]$), and hence we can trivially remove the $T_2$-labels that do not correspond to the elements of $X$. Furthermore, we are not interested in the set of starting positions decorating a node with a $T_2$-label; we are interested only in whether a node is $T_2$-labeled or not (i.e., we do not care from which node of $X$ the edge originates). Since the nodes marked with a $T_1$-label have in total $N_{1,i}$ ancestors (including duplicates), we can compute the result of this case in $\cO(N_{1,i}+N_{2,j})$ time in total. Finally, the node $(i+1,j+1)$ is reachable when a single node is marked with both a $T_1$-label and a $T_2$-label. This can be checked within the same time complexity.
\end{proof}

The remaining edges (blue edges in~\cref{fig:edges}) are dealt with via a reduction to the following problem.

\defproblem{Active Prefixes (AP)}{A string $P$ of length $m$, a bit vector $W$ of size $m$, and a set $\mathcal{S}$ of strings of total length $N$.}{A bit vector $V$ of size $m$ with $V[p]=1$ if and only if there exists $P'\in\mathcal{S}$ and $p'\in[1, m]$, such that $P[1\dd p'-1]\cdot P' = P[1 \dd p-1]$ and $W[p']=1$.}

Bernardini et al.~have shown the following result in~\cite{elasticSICOMP}, which relies on fast matrix multiplication (FMM).

\begin{lemma}[\cite{elasticSICOMP}]\label{lem:APE}
The AP problem can be solved in $\ctO(m^{\omega-1})+\cO(N)$ time, where $\omega$ is the matrix multiplication exponent.
\end{lemma}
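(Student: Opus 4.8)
The plan is to reduce the AP problem to a bounded number of (rectangular, blocked) Boolean matrix multiplications, following the construction of Bernardini et al.~\cite{elasticSICOMP}. I would first restate the goal combinatorially: $V[p]=1$ if and only if some $P'\in\mathcal{S}$ occurs in $P$ at a position $p'$ with $W[p']=1$ and $p'+|P'|=p$; equivalently, writing $q=p-1$ for the right endpoint, $V[q+1]=\bigvee_{\ell}\,\bigl(O_{\ell}[q]\wedge W[q+1-\ell]\bigr)$, where $O_{\ell}[q]=1$ iff $P[q-\ell+1\dd q]\in\mathcal{S}$. Any string in $\mathcal{S}$ longer than $m$ is irrelevant and is discarded in $\cO(N)$ time, so all lengths lie in $[1,m]$.

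The easy ingredient is indexing: I would build in $\cO(N)$ time the Aho--Corasick automaton (or the generalized suffix tree) of $\mathcal{S}$ and run $P$ through it in $\cO(m)$ time, which yields for every position $q$ a compact description --- via dictionary-suffix links --- of the set $\{\ell : O_{\ell}[q]=1\}$. The obstacle is that the total number of matching (position, length) pairs can be as large as $\Theta(m\sqrt{N})$, so the direct ``sparse matrix--vector'' implementation costs $\cO(m\sqrt{N}+N)$, which is too slow when $m$ is much larger than $\sqrt{N}$.

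To do better I would batch with fast matrix multiplication. Partition $[1,m]$ into $\Theta(m/g)$ consecutive blocks of size $g$ (a parameter to be fixed). For a start block $S$ and an end block $E$ that are $\delta$ blocks apart, the relevant matching lengths lie in a window of width $\cO(g)$ around $\delta g$; I would form the $g\times g$ Boolean matrix $M_{S,E}$ whose $(r,r')$ entry records whether the corresponding substring of $P$ is a string of $\mathcal{S}$, populate all the $M_{S,E}$ from the suffix-tree structure of $P$ against $\mathcal{S}$ (so that the total fill time is $\ctO(N)$ rather than the naive $\Theta(m^{2})$), and then obtain $V|_{E}=\bigvee_{S}(W|_{S})^{\top}M_{S,E}$ by blocked Boolean matrix products --- grouping positions by the suffix-tree node they induce so that the products are over shared matrices. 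Summing the cost over the $\cO(m/g)$ block-offsets $\delta$ and over an $\cO(\log m)$-size family of length scales, and choosing $g$ to balance against the $\cO(N)$ preprocessing, gives $\ctO(m^{\omega-1})+\cO(N)$. The remaining strings --- those whose length exceeds the block size, hence spanning many blocks --- fall into the complementary regime: there are few distinct such lengths, and the occurrences of each long pattern form $\cO(m/\ell)$ arithmetic progressions, so their contribution to $V$ can be extracted with prefix-OR queries over $W$ within the $\cO(N)$ budget. Correctness is immediate from the fact that each block matrix encodes exactly the membership predicate $P[p'\dd p-1]\in\mathcal{S}$, so the OR-products compute exactly $\bigvee_{p'}\bigl(W[p']\wedge[\,P[p'\dd p-1]\in\mathcal{S}\,]\bigr)=V[p]$.

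The main obstacle is precisely this third step: engineering the reduction so that (i) all block matrices can be populated in $\ctO(N)$ total time, which genuinely needs the suffix tree of $P$ against $\mathcal{S}$ since enumerating all substring/membership pairs is quadratic; and (ii) the dimensions and the block-offset decomposition are arranged so that the many Boolean products telescope to exactly $\ctO(m^{\omega-1})$ --- rather than, say, $\Theta(m^{\omega/2})$ or $\Theta(m^{2})$ --- while the long-pattern branch stays linear in $N$ via periodicity. For the exact bookkeeping I would follow the construction in~\cite{elasticSICOMP}.
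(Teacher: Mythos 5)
First, a point of reference: the paper does not prove this lemma at all --- it is imported as a black box from Bernardini et al.~\cite{elasticSICOMP} (the citation is the ``proof''), so your proposal is really an attempted reconstruction of the external argument. Judged as such, it has the right high-level ingredients (length classes, blocking, Boolean matrix multiplication, periodicity for dense occurrence sets), but the steps that actually separate $\ctO(m^{\omega-1})+\cO(N)$ from the trivial $\cO(m\sqrt{N}+N)$ or $\cO(m^2)$ bounds are asserted rather than established.

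Two gaps are genuine. (i) You claim all block matrices $M_{S,E}$ can be populated in $\ctO(N)$ total time ``from the suffix-tree structure,'' but the total number of ones across these matrices equals the number of matching (position, length) pairs, which you yourself bound only by $\Theta(m\sqrt{N})$; compressing this is precisely the hard part of the original proof, and it needs a periodicity/run argument inside \emph{every} length class (many nearby occurrences of same-length dictionary strings force a common period, so the occurrences collapse into few arithmetic progressions), not only for the ``long patterns spanning many blocks'' that you relegate to a side case. (ii) The step that is supposed to deliver the $\ctO(m^{\omega-1})$ bound is not a matrix computation as written: $V|_{E}=\bigvee_{S}(W|_{S})^{\top}M_{S,E}$ is a family of $(m/g)^2$ independent vector--matrix products over \emph{distinct} $g\times g$ matrices, which costs $\Theta(m^2)$ Boolean operations and admits no fast-matrix-multiplication speedup; the phrase ``grouping positions by the suffix-tree node they induce so that the products are over shared matrices'' is exactly the missing reduction to genuinely rectangular products, and it is not carried out (nor is the claimed telescoping to $m^{\omega-1}$ over offsets and scales, which as stated does not follow). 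Ending with ``for the exact bookkeeping I would follow~\cite{elasticSICOMP}'' concedes the point: the proposal defers to the very proof it is meant to supply.
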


\begin{lemma}\label{lem:extensions}
For any given $X\subseteq U_{i,j}$, we can compute the subset of $U_{i+1,j}$ containing all and only the nodes that are reachable from the nodes of $X$ in $\ctO(N_{1,i}+N_{2,j}^{\omega-1})$ time. 
\end{lemma}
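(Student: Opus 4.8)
The plan is to reduce the computation of the blue edges to the AP problem and invoke \cref{lem:APE}. Recall what a blue edge out of a node $(i,k)\in X$ is: we choose a string $\ell_1\in T_1[i]$ and the path label $\ell_2$ from $k$ to $j+1$ in $A_2$ --- where $\ell_2$ is the suffix, starting at the position of $k$, of the single string $s\in T_2[j]$ on whose edge $k$ lies (and $\ell_2$ ranges over all of $T_2[j]$ when $k=j$) --- and we create an edge precisely when $\ell_1$ is a prefix of $\ell_2$; the edge lands on the node of $U_{i+1,j}$ lying $|\ell_1|$ letters into $s$, or on $(i+1,j+1)$ if $\ell_1$ exhausts $\ell_2$.

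First I would flatten all of $T_2[j]$ into one text. Let $P$ be the concatenation of the nonempty strings of $T_2[j]$, each block followed by its own pairwise-distinct fresh letter occurring nowhere in $T_1[i]\cup T_2[j]$; then $|P|=\cO(N_{2,j})$. Every node $k\in\{j\}\cup I_2[j]$ has a natural image in $P$: an implicit node of $I_2[j]$ sits at a unique interior position of its block, and $k=j$ is identified with the first position of \emph{every} block. From the given list $X$ I build a bit vector $W$ of size $|P|$ by setting, for each $(i,k)\in X$, the bit(s) at the image(s) of $k$. The crucial point is that, since the separators are fresh and occur in no string of $T_1[i]$, no string of $T_1[i]$ can match across a separator; hence every occurrence of a string of $T_1[i]$ in $P$ lies inside a single block --- this is exactly the ``stay within one string of $T_2[j]$'' constraint we need.

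Then I would run the AP algorithm of \cref{lem:APE} on $(P,W,\mathcal S)$ with $\mathcal S=T_1[i]$, whose total length is $N_{1,i}$; this costs $\ctO(|P|^{\omega-1})+\cO(N_{1,i})=\ctO(N_{2,j}^{\omega-1}+N_{1,i})$ and returns the bit vector $V$. By the definition of AP, $V[p]=1$ iff there is an active position $p'$ (an image of a node of $X$) and a string $\ell_1\in T_1[i]$ with $P[p'\dd p-1]=\ell_1$; by the previous paragraph $p'$ and $p-1$ lie in one block $s$, so this is precisely a blue edge out of the node of $X$ imaged to $p'$ that, after reading $\ell_1$, reaches the node of $U_{i+1,j}$ imaged to $p$ when $p$ is a start- or interior-position of a block, and reaches $(i+1,j+1)$ when $p$ is the fresh letter immediately following a block. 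Scanning the $1$-entries of $V$ and translating them back through the position-to-node map therefore yields, in $\cO(|P|)=\cO(N_{2,j})$ further time, exactly the reachable subset of $U_{i+1,j}$ (and, if one also wants it here rather than from \cref{lem:pref=suf}, the reachability of $(i+1,j+1)$).

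Finally I would dispose of the degenerate cases. Empty strings in $T_1[i]$ need no extra work: if $\varepsilon\in\mathcal S$ then AP automatically forces $V[p']=1$ at every active $p'$, which is exactly the edge $(i,k)\to(i+1,k)$ obtained by reading the empty string; empty strings of $T_2[j]$ create no interior node, hence no blue edge into $U_{i+1,j}$, and are irrelevant except for the trivial edge to $(i+1,j+1)$, detectable by a single scan. Summing, every step runs within $\ctO(N_{1,i}+N_{2,j}^{\omega-1})$ time (using $\omega\ge2$, so $N_{2,j}\le N_{2,j}^{\omega-1}$). I expect the main obstacle to be precisely the correctness of this flattening: proving that the fresh-separator trick makes the single AP call enumerate \emph{all and only} the blue edges, and that the position-to-node dictionary --- including the shared source node $j$ and the boundary target node $j+1$ --- is maintained faithfully. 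Once this is settled, the running time is immediate from \cref{lem:APE}.
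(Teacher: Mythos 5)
Your proposal is correct and follows essentially the same route as the paper: flatten $T_2[j]$ into a single separator-delimited text, encode $X$ as the bit vector $W$ (with the explicit node $j$ having one image per block), make a single call to the AP algorithm of \cref{lem:APE} with $\mathcal S=T_1[i]$, and translate $V$ back through the position-to-node map. The only differences are cosmetic (the paper places a single separator letter $\$$ \emph{before} each string rather than distinct fresh letters after, and it additionally records the reverse reduction from AP to justify that this step is the bottleneck), so your argument, including the treatment of $\varepsilon$-strings and of the landing node $(i+1,j+1)$, matches the intended proof.
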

\begin{proof}
The problems of computing the subset of $U_{i+1,j}$ reachable from $X$ and the AP problem can be reduced to one another in linear time.

For the forward reduction, let us set $\mathcal{S}=T_1[i]$ and $P=\prod_{S\in T_2[j]} \$ S $, where $\$$ is a letter outside of the alphabet of $T_1$. This means that we order the strings in $T_2[j]$ in an arbitrary but fixed way. For a single string $\$ S$ (where $S\in T_2[j]$), the positions from $S[1\dd |S|-1]$ correspond to the implicit nodes (along the path spelling $S$) of $I_2[j]$, while the position with $\$$ corresponds to the explicit node $j$ of $A_2$. Through this correspondence, we can construct two bit vectors $W$ and $V$, each of them of size $|P|$, and whose positions are in correspondence with $\{j\}\cup I_2[j]$ (note that this correspondence is not a bijection, as the explicit node $j$ have several preimages when $|T_2[j]|\ge 2$). As $U_{i,j}$ and $U_{i+1,j}$ are in a 1-to-1 correspondence with $\{j\}\cup I_2[j]$, we use the same correspondence to match positions between $W$ and $U_{i,j}$ and between $V$ and $U_{i+1,j}$.
Finally, we set $W[k]=1$ if and only if the corresponding node of $U_{i,j}$ belongs to $X$.
After solving AP, we have $V[k]=1$ for some $k$ corresponding to a node of $U_{i+1,j}$ if and only if this node is reachable from $X$. 

In more detail, observe that since $\$$ does not belong to the alphabet of $T_1$, a string $S$ from $T_1[i]$ has to match a fragment of a string from $T_2[j]$ to set $V[k]$ to $1$. This happens only if additionally $W[k-|S|]=1$; both things happen at the same time exactly when: (i) there exists a node $(i,\ell)\in X$; (ii) there exists an edge from $(i,\ell)$ to $(i+1,\ell')$; and (iii) the positions $k-|S|$ and $k$ in $P$ correspond to $\ell,\ell'$, respectively.

In the above reduction we have $|P|=\sum_{S\in T_2[j]}(|S|+1)=\cO(N_{2,j}),$ and $||\mathcal{S}||=N_{1,j}$, hence the lemma statement follows by \cref{lem:APE}.

For the reverse reduction, given an instance of AP, we encode it by setting $T_1[i]=\mathcal{S}$, $T_2[j]=\{P\}$ ($N_{1,i}=||\mathcal{S}||$, $N_{2,j}=|P|$) and $X$ containing the nodes corresponding to positions $k$ where $W[k]=1$.

This reduction shows that a more efficient solution to the problem of finding the endpoints of edges originating in $X$ would result in a more efficient solution to the AP problem.
\end{proof}

\begin{theorem}\label{thm:omega-1 intersection}
We can solve $\EDSI$ in $\ctO(N_1^{\omega-1}n_2+N_2^{\omega-1}n_1)$ time, where $\omega$ is the matrix multiplication exponent. If the answer is YES, we can output a witness within the same time complexity.
\end{theorem}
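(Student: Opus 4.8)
The plan is to traverse the intersection graph layer by layer, following the geometry of the grid of explicit nodes laid out in~\cref{fig:grid}, and to compute reachability from the starting node $(1,1)$ without ever materializing all $\cO(N_1m_2+N_2m_1)$ edges — instead computing, for each cell $(i,j)$ of the grid, the set of reachable nodes on its two boundary lines $U_{i,j}$ and $U'_{i,j}$ in a single batch. Concretely, I would process the grid cells $(i,j)$ for $i\in[1,n_1+1]$, $j\in[1,n_2+1]$ in topological order (e.g.\ by $i+j$ increasing, or simply a BFS/DFS over cells). When processing cell $(i,j)$, I already know the reachable subset $X\subseteq U_{i,j}$ and $X'\subseteq U'_{i,j}$ accumulated from predecessors (cells $(i-1,j)$ and $(i,j-1)$ contribute into $U_{i,j}$ and $U'_{i,j}$ respectively, via the "pruned to implicit node" endpoints; and $(i,j)$ itself may have just been reached). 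From $X$ I then push reachability forward along the two edge types of~\cref{fig:edges}: the green edges (prefix-of-$T_1[i]$ equals suffix-of-$T_2[j]$, landing in $U'_{i,j+1}\cup\{(i+1,j+1)\}$) are computed by~\cref{lem:pref=suf} in $\cO(N_{1,i}+N_{2,j})$ time; the blue edges (a whole string of $T_1[i]$ occurs inside a string of $T_2[j]$, landing in $U_{i+1,j}\cup\{(i+1,j+1)\}$) are computed by~\cref{lem:extensions} in $\ctO(N_{1,i}+N_{2,j}^{\omega-1})$ time. Symmetrically, from $X'\subseteq U'_{i,j}$ I run the mirror-image versions of the same two lemmas, with the roles of $T_1$ and $T_2$ swapped, giving $\ctO(N_{2,j}+N_{1,i}^{\omega-1})$ time. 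The answer is YES iff the accepting node $(n_1+1,n_2+1)$ is ever marked reachable; the $\varepsilon$-transitions (segments containing the empty string) are handled by treating an empty string as a zero-length label, which the suffix-tree / AP machinery already accommodates.

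For the running-time bound I would sum the per-cell costs. The green-edge and the "linear" parts contribute $\sum_{i,j}\cO(N_{1,i}+N_{2,j}) = \cO(N_1 n_2 + N_2 n_1)$, which is dominated. The blue-edge cost from the explicit-$A_1$ side contributes, per cell $(i,j)$, $\ctO(N_{1,i}+N_{2,j}^{\omega-1})$; summing the $N_{2,j}^{\omega-1}$ term over $i\in[1,n_1+1]$ and $j$ gives $\ctO\bigl(n_1\sum_j N_{2,j}^{\omega-1}\bigr)$, and since $\omega-1\ge 1$ and $\sum_j N_{2,j}=N_2$ we have $\sum_j N_{2,j}^{\omega-1}\le N_2^{\omega-1}$, so this is $\ctO(N_2^{\omega-1}n_1)$. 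Symmetrically the mirror side contributes $\ctO(N_1^{\omega-1}n_2)$. The global radix-sort renumbering of letters, as in the proof of~\cref{the:ST}, costs $\cO(N_1n_2+N_2n_1)$ once and is absorbed. Hence the total is $\ctO(N_1^{\omega-1}n_2+N_2^{\omega-1}n_1)$ as claimed. Witness reconstruction is done by storing, for each reached boundary node, a back-pointer to the edge (and originating node) through which it was first reached — each of \cref{lem:pref=suf} and \cref{lem:extensions} can report, for every newly reached target, one such predecessor within its stated time bound (the AP reduction already identifies a matching string $S\in T_1[i]$, and the suffix-tree DFS identifies the relevant ancestor node) — and then tracing back from $(n_1+1,n_2+1)$ to $(1,1)$, reading off the concatenation of labels along the way; this traversal touches $\cO(n_1+n_2)$ boundary crossings and $\cO(N_1+N_2)$ label symbols, well within the budget.

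The main obstacle I anticipate is the bookkeeping at the cell boundaries: a node of $U_{i+1,j}$ is shared between cell $(i,j)$ (which produces blue edges into it) and cell $(i+1,j)$ (which consumes it as part of its incoming $X$), and likewise $U'_{i,j+1}$ is shared between $(i,j)$ and $(i,j+1)$; moreover the endpoint of a pruned edge is an \emph{implicit} node sitting on an edge of one of the path-automata, so I must be careful that the correspondence between positions in the AP string $P=\prod_{S\in T_2[j]}\$S$ (resp.\ the suffix-tree nodes) and the implicit nodes of $I_2[j]$ is consistent across the two lemmas and across adjacent cells — this is exactly the non-bijective correspondence flagged in the proof of~\cref{lem:extensions} (the explicit node $j$ has several preimages when $|T_2[j]|\ge 2$), and it must be threaded through consistently so that reachability is neither lost nor duplicated. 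A secondary subtlety is ensuring the topological processing order is valid in the presence of $\varepsilon$-transitions, which create zero-weight edges within a layer; since the path-automata are acyclic and $\varepsilon$ only ever shortcuts \emph{forward} between explicit states, processing cells by increasing $i+j$ (with ties broken consistently) suffices, but this should be stated carefully.
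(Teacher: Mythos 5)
Your proposal follows essentially the same route as the paper: set $(1,1)$ reachable, sweep the grid cells $(i,j)$ in topological order applying \cref{lem:pref=suf} and \cref{lem:extensions} (and their symmetric versions) to the reachable sets $X\subseteq U_{i,j}$ and $X'\subseteq U'_{i,j}$, and bound the total cost by $\sum_{i,j}\ctO(N_{1,i}^{\omega-1}+N_{2,j}^{\omega-1})=\ctO(N_1^{\omega-1}n_2+N_2^{\omega-1}n_1)$, exactly as you compute. The one divergence is witness recovery: rather than assuming that the FMM-based AP routine of \cref{lem:APE} can report a predecessor for each newly reached node (which its bit-vector output does not obviously support), the paper reruns the same reachability sweep on the graph with reversed edges, restricted to nodes that were forward-reachable, thereby isolating a single path from $(1,1)$ to $(n_1+1,n_2+1)$ whose labels spell the witness.
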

\begin{proof}
It suffices to set the starting node $(1,1)$ as reachable, apply \cref{lem:pref=suf,lem:extensions}, and their symmetric versions for $U'_{i,j}$, for each value of $(i,j)\in [1, n_1+1]\times[1, n_2+1]$ in lexicographical order, with $X$ equal to the set of reachable nodes of $U_{i,j}$ (respectively of $U'_{i,j}$); and, finally, check whether node $(n_1+1,n_2+1)$ is set as reachable. Before applying \cref{lem:pref=suf}, each time we renumber the alphabet in $T_1[i] \cup T_2[j]$ to make it integer.
We bound the total time complexity of the algorithm by:
\[\sum_{i,j}\ctO(N_{1,i}^{\omega-1}+N_{2,j}^{\omega-1})=\ctO(n_2\sum_{i}N_{1,i}^{\omega-1}+n_1\sum_{j}N_{2,j}^{\omega-1})=\ctO(N_1^{\omega-1}n_2+N_2^{\omega-1}n_1).\]

If $\mathcal{L}(T_1)\cap \mathcal{L}(T_2)$ is nonempty, that is, if the node $(n_1+1,n_2+1)$ is set as reachable from node $(1,1)$, then we can additionally output a witness of the intersection -- a single string from $\mathcal{L}(T_1)\cap \mathcal{L}(T_2)$ -- within the same time complexity.
To do that we mimic the algorithm on the graph with reversed edges. This time, however, we do not mark all of the reachable nodes; we rather choose a single one that was also reachable from $(1,1)$ in the forward direction.
This way, the marked nodes form a single path from $(1,1)$ to $(n_1+1,n_2+1)$. The witness is obtained by reading the labels on the edges of this path.
\end{proof}

Observe that if $n_2=1$, that is, $T_2$ is simply a set of standard strings, no node in $U'_{i,1}$ other than $(i,1)\in U_{i,1}$ is reachable. Indeed, nodes $(i,1)$ can be reached from $(1,1)$ through $\varepsilon$-transitions from $T_1$, but reaching other nodes would require reading a letter, that is, also a change in the state of $T_2$, and the only explicit state other than $1$ in $T_2$ is $2$ (and $(i,2)\in U_{i,2}$).
Due to this, the symmetric version of \cref{lem:extensions} never needs to be used to compute transitions between $U'_{i,1}$ and $U'_{i+1,2}$. This allows for a more efficient solution in this case.

The same observation improves the time complexity of the algorithms in \cref{thm:NFA-EDSI,the:ST} in the case that $n_2=1$. It suffices to consider edges outgoing from nodes $(x,y)$ in the intersection graph such that $x$ is explicit in $A_1$; the number of such edges is $\cO(N_2m_1)$.

\begin{corollary}\label{cor:one-standard}
 If $n_2=1$ then the running time in \cref{thm:NFA-EDSI,the:ST} is $\cO(N_1+N_2m_1)$ and the running time in
 \cref{thm:omega-1 intersection} is $\ctO(N_1+N_2^{\omega -1}n_1)$.
\end{corollary}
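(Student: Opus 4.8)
The plan is to convert the structural observation preceding \cref{cor:one-standard} into a bound on the work actually needed, and then re-run the algorithms of \cref{thm:NFA-EDSI,the:ST,thm:omega-1 intersection} while discarding the steps that become redundant when $n_2=1$. Since $n_2=1$, the graph $A_2$ has exactly two explicit states, the start state $1$ and the accepting state $n_2+1=2$; all of its edges (labeled by the strings of $T_2[1]$) run from $1$ to $2$, and its implicit nodes are exactly those of $I_2[1]$. I would first record two facts about the intersection graph $G$ built as in \cref{lem:NFA-intersection}. \emph{(i)} A node $(x,1)$ with $x$ an implicit node of $A_1$ is unreachable from $(1,1)$: reaching an implicit node of $A_1$ requires an incoming edge reading at least one letter, but from state $1$ of $A_2$ every letter-edge leaves state $1$ (to an implicit node of $A_2$ or to state $2$) and no $\varepsilon$-edge enters an implicit node, so the $A_2$-component cannot remain $1$. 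Equivalently, among $U'_{i,1}$ only $(i,1)$ is reachable, and $(i,1)\in U_{i,1}$. \emph{(ii)} A node $(x,2)$ with $x$ implicit in $A_1$ has no outgoing edge in $G$ at all (state $2$ of $A_2$ has no outgoing transition and an implicit node of $A_1$ has no outgoing $\varepsilon$-edge), hence it is a sink and irrelevant to the reachability of $(n_1+1,2)$. Together \emph{(i)}--\emph{(ii)} show that it suffices to compute the edges of $G$ that leave a node whose $A_1$-component is explicit, together with the $\cO(n_1)$ trivial $\varepsilon$-edges $(i,2)\to(i+1,2)$; the former number at most the $E_1E_2^u=m_1N_2$ summand in the proof of \cref{lem:NFA-intersection} (pairs with an explicit first component), so there are $\cO(N_2m_1)$ of them.

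For \cref{thm:NFA-EDSI,the:ST} I would invoke the batched construction of \cref{lem:suffix tree graph construction} only for the sets $U_{i,1}$, $i\in[1,n_1+1]$ (whose $A_1$-component is explicit throughout, so these emit exactly the required edges apart from the $\varepsilon$-edges), while skipping every $U'_{i,1}$ -- its only reachable node $(i,1)$ lies in $U_{i,1}$, so all its outgoing edges are produced when $U_{i,1}$ is processed -- and every $U'_{i,2}$, whose non-$\varepsilon$ out-edges do not exist by \emph{(ii)}. The batch for $U_{i,1}$ runs in $\cO(N_{1,i}+N_{2,1}+K_{i,1})=\cO(N_{1,i}+N_2+K_{i,1})$ time, and one global radix sort handles all alphabet renumberings in $\cO(N_1+n_1N_2)$ time. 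Summing over $i$, and using $\sum_i N_{1,i}=N_1$, $n_1+1=\cO(m_1)$ (since $n_1\le m_1$), and $\sum_i K_{i,1}=\cO(N_2m_1)$, gives $\cO(N_1+N_2m_1)$ total. A YES/NO answer and, if required, a witness follow from a linear-time path search from $(1,1)$ to $(n_1+1,2)$ in this graph, whose size is $\cO(N_1+N_2n_1+N_2m_1)=\cO(N_1+N_2m_1)$.

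For \cref{thm:omega-1 intersection} I would again process only the sets $U_{i,1}$, for $i\in[1,n_1+1]$ in increasing order, maintaining the reachable subset $X\subseteq U_{i,1}$ and applying \cref{lem:pref=suf} (in $\cO(N_{1,i}+N_2)$ time, giving the edges into $U'_{i,2}\cup\{(i+1,2)\}$) and \cref{lem:extensions} (in $\ctO(N_{1,i}+N_2^{\omega-1})$ time via \cref{lem:APE}, giving the edges into $U_{i+1,1}\cup\{(i+1,2)\}$), then propagating $\varepsilon$-reachability through the $(\cdot,2)$ nodes in $\cO(n_1)$ total time. The decisive point is that the symmetric version of \cref{lem:extensions} is never invoked: the only index at which it could act is $j=1$, i.e.\ on a set $U'_{i,1}$, which we skip by \emph{(i)}, and for $j\ge2$ there is no set $U'_{i,j+1}$ for it to target; this is exactly what removes the $\sum_i N_{1,i}^{\omega-1}\le N_1^{\omega-1}$ contribution paid by the general algorithm. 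Summing the remaining per-$i$ costs, with $\sum_i N_{1,i}=N_1$ and $N_2\le N_2^{\omega-1}$ (as $\omega\ge2$), yields $\ctO(N_1+n_1N_2^{\omega-1})$; a witness is reported by the same reverse-traversal argument as in \cref{thm:omega-1 intersection}.

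The part I expect to require the most care is \emph{(i)}--\emph{(ii)} together with the claim that skipping the sets $U'_{i,1}$ loses nothing: I would make explicit the rerouting showing that every $(1,1)$-to-$(n_1+1,2)$ path in $G$ may be assumed to use only nodes with explicit $A_1$-component until it first reaches state $2$ of $A_2$ and only $\varepsilon$-edges of $A_1$ thereafter, and I would double-check that every edge out of $(i,1)$ -- in particular those into $U'_{i,2}$ and the edge to $(i+1,2)$ -- is indeed produced by \cref{lem:pref=suf} (respectively \cref{lem:suffix tree graph construction}) applied to $U_{i,1}$, which contains $(i,1)$. The remaining steps are routine summations of the per-$i$ costs.
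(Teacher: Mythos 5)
Your proposal is correct and follows essentially the same route as the paper: the paper's justification is precisely the observation that when $n_2=1$ the only reachable node of $U'_{i,1}$ is $(i,1)\in U_{i,1}$, so the symmetric version of \cref{lem:extensions} is never invoked and only the $\cO(N_2m_1)$ edges leaving nodes with an explicit $A_1$-component need to be generated. Your write-up merely makes the paper's terse argument explicit (adding the sink observation \emph{(ii)} and the per-$i$ cost summations), with no substantive difference in method.
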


This observation will be useful in case of the generalized versions of \EDSI showed in \cref{sec:DEDSM,sec:AEDSI}, where we can compare the running time of our algorithms with the running time of the existing solutions solving special cases of those \EDSI generalizations.

\section{Acronym Generation}\label{sec:AG}

In this section, we study a problem on standard strings. Given a sequence $P = P_1,\ldots,P_n$ of $n$ strings we define an \emph{acronym} of $P$ as a string $A=A_1\cdots A_n$, where $A_i$ is a (possibly empty) prefix of $P_i$, $i\in[1,n]$. We next formalize the \textsc{Acronym Generation} problem.

\defproblem{Acronym Generation (AG)}{A set $D$ of $k$ strings of total length $K$ and a sequence $P = P_1,\ldots,P_n$ of $n$ strings of total length $N$.}{YES if some acronym of $P$ is an element of $D$, NO otherwise.}

The AG problem is underlying real-world information systems (e.g., see \url{https://acronymify.com/} or \url{https://acronym-generator.com/}) and existing approaches rely on brute-force algorithms or heuristics to address different variants of the problem~\cite{DBLP:journals/amai/JacobsIW20,DBLP:conf/acl-louhi/KirchhoffT16,DBLP:conf/aaai/KubalN21,DBLP:journals/bmcbi/KuoLLH09,DBLP:journals/isci/LiuLH17,DBLP:conf/psb/SchwartzH03,DBLP:journals/ijdar/TaghvaG99,DBLP:conf/coling/VeysehDTN20}. These algorithms usually accept a sequence $P$ of $n\leq n_{\max}$ strings, for some small integer $n_{\max}$, which highlights the lack of efficient exact algorithms for generating acronyms. Here we show an exact polynomial-time algorithm to solve AG for any $n$.

We can encode AG by means of \EDSI and modify the developed methods. 
The AG problem reduces to \EDSI in which $T_1[i]$, $i\in[1,n]$, is the set of all prefixes of $P_i$ and $T_2[1]=D$.
We have $N_1 \le N^2$, $m_1=N$, $n_1=n$, $N_2=K$, $m_2=k$, and $n_2=1$, so by \cref{cor:one-standard}, we obtain solutions to the AG problem working in $\cO(N^2+KN)$ and $\ctO(N^2+K^{\omega -1}n)$ time, respectively.

Since, however, all elements of set $T_1[i]$ are prefixes of a single string ($P_i$), we can obtain a more efficient graph representation of $T_1$ by joining nodes $i$ and $i+1$ with a single path labeled with $P_i$, with an additional $\varepsilon$ edge between every (implicit) node of the path and node $i+1$. As the size of the graph for $T_1$ is smaller ($\cO(N)$ nodes and edges), by using \cref{lem:NFA-intersection} we obtain an $\cO(kN+KN)=\cO(KN)$-time algorithm for solving the AG problem.

The considered ED strings have additional strong properties. $T_1[i]$'s are not just sets of prefixes of single strings, but sets of all their prefixes, while the length $n_2$ of $T_2$ is equal to $1$.
By employing these two properties we obtain the following improved result.
\begin{figure}
    \centering
    \includegraphics[width=0.45\textwidth]{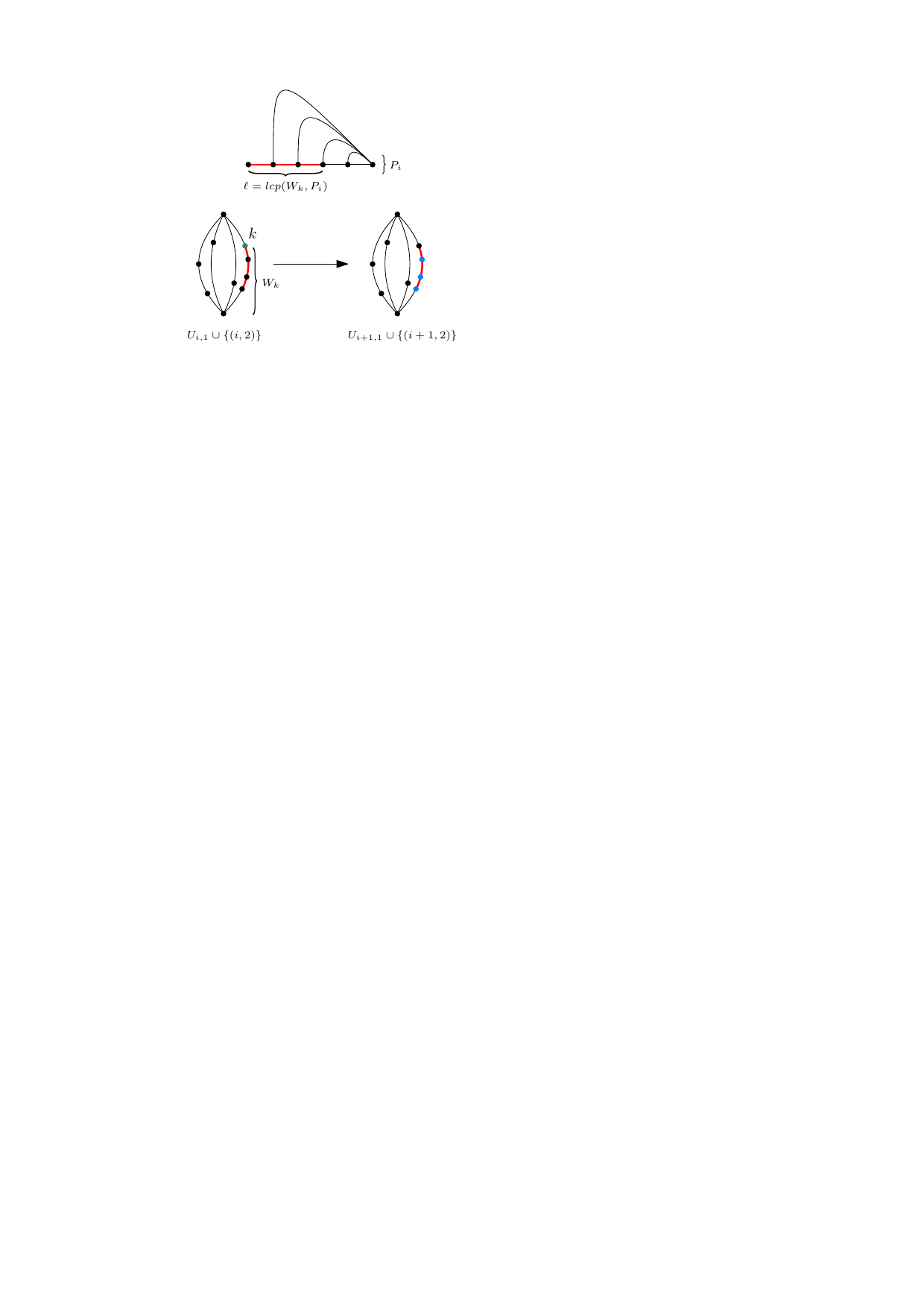}
    \caption{For the AG problem, the automaton for the prefixes of $P_i$  can be represented as the one on top of the figure. The sets $U_{i,1}\cup\{(i,2)\}$ and $U_{i+1,1}\cup\{(i+1,2)\}$ can be represented as copies of the path-automaton of $D$, as in the general case. A green node $k$ in $U_{i,1}\cup\{(i,2)\}$ is specified. The red segments form a path reading the LCP string between $W_k$ and $P_i$. The nodes that are reachable from $k$ in $U_{i+1,1}\cup\{(i+1,2)\}$ are highlighted in blue.}
    \label{fig:AG extension}
\end{figure}
\begin{theorem}\label{thm:AG}
AG can be solved in $\cO(nK+N)$ time.
\end{theorem}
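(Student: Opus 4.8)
The plan is to build the intersection graph of $T_1$ and $T_2$ more carefully, exploiting both special properties: that $T_1[i]$ is the set of \emph{all} prefixes of a single string $P_i$, and that $n_2=1$ so $T_2$ is a single set $D$ of standard strings, whose path-automaton $A_2$ is a trie-like structure with $\cO(K)$ nodes. As in \cref{sec:MM}, the explicit nodes of the intersection graph lie on a grid; since $n_2=1$, by the observation preceding \cref{cor:one-standard} only the horizontal lines $U_{i,1}\cup\{(i,2)\}$ matter, each a copy of the (uncompacted) path-automaton of $D$, which has $\cO(K)$ nodes. So the total number of relevant explicit nodes is $\cO(nK)$, and the goal is to propagate reachability from layer $i$ to layer $i+1$ in $\cO(K + |T_1[i]|) = \cO(K + |P_i|)$ time, summing to $\cO(nK + N)$.

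The key step is processing one layer. Suppose $X \subseteq U_{i,1}\cup\{(i,2)\}$ is the set of reachable nodes at layer $i$; each such node corresponds to a position in some string of $D$ (the explicit node $j{=}1$ of $A_2$ corresponds to the root, i.e.\ to starting a fresh string of $D$, and implicit nodes correspond to proper prefixes of strings in $D$). First I would build, once and for all, the generalized suffix tree of $D$ together with a structure answering LCP queries between any suffix of $P_i$ and any string/suffix arising from $D$; globally this costs $\cO(K + N)$. For a reachable node $k$ at layer $i$, let $W_k$ be the suffix of the corresponding string of $D$ that remains to be read starting at $k$ (inspect \cref{fig:AG extension}). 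Reading a prefix $A_i$ of $P_i$ from node $k$ advances along $A_2$; the endpoint at layer $i+1$ is governed by $\textsf{LCP}(W_k, P_i)$ exactly as in \cref{lem:suffix tree graph construction,lem:pref=suf,lem:extensions}: if $P_i$ is consumed first we land at an implicit or explicit node of the same string of $D$; if $W_k$ is consumed first (i.e.\ a full string of $D$ was matched as a prefix of $A_i$, and we may continue with a prefix of the remaining $P_i[\textrm{len}+1\dd]$) we may spawn into the root and must recurse. Because $T_1[i]$ contains \emph{all} prefixes of $P_i$, every prefix of the matched path is simultaneously legal, so from a single reachable $k$ we mark the entire initial segment of length $\textsf{LCP}(W_k,P_i)$ along the $W_k$-path as reachable at layer $i+1$ — this is the red path in \cref{fig:AG extension} and its blue marked nodes. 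The crucial efficiency gain is that we do not walk these paths node by node per source: instead we overlay all the reachable sources onto the suffix tree of $D$, and then a single DFS traversal of the suffix tree — where each node of $D$'s trie also carries the information "does $P_i$ branch here, and how far does it continue to match" via a walk of $P_i$ down the suffix tree costing $\cO(|P_i|)$ — lets us compute, for every node of $A_2$ at layer $i+1$, whether it is reachable, in $\cO(K + |P_i|)$ total time. The "spawn into the root" transitions (full strings of $D$ matched) can be handled by first unioning those contributions into the root marker before the traversal, and noting the spawned continuation is again a prefix-of-$P_i$ problem, so one traversal suffices.

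The main obstacle I anticipate is handling the layer transition in truly linear time per layer when a single string $P_i$ can match as a full dictionary word and then the \emph{same} $P_i$ continues matching further dictionary words — i.e.\ the transitions that in \cref{lem:extensions} needed fast matrix multiplication in the general case. Here, though, the bit vector $W$ has a very special structure: the "source" positions at layer $i$, when restricted to proper prefixes of a fixed string $S \in D$, correspond to a set of positions along one path, and the string being matched ($P_i$) is fixed rather than ranging over a set $\mathcal{S}$ of strings. So the AP instance degenerates to: given marks on positions of $S$ and the single pattern $P_i$, find where $P_i$ (as a substring occurrence anchored at a marked position) lands — which is just a batch of $\textsf{LCP}$ queries $\textsf{LCP}(S[p\dd], P_i)$, one per marked $p$, totalling $\cO(|S|)$ per string and $\cO(K)$ per layer, no FMM needed. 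I would make this precise by a direct argument (each marked position contributes one $\textsf{LCP}$ query; overlapping contributions along $S$ can be merged since all prefixes of the matched stretch are simultaneously valid), bypassing \cref{lem:APE} entirely. Assembling the layers, we do $n$ transitions at $\cO(K + |P_i|)$ each, plus $\cO(K + N)$ preprocessing for the suffix tree and LCP structure over $D$ and the $P_i$'s, and a final check whether node $(n{+}1, 2)$ (the accepting state, i.e.\ some string of $D$ fully read) is reachable — giving $\cO(nK + N)$ overall, with a witness acronym recoverable by the usual reverse-direction pass along marked nodes.
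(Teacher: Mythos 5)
Your layer-by-layer scheme is, at its core, the paper's own proof: with $n_2=1$ only the sets $U_{i,1}\cup\{(i,2)\}$ matter, each layer is a copy of the path-automaton of $D$ with $\cO(K)$ nodes, and the FMM step of \cref{lem:extensions} collapses because the edges leaving a reachable node $k$ land exactly on the interval $[k,k+\ell]$ with $\ell=\LCP(P_i,W_k)$ (all prefixes of $P_i$ being available in $T_1[i]$), so one \LCP query per reachable node plus a line sweep to union the resulting intervals costs $\cO(K+|P_i|)$ per layer and $\cO(nK+N)$ overall. All of that is correct and matches the paper.

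The genuine flaw is the ``spawn into the root and recurse'' step. In the reduction we have $T_2[1]=D$ with $n_2=1$, so $\L(T_2)=D$ and the path-automaton $A_2$ has exactly two explicit states, the starting and the accepting one; the accepting state has no outgoing transitions. Hence when $W_k$ is fully consumed the only useful node reached is $(i+1,2)$ (the strictly-longer prefixes of $P_i$ lead to dead-end pairs in $U'_{i,2}$ whose $A_1$-component still has letters to read), and there is no edge back to the root: the transitions you add would accept concatenations of several dictionary words pieced from non-prefix substrings of the $P_i$'s, which is not the AG language. For the decision version you are saved by an accident --- whenever such a spawn would fire, some reachable $(i,k)$ already satisfies ``$W_k$ is a prefix of $P_i$'', so $(i+1,2)$ and hence $(n+1,2)$ is reachable via $\varepsilon$-edges and the answer is YES regardless --- but the step misdescribes the intersection graph, it would break the companion task of reporting exactly which strings of $D$ are acronyms, and your claim that ``one traversal suffices'' is unsupported, since each spawn point asks you to re-match a different proper suffix of $P_i$ down the trie, which is not obviously $\cO(K+|P_i|)$ per layer. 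Deleting the spawn mechanism entirely, and simply recording reachability of $(i+1,2)$ when $\ell=|W_k|$, turns your argument into the paper's proof.
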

\begin{proof}
The algorithm of \cref{thm:omega-1 intersection} is based on finding out which elements of sets $U_{i,j}, U'_{i,j}$ are reachable. However, since $n_2=1$, the sets $U'_{i,j}$ are trivialized. As in \cref{cor:one-standard}, it is enough to focus on the transitions between $U_{i,j}$ and $U_{i+1,j}\cup \{(i+1,2)\}$.

In \cref{lem:extensions}, to compute the reachable nodes of $U_{i+1,j}$ knowing the reachable nodes of $U_{i,j}$, fast matrix multiplication is employed (\cref{lem:APE}), but in this special case a simpler method will be more effective. 
For $k \ne 1$, let $W_k$ be the unique string read between nodes $k$ and $2$ in the path-graph of $T_2$. The crucial observation is: the edges going out of node $(i,k)\in U_{i,1}\cup\{(i,2)\}$ for $k\neq 1$ end in nodes $(i+1,k')$ for $k'\in[k, k+\ell]$, where $\ell=\LCP(P_i,W_k)$ as the strings from $T_1[i]$ matching the prefix of $W_k$ are exactly all the prefixes of $P_i$ of length at most $\ell$ (see Figure~\ref{fig:AG extension}).

Hence, to compute the reachable subset of $U_{i+1,1}\cup\{(i+1,2)\}$, we can handle the edges going out of $(i,1)$ separately in $\cO(K+|P_i|)$ time by letter comparisons, then compute the $\LCP(P_i,W_k)$ values for all the reachable nodes $(i,k)$ either using the LCP data structure, or with the use of the generalized suffix tree of $T_2[1]=D$ in $\cO(K+|P_i|)$ total time, and finally, using a 1D line sweep approach, compute the union of the obtained intervals in $\cO(K)$ time. In the end, we answer YES if and only if node $(n+1,2)$ is reachable.

Over all values of $i$ this gives an algorithm running in $\sum_{i}\cO(K+|P_i|)=\cO(nK+N)$ total time.
\end{proof}

Furthermore, one may be allowed to choose, for each $i\in[1,n]$, a lower bound $x_i \ge 0$ on the length of the prefix of $P_i$ used in the acronym (some strings should not be completely excluded from the acronym). The only modification to the algorithm in such a generalization is replacing intervals $[k, k+\ell]$ by $[k+x_i, k+\ell]$, which does not influence the claimed complexity. 

\begin{corollary}
If the answer to the instance of the AG problem is YES, we can output all strings in $D$ which are acronyms of $P$ within $\cO(nK+N)$ time.
\end{corollary}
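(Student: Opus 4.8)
The plan is to run the layered algorithm of \cref{thm:AG} and then augment it with a symmetric backward (co-reachability) pass, so that at the end we can read off exactly which strings of $D$ lie on a complete accepting path of the intersection graph. Recall that the algorithm of \cref{thm:AG} operates on the layered graph whose $i$-th layer consists of the pairs $(i,k)$ with $k\in\{1\}\cup I_2[1]\cup\{2\}$ (a copy of the path-automaton of $D$), and that it computes, for $i=1,\ldots,n+1$ in order, the set $R_i$ of nodes of layer $i$ reachable from $(1,1)$. I would run this forward pass unchanged, and, while doing so, for every implicit node $v$ of the edge of $A_2$ labeled by a string $W\in D$, record $\mathrm{firstR}(v)=\min\{i:v\in R_i\}$ (with the convention $+\infty$ if $v$ is never reachable). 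This stamping costs $\cO(K)$ additional space and no extra asymptotic time.

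Next I would perform the backward pass, computing for $i=n+1,\ldots,1$ the set $C_i$ of nodes of layer $i$ from which $(n+1,2)$ is reachable, with $(n+1,2)$ the only co-reachable node of layer $n+1$. The key point making this efficient is the same structural fact used in \cref{thm:AG}: the forward image of a single node $(i,k)$ is exactly the contiguous interval $[k,\,k+\LCP(P_i,W_k)]$ along the path of $A_2$ (together with the explicit node $2$ when that interval reaches the end of $W_k$). Hence $(i,k)\in C_i$ if and only if this interval meets $C_{i+1}$. So, for each layer $i$, I would compute the values $\LCP(P_i,W_k)$ for \emph{all} nodes $k$ (not only the reachable ones) in $\cO(K+|P_i|)$ time via the generalized suffix tree of $D$, build a prefix-sum array of the indicator of $C_{i+1}$ along each $W$-path in $\cO(K)$ time, answer one range-emptiness query per node $k$, and handle the outgoing edges of $(i,1)$ by letter comparisons exactly as in \cref{thm:AG}. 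Summing over $i$ yields $\cO(nK+N)$ time, and during this pass I record $\mathrm{lastC}(v)=\max\{i:v\in C_i\}$ (with the convention $-\infty$ if never co-reachable) for each implicit node $v$.

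Then it remains to characterize the acronyms. Using that $A_1$ always admits the $\varepsilon$-transitions $(i,v)\to(i+1,v)$ (choosing $A_i=\varepsilon$), I would show that, for $W\in D$ with $|W|\ge 2$, the string $W$ is an acronym of $P$ if and only if some implicit node $v$ on the $A_2$-edge labeled $W$ satisfies $\mathrm{firstR}(v)\le\mathrm{lastC}(v)$. The ``if'' direction concatenates a path realizing reachability of $(\mathrm{firstR}(v),v)$, a run of $\varepsilon$-transitions up to $(\mathrm{lastC}(v),v)$, and a path realizing co-reachability of $(\mathrm{lastC}(v),v)$; the ``only if'' direction takes any accepting path spelling $W$ and inspects the unique stretch during which $A_2$ sits at the first implicit node of $W$, which is then both reachable and co-reachable at a common layer. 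The corner cases are dealt with directly: $\varepsilon\in D$ (if present) is always an acronym, and a length-one string $c\in D$ is an acronym iff $c=P_i[1]$ for some nonempty $P_i$, which is checked within the time bound after the global integer renaming of letters. Finally I would output every $W\in D$ flagged by this test; since the total output length is at most $K$, the whole procedure runs in $\cO(nK+N)$ time.

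I expect the main obstacle to be the backward pass: unlike forward reachability, the set of nodes from which a fixed node is co-reachable is not a contiguous interval along a $W$-path, so a naive symmetric implementation does not obviously meet the $\cO(nK+N)$ bound. The resolution sketched above — reformulating co-reachability of $(i,k)$ as a range-emptiness query of $C_{i+1}$ restricted to the (contiguous) forward image $[k,\,k+\LCP(P_i,W_k)]$ of $(i,k)$ — is the crux, and I would need to check carefully that the suffix-tree/LCP machinery still delivers all the required $\LCP(P_i,W_k)$ values in $\cO(K+|P_i|)$ time per layer even when $k$ ranges over all, rather than only reachable, nodes.
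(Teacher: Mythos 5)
There is a genuine gap in your characterization of which strings of $D$ are acronyms. You test, for $W\in D$ with $|W|\ge 2$, whether some \emph{implicit} node $v$ on the $W$-edge of $A_2$ satisfies $\mathrm{firstR}(v)\le\mathrm{lastC}(v)$. But an accepting path realizing $W$ need not visit any intersection-graph node of the form $(j,v)$ with $v$ implicit on $W$: if $W$ is a prefix of some $P_i$, the path can use the single edge $(i,1)\to(i+1,2)$, whose interior points are pairs of an \emph{implicit} node of $A_1$ with a node of $A_2$, hence implicit in the intersection graph and absent from every layer $R_j$, $C_j$. Concretely, take $n=1$, $P_1=\texttt{ab}$, $D=\{\texttt{ab}\}$: the string $\texttt{ab}$ is an acronym, the only implicit node $v$ of its $A_2$-edge has $\mathrm{firstR}(v)=2$ but is never co-reachable (from $(1,v)$ one must spell $\texttt{b}$ with a prefix of $P_1=\texttt{ab}$, which is impossible), so your test rejects it. Your ``only if'' argument breaks precisely here: the ``unique stretch during which $A_2$ sits at the first implicit node of $W$'' may lie strictly inside one extended transition. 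The fix is to add a second case --- report $W$ whenever some reachable $(i,1)$ has a direct edge to $(i+1,2)$ labeled $W$, i.e.\ $W$ is a prefix of $P_i$ --- which is exactly the second case in the paper's proof.

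Beyond that, your route is correct but heavier than necessary. The paper needs no backward pass at all: since $\varepsilon$ is a prefix of every $P_i$, every node $(j,2)$ reaches the accepting node $(n+1,2)$ by $\varepsilon$-edges, so co-reachability of a node on the $W$-path reduces to ``does it have an edge into column~$2$''. Hence one can, during the forward pass of Theorem~\ref{thm:AG}, simply flag $W$ whenever a reachable node $(i,k)$ with $k$ on the $W$-path (implicit, or $k=1$) has an edge to $(i+1,2)$, i.e.\ $\LCP(P_i,W_k)$ consumes the entire remaining suffix of $W$. Your forward-plus-backward scheme with the range-emptiness reformulation does appear implementable within $\cO(nK+N)$ (all values $\LCP(P_i,W_k)$ for a fixed $i$ are obtainable in $\cO(K+|P_i|)$ time from the suffix-tree machinery, whether or not $k$ is reachable), so once the missing case above is added your proof can be completed; it just replaces a one-line observation with an extra $\cO(nK)$ pass.
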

\begin{proof}
We want to find out which strings of $T_2$ represent acronyms of $P$.
In the algorithm employed by \cref{thm:AG} the reachable nodes of $U_{i,1}\cup\{(i+1,2)\}$ are found.
If there exists an edge from a reachable node $(i,k)$ for $k\notin \{1,2\}$ to $(i+1,2)$ for some $i\in[1,n_1]$, then the path of the path-graph of $T_2$ containing node $k$ represents an acronym of $P$.
If node $(i+1,2)$ is reached directly from reachable node $(i,1)$, then the whole prefix of $P_i$ used to do that is in $D$, and hence is a standalone acronym of $P$.
If for a path neither of the two cases qualifies, then it cannot be used to reach node $(n+1,2)$, and hence is not an acronym of $P$.
\end{proof}

If the generalization with minimal lengths of prefixes is applied, then the values of $i$ used here are restricted to $[i', n]$, where $i'$ is the largest value of $i$ with a restriction $x_{i}>0$: node $(i'-1,2)$ does not have an edge to node $(i',2)$, and hence does not belong to any path from $(1,1)$ to $(n+1,2$). 

Let us remark that although the main focus of real-world acronym generation systems is on the natural language parsing and interpretation of acronyms, our new algorithmic solution may inspire practical improvements in such systems or further algorithmic work.

\section{ED String Comparison Tasks}\label{sec:EDapps}
In this section, we show some applications of our techniques from~\cref{{sec:EDSI}}. In particular, we show how intersection graphs can be used to solve different ED string comparison tasks.

We consider two ED strings, $T_1$ of length $n_1$, cardinality $m_1$ and size $N_1$, and $T_2$ of length $n_2$, cardinality $m_2$ and size $N_2$. We call \emph{intersection graph} the underlying graph of an automaton representing $\mathcal{L}(T_1)\cap \mathcal{L}(T_2)$. By Corollary~\ref{cor:path-automata intersection} such an automaton (and therefore the corresponding intersection graph) can be constructed in $\cO(N_1m_2+N_2m_1)$ time. In \cref{sec:EDSI}, such a graph was used to check whether $\mathcal{L}(T_1)\cap \mathcal{L}(T_2)$ is nonempty. 

In the following, we present other applications of intersection graphs (computed by means of Corollary~\ref{cor:path-automata intersection} or \cref{lem:suffix tree graph construction}) to tackle several natural ED string comparison tasks with no additional time complexity. 
We always assume that $T_1$ and $T_2$ are over an integer alphabet $[1,(N_1+N_2)^{\cO(1)}]$.

\subsection{Shortest/Longest Witness}
Let us start with the most basic application.

\defproblem{EDSI Shortest/Longest Witness}{Two ED strings, $T_1$ of length $n_1$, cardinality $m_1$ and size $N_1$, and $T_2$ of length $n_2$, cardinality $m_2$ and size $N_2$.}{A shortest (resp.~longest) element of $\mathcal{L}(T_1)\cap \mathcal{L}(T_2)$ if it is a nonempty set, FAIL otherwise.}

\begin{fact}\label{fct:witness}
The \textsc{EDSI Shortest/Longest Witness} problem can be solved in $\cO(N_1m_2+N_2m_1)$ time by using an intersection graph of $T_1$ and $T_2$.
\end{fact}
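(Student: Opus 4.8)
The plan is to reduce the \textsc{EDSI Shortest/Longest Witness} problem to a shortest/longest path computation in the intersection graph. First I would invoke \cref{cor:path-automata intersection} to construct, in $\cO(N_1m_2+N_2m_1)$ time, the compacted NFA $B$ whose language is $\mathcal{L}(T_1)\cap\mathcal{L}(T_2)$, together with its underlying DAG $G$; recall that $G$ has $\cO(N_1n_2+N_2n_1)$ nodes and $\cO(N_1m_2+N_2m_1)$ edges, each edge carrying a label from $\Sigma^*$ of known length (possibly $0$, corresponding to an $\varepsilon$-transition). The key observation is that $\mathcal{L}(T_1)\cap\mathcal{L}(T_2)$ is nonempty exactly when the accepting node $(n_1+1,n_2+1)$ is reachable from the starting node $(1,1)$, and the length of a witness spelled along a path equals the sum of the edge-label lengths along that path. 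So a shortest (resp.\ longest) witness corresponds to a minimum-weight (resp.\ maximum-weight) $(1,1)$-to-$(n_1+1,n_2+1)$ path in $G$, where edge $e$ has nonnegative integer weight $|{\rm label}(e)|$.

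Next I would note that $G$ is a DAG: both path-automata are acyclic, and the product of two acyclic automata is acyclic (a node $(x,y)$ can only reach nodes $(x',y')$ with $x'\ge x$ and $y'\ge y$, with at least one strict inequality on any non-$\varepsilon$ edge, and $\varepsilon$-edges strictly increase one coordinate). Hence I can topologically sort $G$ in time linear in its size and then run a single-source shortest-path (resp.\ longest-path) dynamic program over the topological order: set ${\rm dist}[(1,1)]=0$, ${\rm dist}[v]=\infty$ (resp.\ $-\infty$) otherwise, and relax all outgoing edges in topological order. This runs in time linear in the number of nodes and edges of $G$, i.e.\ $\cO(N_1n_2+N_2n_1+N_1m_2+N_2m_1)=\cO(N_1m_2+N_2m_1)$, which is dominated by the construction time. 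If ${\rm dist}[(n_1+1,n_2+1)]$ remains $\infty$ (resp.\ $-\infty$) we output FAIL; otherwise we reconstruct the optimal path via stored predecessor pointers and concatenate the edge labels to produce the witness string itself, all within the same bound.

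One subtlety to address for the \emph{longest} witness: since $G$ is a DAG the longest-path value is always well defined and finite (there is no positive-weight cycle to worry about), so FAIL is returned only when the accepting node is unreachable, exactly as in the shortest case — I would state this explicitly. A second minor point is that reporting the witness string has output size equal to the witness length, which can be $\Theta(N_1+N_2)$ in the worst case and is therefore absorbed into $\cO(N_1m_2+N_2m_1)$. I expect the only mild obstacle to be bookkeeping the edge-label lengths correctly when an edge of $G$ arises from pruning a longer transition of one input automaton at an implicit node (as in the proof of \cref{lem:NFA-intersection}); here the relevant length is that of the matched prefix/suffix, which is precisely $\textsf{LCP}(\ell_1,\ell_2)=\min(|\ell_1|,|\ell_2|)$ and is already available from the construction, so no extra work is needed. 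Everything else is a textbook DAG shortest/longest-path argument.
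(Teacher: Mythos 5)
Your proposal is correct and follows essentially the same route as the paper's own proof: construct the intersection graph via \cref{cor:path-automata intersection}, weight each edge by the length of its string label, and compute a shortest or longest source-to-sink path in the resulting DAG by topological sorting, reading off the witness from the labels. The extra details you supply (the acyclicity argument, the FAIL condition, and the remark that longest paths are well defined in a DAG) are all consistent with, and slightly more explicit than, the paper's argument.
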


\begin{proof}
We compute the intersection graph $G$ as the underlying graph of the automaton computed in Corollary~\ref{cor:path-automata intersection}. Given an edge $(k,k')$ in $G$, we assign a weight $w(k,k')$ equal to the length of its string label (the string $\varepsilon$ has length $0$). Note that, by construction, $G$ is a directed acyclic graph (DAG). Thus the problem reduces to computing the shortest or the longest path between a source and a sink of a DAG, a problem with a well-known linear-time solution that involves topological sorting.
By reading the labels on the shortest (resp.~longest) path in $\cO(N_1+N_2)$ time, we can output the shortest (resp.~longest) element of $\mathcal{L}(T_1)\cap\mathcal{L}(T_2)$.\footnote{In the case of a shortest witness, we can obtain an equally efficient algorithm by employing Dijkstra's algorithm using bucket queue since the bound $\cO(N_1+N_2)$ on the weight of the path is known beforehand.}
\end{proof}

\subsection{Counting Pairs of Matching Strings}

In the next task, we would like to compute the total number of matching pairs of strings in $\mathcal{L}(T_1)\times \mathcal{L}(T_2)$ considering multiplicities in $\mathcal{L}(T_1)$ and $\mathcal{L}(T_2)$. We assume that the multiplicity of a string $S$ in $\mathcal{L}(T_1)$ is the number of sequences $S_1\in T_1[1],\ldots, S_{n_1}\in T_1[n_1]$ such that $S_1\cdots S_{n_1}=S$. The definition for $T_2$ is analogous. 

\defproblem{EDSI Matching Pairs Count}{Two ED strings, $T_1$ of length $n_1$, cardinality $m_1$ and size $N_1$, and $T_2$ of length $n_2$, cardinality $m_2$ and size $N_2$.}{$|\{(S_1,S_2) \in \mathcal{L}(T_1)\cap\mathcal{L}(T_2)\,:\,S_1=S_2\}|$.}

Each matching pair can be represented by a pair of \emph{alignments}: the sequence of $n_1+n_2$ choices of a single production $S_{i}\in T_1[i]$ and $S'_j\in T_2[j]$ for each $i \in[1,n_1]$ and $j\in[1,n_2]$, such that the resulting standard string is the same. Such a pair of alignments can in turn be represented by a path in the intersection graph of $T_1$ and $T_2$ as the edges correspond to (parts of) productions in some $T_1[i]$ and in some $T_2[j]$.

The representation is almost unique; the only reason this does not need to be the case is when a node $(i+1,j+1)$ is reached from a node $(i,j)$ for $i\in[1,n_1], j\in [1,n_2]$ using only $\varepsilon$-edges.
In this case, the three subpaths $(i,j)\rightarrow (i+1,j+1), (i,j)\rightarrow (i,j+1)\rightarrow (i+1,j+1)$ and $(i,j)\rightarrow (i+1,j) \rightarrow (i+1,j+1)$, all correspond to the choice of productions $\varepsilon\in T_1[i]$ and $\varepsilon\in T_2[j]$, even though this choice should be counted as one.

To fix the notation, we call an $\varepsilon$-edge: \emph{vertical}, when it leads from a node $(i,k)$ to $(i+1,k)$ for some explicit node $i$ of $A_1$; \emph{horizontal}, when it leads from a node $(k,j)$ to $(k,j+1)$ for some explicit node $j$ of $A_2$; and \emph{diagonal} if it leads from a node $(i,j)$ to $(i+1,j+1)$, where both $i$ and $j$ are explicit nodes.

The problem becomes even more complicated when a few such $\varepsilon$-productions are used in a row in both ED strings (a node $(i+x,j+y)$ is reached from $(i,j)$ using only $\varepsilon$-edges), as a single alignment would correspond to all the ``down, right or diagonal'' paths in the $x\times y$ grid.
The number of such paths can be large, and even if we remove all such diagonal $\varepsilon$-edges (those can be always simulated with a single horizontal and a single vertical one), we cannot remove the horizontal or vertical ones, as those can be traversed by other paths independently. We are still left with ${x+y \choose x}$ equivalent subpaths from $(i,j)$ to $(i+x,j+y)$. In order to mitigate this problem we will restrict the usage of such subpaths of many $\varepsilon$-edges to a single, regular one.

We call a path \emph{$\varepsilon$-regular}, if it does not use diagonal $\varepsilon$-edges, and if two $\varepsilon$-edges are used sequentially, then they either have the same direction, or the second one is horizontal.

\begin{lemma}\label{lem:path correspondence}
There is a one-to-one correspondence between 
pairs of alignments that produce the same string and
$\varepsilon$-regular paths from the starting to the accepting node in the intersection graph of $T_1$ and $T_2$.
\end{lemma}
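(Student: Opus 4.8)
The plan is to set up the correspondence explicitly in both directions and check that each direction is well-defined and that the two maps are mutually inverse. First I would describe the forward map. Given a pair of alignments, i.e.\ a choice $S_1\in T_1[1],\ldots,S_{n_1}\in T_1[n_1]$ and $S'_1\in T_2[1],\ldots,S'_{n_2}\in T_2[n_2]$ with $S_1\cdots S_{n_1}=S'_1\cdots S'_{n_2}=S$, I would trace the unique walk through the uncompacted intersection automaton $B^u$ that reads $S$ while following the $S_i$'s in $A_1^u$ and the $S'_j$'s in $A_2^u$; this walk visits a sequence of explicit nodes of $B$, namely those pairs $(x,y)$ where $x$ is explicit in $A_1$ or $y$ is explicit in $A_2$, and between consecutive explicit nodes it uses exactly one extended transition of $B$ (the green/blue/diagonal edges of \cref{fig:edges}). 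The only ambiguity arises when a block of $\varepsilon$-productions is chosen in both strings simultaneously, i.e.\ when $(i+x,j+y)$ is reached from $(i,j)$ via $\varepsilon$-edges only: here I would canonicalize the path by the $\varepsilon$-regularity rule (no diagonal $\varepsilon$-edges; whenever two $\varepsilon$-edges are consecutive they share a direction or the second is horizontal), which singles out exactly one lattice monotone path in the $x\times y$ grid. I would verify this canonical choice is forced: the $\varepsilon$-regularity condition on a maximal block of consecutive $\varepsilon$-edges forbids a vertical step immediately after a horizontal one, so the block must be ``all horizontal steps, then all vertical steps'', which is the unique such monotone path, and since diagonal $\varepsilon$-edges are banned this is well-defined.

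Next I would describe the backward map. Given an $\varepsilon$-regular path from $(1,1)$ to $(n_1+1,n_2+1)$, reading the edge labels yields a string $S$; projecting the path onto its first coordinate gives, by construction of the intersection automaton, a walk in $A_1^u$ from the start to the accepting state, hence a factorization $S=S_1\cdots S_{n_1}$ with $S_i\in T_1[i]$ (the explicit nodes $i$ of $A_1$ are exactly where the first coordinate increments and these must be visited in order, so the factorization is determined), and symmetrically projecting onto the second coordinate gives $S=S'_1\cdots S'_{n_2}$ with $S'_j\in T_2[j]$. This produces a pair of alignments that produce the same string.

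Finally I would check the two maps are inverse to each other. Going alignments $\to$ path $\to$ alignments is immediate: the projections of the canonical path recover the original $S_i$'s and $S'_j$'s, since at each explicit node of $A_1$ (resp.\ $A_2$) the outgoing extended transition taken by the path is, by definition of $B$, labeled by a prefix/suffix of exactly the chosen production. For path $\to$ alignments $\to$ path I would argue that the path produced from the recovered alignments agrees with the original on every edge that is not an $\varepsilon$-edge inside a doubly-$\varepsilon$ block, and that on such a block the original path was already $\varepsilon$-regular, hence already in canonical form, so the two paths coincide. The main obstacle here is bookkeeping the $\varepsilon$-edge case cleanly: one has to be careful that horizontal and vertical $\varepsilon$-edges can also be traversed by other, non-$\varepsilon$ steps of the path, so the ``removal of diagonal $\varepsilon$-edges'' and the regularity normalization must be argued to be local to each maximal doubly-$\varepsilon$ block and not to interfere with the rest of the path; once that locality is pinned down, the bijection follows.
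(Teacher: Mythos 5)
Your proposal is correct in substance and follows essentially the same route as the paper: both arguments observe that a pair of alignments determines the multiset of explicit nodes visited (the paper phrases this via $T_1$- and $T_2$-labels on positions of the produced string, you via a walk in the uncompacted intersection automaton), identify maximal doubly-$\varepsilon$ blocks as the only source of ambiguity, and use $\varepsilon$-regularity to single out one canonical monotone path per block; surjectivity then follows by reading off the productions from the edge labels. One slip worth fixing: from the (correctly identified) forbidden pattern --- a vertical $\varepsilon$-edge immediately after a horizontal one --- the unique surviving staircase is \emph{all vertical steps, then all horizontal steps} (i.e.\ all $T_1$-labels before all $T_2$-labels, as in the paper), not ``all horizontal, then all vertical'' as you wrote; the latter is exactly the excluded shape. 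This does not damage the bijection, since all that is used is that exactly one monotone path in each $x\times y$ grid is $\varepsilon$-regular, but as stated your canonical form contradicts your own premise.
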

\begin{proof}
Consider a pair of alignments; it can be represented by the produced string together with some positions in-between letters marked with $(n_1+1)$ $T_1$-labels and $(n_2+1)$ $T_2$-labels in total specifying the beginning and ending of the productions used in $T_1[i]$ and $T_2[j]$, for $i\in[1,n_1],j\in[1,n_2]$. A single position can be marked with both types of label and even many labels of the same type (when $\varepsilon$-productions are used). Each position corresponds to a pair of states in the path-automata: those pairs of states with at least one label read from left to right form a path in the intersection graph, as the $T_i$-label shows that the state is explicit in $T_i$. If two labels of a different type appear in the same place, this corresponds to a node composed of two explicit states. When a single position contains multiple labels from both types,
the exact ordering between those labels represents the actual path in the graph. At the same time from the point of view of $T_1$ and $T_2$ separately all those orderings correspond to exactly the same pair of alignments. 
This is where $\varepsilon$-regularity plays its role and only one subpath is created (all $T_1$-labels are placed before the $T_2$-labels). This way we have defined an injection from the pairs of alignments to the paths of the intersection graph (each pair of alignments corresponds to only one $\varepsilon$-regular path).

On the other hand the labels of the edges in the path represent (the parts of) the transitions used in each automaton, and hence also a pair of alignments, thus the function is also surjective.

We have thus shown that the function relating a pair of alignments to a path is both injective and surjective, and hence a bijection (the one-to-one correspondence from the statement).
\end{proof}

The main result then follows. 

\begin{fact}
The \textsc{EDSI Matching Pairs Count} problem can be solved in $\cO(N_1m_2+N_2m_1)$ time by using an intersection graph of $T_1$ and $T_2$.
\end{fact}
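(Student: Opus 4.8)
The plan is to reduce the counting problem to a standard path-counting computation on the intersection graph, using \cref{lem:path correspondence} to guarantee correctness. First I would invoke \cref{cor:path-automata intersection} to build, in $\cO(N_1m_2+N_2m_1)$ time, an automaton for $\mathcal{L}(T_1)\cap\mathcal{L}(T_2)$ whose underlying intersection graph $G$ has $\cO(N_1n_2+N_2n_1)$ nodes and $\cO(N_1m_2+N_2m_1)$ edges. By \cref{lem:path correspondence}, the quantity we want to output is exactly the number of $\varepsilon$-regular paths from the starting node $(1,1)$ to the accepting node $(n_1+1,n_2+1)$ in $G$. So it suffices to count those paths.

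The next step is to enforce the $\varepsilon$-regularity restriction structurally, so that counting $\varepsilon$-regular paths becomes counting \emph{all} source-to-sink paths in a suitably modified DAG $G'$. Recall $G$ is acyclic, but for a clean count we want to forbid two patterns: (i) diagonal $\varepsilon$-edges, and (ii) a vertical $\varepsilon$-edge immediately following a horizontal $\varepsilon$-edge. Pattern (i) is handled by simply deleting every diagonal $\varepsilon$-edge $(i,j)\to(i+1,j+1)$; by the discussion preceding \cref{lem:path correspondence} such an edge is always simulated by a horizontal followed by a vertical $\varepsilon$-edge, so the set of $\varepsilon$-regular paths is unchanged. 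Pattern (ii) is a local constraint on consecutive edges, so I would handle it by a small state-splitting gadget: replace each node $v$ by two copies $v^{\mathrm{H}}$ and $v^{\mathrm{O}}$, where $v^{\mathrm{H}}$ records ``the last edge used to enter $v$ was a horizontal $\varepsilon$-edge''. Incoming horizontal $\varepsilon$-edges point to $v^{\mathrm{H}}$; all other incoming edges (non-$\varepsilon$ edges and vertical $\varepsilon$-edges) point to $v^{\mathrm{O}}$. From $v^{\mathrm{H}}$ we allow all outgoing edges \emph{except} vertical $\varepsilon$-edges; from $v^{\mathrm{O}}$ we allow all outgoing edges. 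The start node is the $\mathrm{O}$-copy of $(1,1)$ and the sink is obtained by merging the two copies of $(n_1+1,n_2+1)$ (or adding a dummy sink reached from both). This at most doubles the size of the graph, so $G'$ still has $\cO(N_1n_2+N_2n_1)$ nodes and $\cO(N_1m_2+N_2m_1)$ edges, and is constructed in that time. The source-to-sink paths of $G'$ are in bijection with the $\varepsilon$-regular source-to-sink paths of $G$.

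Finally, counting all source-to-sink paths in a DAG is the textbook dynamic program: topologically sort $G'$, set the count of the source to $1$, and for each node in topological order set its count to the sum of the counts of its in-neighbours (weighted by edge multiplicity, if parallel edges are stored with multiplicity). This runs in time linear in $|V(G')|+|E(G')|=\cO(N_1m_2+N_2m_1)$. The value read off at the sink is the number of $\varepsilon$-regular paths, which by \cref{lem:path correspondence} equals $|\{(S_1,S_2)\in\mathcal{L}(T_1)\cap\mathcal{L}(T_2):S_1=S_2\}|$ counted with the stated multiplicities, so the total running time is $\cO(N_1m_2+N_2m_1)$ as claimed.

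The main obstacle I anticipate is not the path-counting itself but getting the $\varepsilon$-regularity gadget exactly right: one must verify that the state-splitting neither loses nor duplicates any admissible path, in particular that deleting diagonal $\varepsilon$-edges is safe even at the boundary nodes $(i,j)$ with $i$ or $j$ equal to $1$ or $n_x+1$, and that the bijection of \cref{lem:path correspondence} composes cleanly with the gadget's bijection. A secondary point worth a sentence is that the counts can be astronomically large, so ``$\cO(N_1m_2+N_2m_1)$ time'' implicitly assumes arithmetic on the resulting (polynomially-many-bit) integers is unit cost, or else one reports the answer modulo a prime or pays a polylogarithmic overhead — this should be remarked but does not affect the stated bound under the usual word-RAM conventions.
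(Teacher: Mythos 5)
Your proposal is correct and takes essentially the same approach as the paper: both reduce the problem via \cref{lem:path correspondence} to counting $\varepsilon$-regular source-to-sink paths in the intersection graph of \cref{cor:path-automata intersection} and then run a linear-time DAG dynamic program, the only difference being that you encode the ``no vertical $\varepsilon$-edge immediately after a horizontal $\varepsilon$-edge'' constraint by splitting each node into two copies, whereas the paper keeps separate counters $S,S^h,S^v$ per node --- these are the same computation. One cosmetic slip: a diagonal $\varepsilon$-edge is simulated by a \emph{vertical} $\varepsilon$-edge followed by a \emph{horizontal} one (the opposite order is precisely what $\varepsilon$-regularity forbids), but this does not affect your argument, since $\varepsilon$-regular paths avoid diagonal $\varepsilon$-edges by definition and \cref{lem:path correspondence} already supplies the bijection you need.
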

\begin{proof}
We compute the intersection graph as the underlying graph of the automaton computed in \cref{cor:path-automata intersection} or \cref{lem:suffix tree graph construction}. By \cref{lem:path correspondence}, we are counting distinct $\varepsilon$-regular paths from the starting node to the accepting one. As previously, we can sort the nodes topologically.
If we wanted to count all the paths it would be enough to compute for each node $k$ the value $S(k)$ equal to the number of paths from the starting node to $k$. One would have $S(k)=\sum_{\{(k',k)\}\in E(G)} S(k')$, over all edges $(k',k)$ (including parallel edges with different labels) and setting $S((1,1))=1$ where $(1,1)$ is the starting node. 
This time, however, we apply slightly more complicated formulas, which count the paths ending with the $\varepsilon$-edges separately from the other ones, and among those separately depending on the direction of the $\varepsilon$-edge. For a node $k$, let $S(k)$, $S^h(k)$ and $S^v(k)$ denote the number of $\varepsilon$-regular paths from the starting node to the node $k$ that end with an edge with a non-empty label, with a horizontal $\varepsilon$-edge and with a vertical $\varepsilon$-edge, respectively.
\begin{itemize}
\item $S(k)=\sum_{\{(k',k) \text{ positive length edge}\}} S(k')+S^h(k')+S^v(k')$
\item $S^h(k)=\sum_{\{(k',k) \text{ horizontal $\varepsilon$-edge}\}} S(k')+S^h(k')+S^v(k')$
\item $S^v(k)=\sum_{\{(k',k) \text{ vertical $\varepsilon$-edge}\}} S(k')+S^v(k')$
\end{itemize}
By induction, for the accepting node $(n_1+1,n_2+1)$, the number of distinct $\varepsilon$-regular paths from the starting to the accepting node is equal to the sum of the values $S$, $S^h$ and $S^v$ computed for node $(n_1+1,n_2+1)$.
\end{proof}

\subsection{ED Matching Statistics}
Asking whether $\mathcal{L}(T_1)\cap\mathcal{L}(T_2)$ is not nonempty, as a way to tell if two ED strings have something in common, can be too restrictive in practical applications. We will thus consider two more elaborate ED string comparison tasks that consider local matches rather than a match that necessarily involves the entire ED strings from beginning to end. Both notions that we consider, \textsc{Matching Statistics} and \textsc{Longest Common Substring}, are heavily employed on standard strings for practical applications, especially in bioinformatics.

We start by extending the classic \textsc{Matching Statistics} problem~\cite{DBLP:books/cu/Gusfield1997} from the standard string setting to the ED string setting.
Although this solution has already been described in~\cite{frontiers}, we provide it also here as an intermediate step of the solution to the next comparison task.

\defproblem{ED Matching Statistics}{Two ED strings, $T_1$ of length $n_1$, cardinality $m_1$ and size $N_1$, and $T_2$ of length $n_2$, cardinality $m_2$ and size $N_2$.}{For each $i\in[1, n_1]$, the length $\textsf{MS}[i]$ of the longest prefix of a string in $\mathcal{L}(T_1[i\dd n_1])$ that is a substring of a string in $\mathcal{L}(T_2)$.}

\begin{fact}\label{fct:matching statistics}
The \textsc{ED Matching Statistics} problem can be solved in $\cO(N_1m_2+N_2m_1)$ time by using an intersection graph of $T_1$ and $T_2$.
\end{fact}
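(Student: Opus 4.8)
The plan is to build the intersection graph $G$ of $T_1$ and $T_2$ via Corollary~\ref{cor:path-automata intersection} (or~\cref{lem:suffix tree graph construction}), and then, for every $i\in[1,n_1]$, to read $\textsf{MS}[i]$ off a carefully chosen set of shortest-path computations on $G$. The underlying intuition is the same as in the standard-string matching statistics setting: a common substring of a string in $\mathcal{L}(T_1[i\dd n_1])$ and a string in $\mathcal{L}(T_2)$ corresponds to a \emph{walk} in $G$, except that here the walk need not start at the global source $(1,1)$ nor end at the global sink; it may start inside $T_1$ at the explicit node $i$ of $A_1$ paired with \emph{any} node (explicit or implicit) of $A_2$, and it may stop at any node. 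So $\textsf{MS}[i]$ is the length of the longest string that labels a walk in $G$ whose first coordinate starts at $i$ and whose second coordinate is allowed to begin and end anywhere.

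First I would recall that $G$ is a DAG (noted right after~\cref{thm:NFA-EDSI}), with every edge carrying a string label, and I would put weight $w(e)$ on edge $e$ equal to the length of its label, as in~\cref{fct:witness}. Because we want the longest string over a prefix of $\mathcal{L}(T_1[i\dd n_1])$, we must ensure the first coordinate never ``runs out'': a walk contributes to $\textsf{MS}[i]$ as long as its first coordinate stays within $[i,n_1+1]$, and contributing prefixes are exactly the prefixes of such walks. Hence $\textsf{MS}[i]$ equals the maximum weight of a path in $G$ that \emph{starts} at some node whose first coordinate is exactly $i$ (these are precisely the nodes of $\bigcup_{j} U_{i,j}$, i.e.\ $(i,k)$ with $k\in\{j\}\cup I_2[j]$) and whose second coordinate is arbitrary — i.e.\ we may start at \emph{any} node of $A_2$, explicit or implicit. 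To handle ``start anywhere in $A_2$'' cleanly, I would add a single auxiliary source $\sigma_i$ with a zero-weight edge to every node $(i,k)$ for all $k$, and then $\textsf{MS}[i]$ is the length of the heaviest path out of $\sigma_i$. Computing one such longest path in a DAG is linear in the size of $G$ via topological order, which is already $\cO(N_1m_2+N_2m_1)$; doing it separately for each $i$ would cost a factor $n_1$ too much, so the real work is to compute all $n_1$ answers in a \emph{single} linear-time sweep.

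The key step is therefore the aggregation: process the explicit nodes of $G$ in topological order and maintain, for every node $(k,k')$, the quantity $D[(k,k')]=\max$ weight of a path ending at $(k,k')$ whose first coordinate started at the explicit node $i$ of $A_1$ with $i$ equal to the largest explicit first-coordinate not exceeding the one reachable — more simply, I would observe that the first coordinate along any path is nondecreasing (the path-automaton structure of $A_1$ only moves forward), so each path has a well-defined \emph{starting block} $i$, and I can compute, for each node $v$, the best path \emph{grouped by starting block}. Concretely, define $f(v)$ to be the heaviest path ending at $v$ among all paths whose starting node lies in the ``column'' $\{(i,\cdot)\}$ for the smallest $i$ that is still $\le$ (first coordinate of $v$); by the monotonicity, $f$ satisfies $f(v)=\max\bigl(0,\ \max_{(u,v)\in E}\ f(u)+w(u,v)\bigr)$ with the base case that $f$ is reset to $0$ whenever we cross into a new explicit $T_1$-block, and $\textsf{MS}[i]$ is then the maximum of $f(v)+(\text{tail contributions})$ over all $v$ in block $i$ — the point being that this is a single pass giving all $\textsf{MS}[i]$ simultaneously. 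The main obstacle I anticipate is \textbf{getting this bookkeeping exactly right}: a path contributing to $\textsf{MS}[i]$ for block $i$ also implicitly witnesses something for every block $i' < i$ (since a prefix of $\mathcal{L}(T_1[i\dd n_1])$ need not use block $i$ first), so the reset-on-new-block idea must be replaced by a \emph{suffix-maximum} over blocks; I would handle this by computing, for each $i$ from $n_1$ down to $1$, $\textsf{MS}[i]=\max(\,\text{best path starting exactly in block }i,\ \textsf{MS}[i+1]\,)$ after the forward sweep has produced the ``best path starting exactly in block $i$'' values. Finally I would note that reporting the witnessing substrings, if desired, is done by the usual parent-pointer trick along the DAG, all within $\cO(N_1+N_2)$ extra time, so the total running time is dominated by constructing $G$, namely $\cO(N_1m_2+N_2m_1)$, and correctness reduces to the (routine) claim that walks in $G$ with nondecreasing first coordinate are in bijection with pairs (prefix of a string in $\mathcal{L}(T_1[i\dd n_1])$, occurrence of that string inside a string in $\mathcal{L}(T_2)$), which follows from the construction of the path-automata and of the intersection in~\cref{lem:NFA-intersection}.
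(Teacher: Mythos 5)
Your high-level idea---longest weighted paths in the DAG $G$ starting from the nodes whose first coordinate is the explicit node $i$ of $A_1$, with the second coordinate free to start anywhere---matches the paper's, which simply computes $M(k)=\max_{k'}\bigl(M(k')+w(k,k')\bigr)$ in reverse topological order and sets $\textsf{MS}[i]=\max_v M((i,v))$; that backward pass already yields all $n_1$ answers in one linear sweep, so the forward ``grouped by starting block'' bookkeeping you wrestle with is unnecessary. The genuine gap is elsewhere: you build $G$ exactly as in \cref{cor:path-automata intersection}, where an edge is created out of a pair for labels $\ell_1,\ell_2$ only when $\LCP(\ell_1,\ell_2)=\min(|\ell_1|,|\ell_2|)$. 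But the longest matching prefix relevant to $\textsf{MS}[i]$ may end strictly inside \emph{both} a string of some $T_1[i']$ and a string of some $T_2[j]$, i.e., at a pair of implicit states, which the pruned intersection graph does not record. For instance, with $T_1[i]=\{\texttt{abc}\}$ and $T_2[j]=\{\texttt{abd}\}$ the correct contribution is $2$ (the match \texttt{ab}), yet $(i,j)$ has no outgoing edge in $G$, so your computation would return $0$. The paper fixes this by working with the \emph{unpruned} automaton: whenever $0<\LCP(\ell_1,\ell_2)<\min(|\ell_1|,|\ell_2|)$ it still adds an edge of weight $\LCP(\ell_1,\ell_2)$ ending at a pair of implicit states (a sink of the DP); the number of transition-pair checks, and hence the $\cO(N_1m_2+N_2m_1)$ bound, is unchanged.

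A second, smaller error: your post-processing rule $\textsf{MS}[i]=\max(\text{best path starting in block }i,\ \textsf{MS}[i+1])$ is false, because $\textsf{MS}$ is not monotone --- a prefix of a string in $\L(T_1[i+1\dd n_1])$ need not be a prefix of any string in $\L(T_1[i\dd n_1])$. Take $T_1=\{\texttt{a}\}\cdot\{\texttt{b}\}$ and $T_2=\{\texttt{b}\}$: then $\textsf{MS}[2]=1$ but $\textsf{MS}[1]=0$. Every prefix contributing to $\textsf{MS}[i]$ begins with a (possibly empty) prefix of a string in $T_1[i]$, so the paths starting at first coordinate exactly $i$ already account for everything, and no suffix-maximum over blocks is needed.
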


\begin{proof}
This time we will use a slightly augmented version of the intersection graph coming from the unpruned intersection automaton. Namely, we construct the automaton as in Corollary~\ref{cor:path-automata intersection}, but do not remove the unreachable parts or the ``partial transitions''. That is, when we process an explicit state corresponding to a pair $(u,v)$ of states in the path-automata $A_1$ and $A_2$, and a pair $(s,t)$ of transitions going out of $u$ and $v$, we construct the corresponding transition to the pair of states that can be reached through $s$ and $t$, even if this transition finishes in a pair of implicit states, namely when $0<\LCP(\ell_1,\ell_2)<\min(|\ell_1|,|\ell_2|)$, where $\ell_1$ and $\ell_2$ are the respective labels of $s$ and $t$. Even in that case, the number of transition pair checks remains the same, and therefore the total size of the constructed underlying graph $G$ stays $\cO(N_1m_2+N_2m_1)$.

Once again we assign to each edge the weight $w$ storing the length of its string label and process the nodes in the reversed topological order to compute for each node $k$ the value $M(k)$ equal to the length of the longest path from $k$ in $A_1$ that matches a path in $A_2$; we have $M(k)=\max_{k'} M(k')+w(k,k')$ where $k'$ iterates over all successors of $k$. One has $M(k)=0$ for the nodes that do not have successors (for example the accepting node or nodes corresponding to a pair of implicit states).

By construction, we have $M((i,v))=\ell$, for an explicit state $i$ of $A_1$ and a state $v$ of $A_2$, if and only if $\ell$ is equal to the maximal LCP between a pair $(S_1,S_2)$, where $S_1$ a string in $\mathcal{L}(T_1[i\dd n])$ and $S_2$ is a string read starting at (explicit or implicit) state $v$ in $A_2$. 

For every explicit state $i$ of $A_1$ we can compute $\textsf{MS}[i]= \max_v M((i,v))$ over all (explicit or implicit) states $v$ of $A_2$ to obtain the output.
\end{proof}

An example of this construction is shown in Figure~\ref{fig:G4ms}.

	\begin{figure}
		\centering
		\includegraphics[width=0.8\textwidth,page=5]{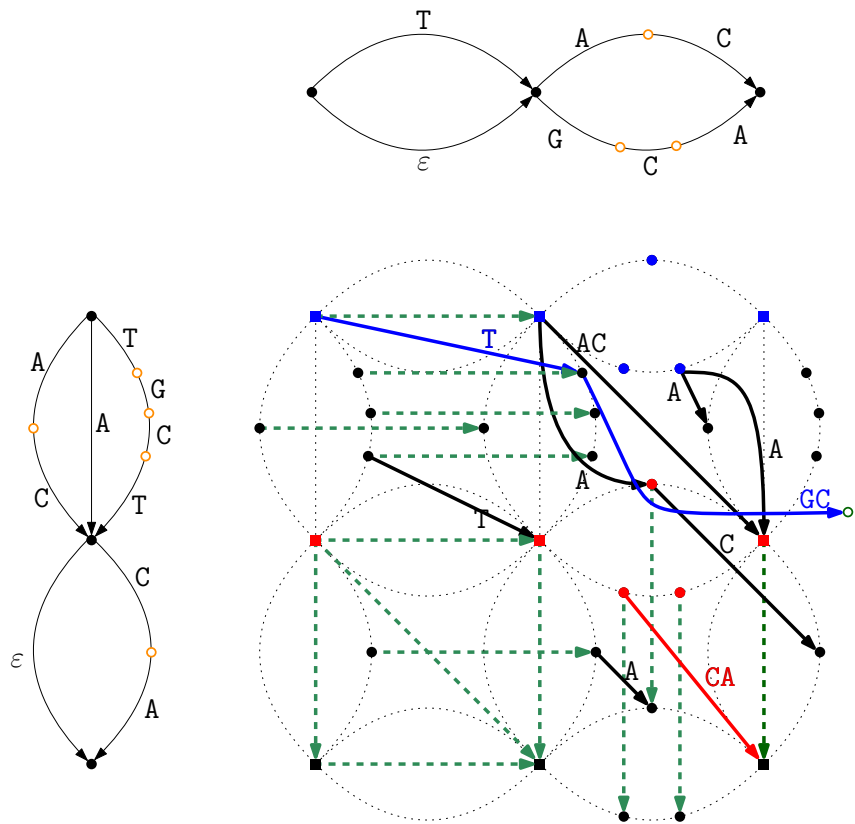}
		\caption{Matching statistics on the intersection graph of $T_1$ and $T_2$ as defined in Figure~\ref{fig:auto}. To simplify the understanding, we also draw $A_1$ and $A_2$ on the left and on the top, respectively. Note that this time, the pairs of implicit nodes that are reachable in a single extended transition from one pair that was previously computed are added. In the figure, there is only one such extra node that we represent by a green empty circle at the right of the graph. Here we highlight paths that are relevant for computing the Matching Statistics. In order to compute $\textsf{MS}[1]$, we look at the paths starting at nodes $(i,j)$ where $i$ is the explicit state $1$ in the path-automaton of $T_1$, and return the length of the longest label of such a path. These are the paths starting in one of the blue nodes (these are the nodes that correspond to the uppermost explicit node of $A_1$ paired with any node of $A_2$, that is, they correspond to the uppermost dotted copy of $A_2$); the longest of such paths (also drawn in blue) corresponds to the string $\texttt{TGC}$ having length $3$, and therefore, $\textsf{MS}[1]=3$. 
	    For $\textsf{MS}[2]$ we do the same, but using as starting nodes those in red that correspond to the internal explicit node of $A_1$ paired with any node of $A_2$ (that is, the nodes of the middle dotted copy of $A_2$). Here the longest path is drawn in red and it spells the string $\texttt{CA}$, and therefore we set $\textsf{MS}[2]=2$.}
		\label{fig:G4ms}
	\end{figure}

\subsection{ED Longest Common Substring}
We now proceed to extending the classic \textsc{Longest Common Substring} problem~\cite{DBLP:books/cu/Gusfield1997} from the standard string setting to the ED string setting.

\defproblem{ED Longest Common Substring}{Two ED strings, $T_1$ of length $n_1$, cardinality $m_1$ and size $N_1$, and $T_2$ of length $n_2$, cardinality $m_2$ and size $N_2$.}{A longest string that occurs in a string of $\mathcal{L}(T_1)$ and a string of $\mathcal{L}(T_2)$}

In the \textsc{ED Matching Statistics} problem only strings starting in explicit nodes of $A_1$ were considered; here we want to lift this restriction.
Computing the value of $M$ for every node of $A_1$ against every node of $A_2$ would take $\Theta(N_1N_2)$ time. We are only interested in the globally maximal value of $M$. Hence, we can focus only on the computation of those values for a certain subset of those pairs.

\begin{fact}
The \textsc{ED Longest Common Substring} problem can be solved in $\cO(N_1m_2+N_2m_1)$ time by using an intersection graph of $T_1$ and $T_2$.
\end{fact}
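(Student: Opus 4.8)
\emph{Proof plan.}
The plan is to proceed as in \cref{fct:matching statistics}: build the augmented intersection graph $G$ (keeping the partial transitions and the unreachable parts), then process its nodes in reverse topological order to obtain, for every node $k$ of $G$, the value $M(k)$ -- the length of the longest string that is simultaneously a prefix of a path spelled from the first component of $k$ in $A_1$ and a prefix of a path spelled from the second component of $k$ in $A_2$. A string $W$ occurs in a string of $\L(T_1)$ and in a string of $\L(T_2)$ if and only if there are a node $a$ of $A_1$ and a node $b$ of $A_2$ such that $W$ is a prefix of a path spelled from $a$ and a prefix of a path spelled from $b$; since every node of $A_1$ and of $A_2$ lies on a path from the starting to the accepting state, the answer length equals $L:=\max_{a,b}M'(a,b)$, where $M'(a,b)$ is this ``longest common path-prefix'' (so $M'$ coincides with $M$ on the explicit nodes of $G$, i.e.\ on those pairs in which $a$ or $b$ is explicit, using the symmetry of the construction for the second coordinate).

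First I would dispose of all pairs $(a,b)$ in which at least one of $a,b$ is explicit: these are exactly the explicit nodes of $G$, so $M^{\star}:=\max\{M(k):k\text{ an explicit node of }G\}$ is already available within the $\cO(N_1m_2+N_2m_1)$ budget of the \cref{fct:matching statistics} construction (and it subsumes ED matching statistics taken from both sides). The genuinely new case is $a=u$, $b=v$ with both implicit: say $u$ is the node just before position $p$ of a string $S_1\in T_1[i]$ and $v$ the node just before position $q$ of a string $S_2\in T_2[j]$, with $i,j$ unrelated. Put $r_1=|S_1|-p+1$, $r_2=|S_2|-q+1$ and $\ell=\LCP(S_1[p\dd|S_1|],S_2[q\dd|S_2|])\le\min(r_1,r_2)$. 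If $\ell<\min(r_1,r_2)$, the match breaks strictly inside both strings, so $M'(u,v)=\ell$, which is at most the ordinary longest common substring $L_0$ between the string-sets $\bigcup_iT_1[i]$ and $\bigcup_jT_2[j]$ (computable in $\cO(N_1+N_2)$ time from one two-coloured generalized suffix tree). If $\ell=r_1\le r_2$, the $A_1$-walk leaves segment $i$ after exactly $r_1$ letters, arriving at the pair $(i+1,v'')$ with $v''\in\{j+1\}\cup I_2[j]$ -- an explicit node of $G$ -- and the match restarts there, so $M'(u,v)=r_1+M((i+1,v''))$; moreover $r_1$ is the length of a suffix of $S_1$ occurring in $S_2$ ending immediately before $v''$, hence $r_1\le\rho(i,v'')$, where $\rho(i,v'')$ denotes the maximum length of a suffix of a string in $T_1[i]$ occurring in the $S_2$-string of $v''$ ending just before $v''$. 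The case $\ell=r_2<r_1$ is symmetric and gives $M'(u,v)\le\rho'(u'',j)+M((u'',j+1))$ for the mirrored quantity $\rho'$. Since conversely each of the quantities below is realized by a concrete common substring, one gets the exact identity
\[ L=\max\bigl(M^{\star},\,L_0,\,\max_{i,v''}(\rho(i,v'')+M((i+1,v''))),\,\max_{u'',j}(\rho'(u'',j)+M((u'',j+1)))\bigr). \]

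It remains to compute all the $\rho(i,v'')$ (and, symmetrically, the $\rho'(u'',j)$) without touching all $\Theta(N_1N_2)$ node pairs. For each pair of segment indices $(i,j)$ I would build the generalized suffix tree of the reversed strings $\{\widetilde S:S\in T_1[i]\}\cup\{\widetilde S:S\in T_2[j]\}$, mark every node that is an ancestor of (or equal to) a leaf for a whole reversed $T_1[i]$-string, and in one depth-first traversal read off, for each suffix of each reversed $T_2[j]$-string, the string-depth of its deepest marked ancestor; reading back, this is exactly $\rho(i,v'')$ for the node $v''$ of segment $j$ that the suffix corresponds to (the full reversed string yielding $v''=j+1$). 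This costs $\cO(N_{1,i}+N_{2,j})$ per segment pair, hence $\cO(N_1n_2+N_2n_1)=\cO(N_1m_2+N_2m_1)$ in total, and produces $\cO(N_1n_2+N_2n_1)$ stored $\rho$- and $\rho'$-values; maximizing $\rho(i,v'')+M((i+1,v''))$ and $\rho'(u'',j)+M((u'',j+1))$ over them and combining with $M^{\star}$ and $L_0$ yields $L$ in the stated time. A witness string of length $L$ is then obtained by back-tracking along whichever object attains the maximum -- a longest path in $G$, a branch of one of the suffix trees, or the concatenation of such a suffix with such a path -- in $\cO(N_1+N_2)$ extra time.

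The hard part will be the correctness of the middle paragraph: proving that when neither endpoint is explicit the optimum is always captured either by a common substring of two individual component strings ($L_0$) or, once the shorter of the two current transitions has been fully consumed, by ``a suffix-overlap length $\rho$ plus an already-computed $M$-value at a node still carrying an explicit coordinate'', together with the bookkeeping that makes the batched computation of the $\rho$ (resp.\ $\rho'$) values telescope to $\cO(N_1n_2+N_2n_1)$ rather than the naive $\Theta(N_1N_2)$. Everything else reuses \cref{fct:matching statistics} together with standard generalized-suffix-tree and DAG-longest-path machinery.
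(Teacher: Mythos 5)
Your proposal is correct and follows essentially the same route as the paper: both decompose the optimum into (i) matches starting at a pair with at least one explicit coordinate, handled by the $M$-recursion on the augmented graph of \cref{fct:matching statistics}, (ii) matches from implicit--implicit pairs that eventually reach an explicit coordinate, handled by backward-maximal extensions (your $\rho$-tables are precisely the lengths of the paper's ``symmetric'' inclusion-wise maximal edges ending at explicit nodes, which the paper computes by the reversed-automata argument), and (iii) matches confined to a single pair of component strings, handled by an ordinary two-string longest common substring of the concatenations. Your write-up merely makes the backward computation more explicit (reversed generalized suffix trees per segment pair) instead of folding those extensions into $G$ as extra edges, which changes nothing in substance or in the $\cO(N_1m_2+N_2m_1)$ bound.
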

\begin{proof}

We start from the augmented graph $G$, as defined in the proof of \cref{fct:matching statistics}, and computed in $\cO(N_1m_2+N_2m_1)$ time. In addition to the standard edges between the explicit nodes, the graph contains edges from a pair containing at least one explicit state to a pair of implicit states, that are inclusion-wise maximal, that is, cannot be extended to obtain a longer edge. For the \textsc{ED Longest Common Substring} problem, we additionally need edges symmetric to those -- the (inclusion-wise maximal) edges starting in pairs of implicit states that end in pairs containing at least one explicit state.
    By a symmetric argument (argument for the reversed automata), there are $\cO(N_1m_2+N_2m_1)$ such edges and all can be computed together with the rest of the graph within the same time complexity. We compute the values $M(k)$ for each node $k$ as previously, and find their global maximum.
    Notice that if $M(k)$ for an implicit node $k$ associates $k$ with another implicit node $k'$, we do not need to compute the value $M(k)$. Indeed, then $(k,k')$ lies in the middle of an edge and the first node of its predecessor on the edge always has a greater value of $M$.
    
    We are left with the nodes that are not (weakly) connected to any explicit node. The isolated edges containing such nodes were not computed at all.
    Those edges correspond to strings that, in both ED strings, are fully contained in a single set. Hence, it is enough to compute the longest common substring of two strings, one being a concatenation of strings in $\bigcup_i T_1[i]\#$ and the other a concatenation of strings in $\bigcup_j T_2[j]\$$, for sentinel letters \#, \$. This can be done in $\cO(N_1+N_2)$ time.
    
    The method of obtaining the witness is the same as in the previous problems (after computing both endpoints of the optimal path using the algorithm described above).
\end{proof}

\subsection{ED Longest Common Subsequence}
Finally, we show how to extend the classic \textsc{Longest Common Subsequence} (\textsc{LCS}) problem~\cite{DBLP:books/cu/Gusfield1997} from the standard string setting to the ED string setting. We remark that, unlike the previous problems, in the standard string setting, \textsc{LCS} is not solvable in linear time: there exists a conditional lower bound saying that no algorithm with running time $\cO(n^{2-\epsilon})$, for any $\epsilon>0$, can solve \textsc{LCS} on two length-$n$ strings unless SETH fails~\cite{bringmann2015quadratic}.

\defproblem{ED Longest Common Subsequence}{Two ED strings, $T_1$ of size $N_1$, and $T_2$ of size $N_2$.}{A longest string that occurs in a string of $\mathcal{L}(T_1)$ and a string of $\mathcal{L}(T_2)$ as a subsequence.}

\begin{fact}
The \textsc{ED Longest Common Subsequence} problem can be solved in $\cO(N_1\cdot N_2)$ time by using an uncompacted intersection graph of $T_1$ and $T_2$.
\end{fact}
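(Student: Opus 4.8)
The plan is to build an uncompacted product (intersection) graph that simultaneously tracks a position inside some string of $\mathcal{L}(T_1)$ and a position inside some string of $\mathcal{L}(T_2)$, but where an edge now represents reading a \emph{common letter} into both ED strings as a subsequence, rather than a literal match of labels. Concretely, I would first replace each ED string by its uncompacted path-automaton: for $T_x$ this is an NFA on $O(N_x)$ nodes whose edges are single letters (the subdivided extended transitions) together with $\varepsilon$-edges between the explicit states (to allow empty productions and to move ``for free'' within a set). This is exactly the uncompacted version referred to in \cref{lem:NFA-intersection}, so it has $O(N_1)$ and $O(N_2)$ nodes and edges respectively.

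Next I would form the node set $Q_1^u \times Q_2^u$, of size $O(N_1 N_2)$, and put in three kinds of edges from a pair $(u,v)$: a \emph{diagonal} edge to $(u',v')$ whenever $u\to u'$ in $A_1^u$ and $v\to v'$ in $A_2^u$ are edges carrying the \emph{same} letter $a$ (this edge has weight $1$ and contributes the letter $a$ to the common subsequence); a \emph{horizontal} $\varepsilon$-edge to $(u,v')$ for every edge $v\to v'$ of $A_2^u$ (letter or $\varepsilon$), of weight $0$; and a \emph{vertical} $\varepsilon$-edge to $(u',v)$ for every edge $u\to u'$ of $A_1^u$, of weight $0$. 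The horizontal and vertical weight-$0$ edges let us \emph{skip} letters on either side, which is precisely what ``subsequence'' permits, while diagonal edges consume one matched letter on both sides. The number of edges is $O(N_1 N_2)$: diagonal edges are bounded by the number of letter-matching transition pairs, which is at most $(\sum_{\text{edges of }A_1^u})\cdot(\text{degree in }A_2^u)$ summed appropriately, i.e.\ $O(N_1 N_2)$, and the skip edges are $O(N_1 |A_2^u|)+O(|A_1^u| N_2)=O(N_1 N_2)$.

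I would then argue the correspondence: a string $S$ is a common subsequence of some $S_1\in\mathcal{L}(T_1)$ and some $S_2\in\mathcal{L}(T_2)$ if and only if there is a path in this product graph from $(1,1)$ to $(n_1{+}1,\,n_2{+}1)$ whose sequence of diagonal-edge letters spells $S$. The ``only if'' direction takes the two alignments realizing $S_1$ and $S_2$ as full productions, plus the two embeddings of $S$ as a subsequence, and interleaves them into a walk, using skip edges for the unmatched letters and diagonal edges for the matched ones; the ``if'' direction reads the two coordinate projections of the path back out as walks in $A_1^u$ and $A_2^u$, yielding $S_1,S_2$ with $S$ embedded in each. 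Since the product graph is a DAG (both $A_1^u,A_2^u$ are acyclic, so the product ordered by the pair of topological orders is acyclic), a longest such path — maximizing total weight, i.e.\ the number of diagonal edges — is computed in time linear in the size of the graph by one topological sort and a dynamic program, and the witness string is recovered by reading diagonal-edge letters along the optimal path. The total running time is $O(N_1 N_2)$, matching the claim, and (as in \cref{fct:witness}) this is essentially optimal for the ED setting since it already generalizes standard-string \textsc{LCS}, for which $O(n^{2-\epsilon})$ is ruled out under SETH.

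The main obstacle I expect is \textbf{the bijection/counting argument for the $\varepsilon$-edges} rather than the graph size or the DP: because both automata carry $\varepsilon$-edges and because a subsequence freely skips letters on both sides, many product-graph walks encode the ``same'' choice of $(S_1,S_2,S)$, so I must impose a canonical form on the walks (e.g.\ an $\varepsilon$-regularity/commutation convention analogous to the one in \cref{lem:path correspondence}, forcing a fixed order between ``skip in $T_1$'' and ``skip in $T_2$'' moves and forbidding a diagonal skip) to make the correspondence clean — but for this fact we only need \emph{existence} of a maximum-weight path, not an exact count, so it suffices to check that the canonical form never decreases the number of diagonal edges and never blocks a realizable pair; the existence direction is therefore routine once the edge set is set up correctly, and I would keep the presentation at that level.
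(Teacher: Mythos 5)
Your proposal is correct and is essentially the paper's own argument: the paper likewise uncompacts both path-automata, adds a parallel $\varepsilon$-transition to every letter transition (which is exactly your weight-$0$ skip edges, with matched-letter product transitions playing the role of your weight-$1$ diagonal edges), and then finds the longest witness in the $\cO(N_1\cdot N_2)$-size intersection by a longest-path computation on the DAG. Your closing worry about canonicalizing $\varepsilon$-moves is, as you note, unnecessary here since only existence of a maximum-weight path is needed, and the paper does not address it either.
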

\begin{proof}
Consider the uncompacted path-automata for $T_1$ and $T_2$, having respective sizes $\cO(N_1)$, $\cO(N_2)$.
For every single transition in those graphs (that is a single letter transition since they are uncompacted), we add a parallel $\varepsilon$-transition (that is, we allow to skip the letter). The sizes of the automata remain $\cO(N_1)$ and $\cO(N_2)$, and the intersection has size $\cO(N_1 \cdot N_2)$. We can then find the longest witness (\cref{fct:witness}) of this intersection in time linear in the size of the automaton, that is, in $\cO(N_1 \cdot N_2)$ time.
\end{proof}

Notice that when $T_1$ and $T_2$ are standard strings, then $n_1=m_1=n_2=m_2=1$ and $\cO(N_1\cdot N_2)$ running time matches the conditional lower bound for standard strings up to subpolynomial factors. Hence, no $N_i$ can be replaced by $m_i$ or $n_i$. In particular, an $\cO(N_1m_2+N_2m_1)$ time algorithm would refute the conditional lower bound.

\begin{proposition}\label{prp:space efficient}
In all the solutions presented in this section (as well as in the previous ones) the algorithm can be slightly modified to achieve space complexity $\cO(N_1+N_2)$ without any increase in the running time.
\end{proposition}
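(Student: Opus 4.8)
The plan is to exploit the uniform structure shared by all of these algorithms: each is, at bottom, a dynamic program over the grid-structured intersection graph of \cref{sec:MM}, whose explicit nodes split into the sets $U_{i,j}$ and $U'_{i,j}$ for $(i,j)\in[1,n_1+1]\times[1,n_2+1]$, and whose every edge leaving $U_{i,j}\cup U'_{i,j}$ ends either in $U'_{i,j+1}$, in $U_{i+1,j}$, or at the corner $(i+1,j+1)$ (for \textsc{ED Longest Common Subsequence} the same holds for the uncompacted graph after the skip-$\varepsilon$ edges are added). So rather than materialise the graph -- whose size is $\Theta(N_1m_2+N_2m_1)$ -- I would sweep over the cells $(i,j)$ in row-major order (in reverse for the backward passes of \cref{fct:matching statistics} and of the \textsc{ED Longest Common Substring} algorithm), maintaining only a ``downward'' array $D$, with $D[j]$ holding the reachability/DP data that row $i-1$ contributes to $U_{i,j}$, and a single ``rightward'' carry $R$ holding the data that cell $(i,j-1)$ contributes to $U'_{i,j}$. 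After processing cell $(i,j)$ -- using \cref{lem:pref=suf,lem:extensions} and their symmetric versions, or \cref{lem:suffix tree graph construction} -- its data is pushed into $D[j]$ (overwritten in place, since $|U_{i,j}|=|U_{i+1,j}|=1+|I_2[j]|$), into $R$, and into the corner bit of $D[j+1]$, and then discarded. Since $\sum_{j}(1+|I_2[j]|)=\cO(N_2)$ and $|U'_{i,j}|=1+|I_1[i]|=\cO(N_1)$, this frontier occupies $\cO(N_1+N_2)$ space; the per-cell work of the cited lemmas reuses $\cO(N_{1,i}+N_{2,j})$ scratch space; and the accumulated output (a reachability bit, the array $\textsf{MS}$, a single count, or a best value with a pointer to its node) also fits in $\cO(N_1+N_2)$. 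This already yields the $\cO(N_1+N_2)$-space versions of \cref{thm:NFA-EDSI,the:ST,thm:omega-1 intersection} that merely decide \EDSI, and of the counting and matching-statistics tasks.

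For the tasks whose output is a string -- \textsc{EDSI Shortest/Longest Witness}, \textsc{ED Longest Common Substring}, \textsc{ED Longest Common Subsequence} -- the sweep alone reveals only the optimal value and one endpoint of an optimal path, not the path, which is too long to keep. Here I would reconstruct the path by a Hirschberg-style divide and conquer. Cutting the grid at its middle row $i^\ast$, the \emph{only} edges crossing the cut are those entering $\bigcup_j U_{i^\ast+1,j}$, a set of total size $\cO(N_2)$ -- which is exactly the array $D$ the sweep produces when it reaches row $i^\ast+1$; symmetrically, a column cut is crossed only by edges entering $\bigcup_i U'_{i,j^\ast+1}$, of total size $\cO(N_1)$. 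So I would run the forward sweep over the upper half while recording these $\cO(N_1+N_2)$ DP values at the cut, run a symmetric backward sweep over the lower half, combine the two to locate a node of an optimal path on the cut, and recurse on the two resulting sub-rectangles; their cell counts (hence the recomputation cost) sum to at most half, so the total is a geometric series and stays within the stated running time, and the labels along the recovered path give the witness. For \textsc{ED Longest Common Subsequence} this is literally Hirschberg's algorithm on the $\cO(N_1)\times\cO(N_2)$ product graph, in $\cO(N_1N_2)$ time and $\cO(N_1+N_2)$ space.

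Two earlier algorithms need only a remark each. The unary-case routine of \cref{lem:LT} should be evaluated level by level rather than recursively: at each of its $\cO(\log n_1)$ levels the lists being combined lie over disjoint fragments of $T_1$ and have total size $\cO(N_1)$, so one level at a time costs $\cO(N_1)$ space (and likewise for $T_2$). The \textsc{AG} algorithm of \cref{thm:AG} is already a single left-to-right scan that stores only the reachable subset of $U_{i,1}\cup\{(i,2)\}$ (of size $\cO(K)$), processes one $P_i$ at a time, and uses the generalized suffix tree / LCP structure of $D$ (of size $\cO(K)$); hence it already runs in $\cO(N+K)=\cO(N_1+N_2)$ space.

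The delicate step, and the one I would spend most care on, is the witness reconstruction. It rests on two points: that a balanced cut of the grid is crossed only by edges landing in a set of $\cO(N_1+N_2)$ nodes -- which holds precisely because a single row (or column) of grid cells can itself be as large as $\Theta(N_1)$ or $\Theta(N_2)$, yet the targets of the cross-cut edges telescope to $\bigcup_j U_{i^\ast+1,j}$ (resp.\ $\bigcup_i U'_{i,j^\ast+1}$) -- and that the recursion's recomputation cost remains geometric in the cost of one pass (any residual polylogarithmic factor is in any case absorbed by the $\ctO$ in \cref{thm:omega-1 intersection}). A secondary point is that the propagation subroutines must themselves be space-efficient: \cref{lem:pref=suf} and the suffix-tree part of \cref{lem:extensions} use $\cO(N_{1,i}+N_{2,j})$ space, and the fast-matrix-multiplication step behind \cref{lem:APE} multiplies matrices of dimension $\cO(\sqrt{\cdot})$ and so also fits in $\cO(N_{1,i}+N_{2,j})$ space, which is what keeps even the FMM-based algorithm within $\cO(N_1+N_2)$ space overall.
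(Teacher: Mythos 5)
Your core argument is exactly the paper's: every one of these algorithms is a DP over the grid of cells $U_{i,j},U'_{i,j}$ processed in (reverse) lexicographic order, the recurrences only cross from a cell to its right/down/diagonal neighbours, so it suffices to keep a frontier of size $\cO(N_1+N_2)$ (one entry per $U_{\cdot,j}$ across the current row plus one rightward carry of size $\cO(N_{1,i})$), with $\cO(N_{1,i}+N_{2,j})$ scratch space per cell --- this is precisely the paper's one-paragraph proof. Where you go beyond the paper is the witness reconstruction: the paper's proof is silent on how to output the actual string for \textsc{Shortest/Longest Witness}, \textsc{LCS(ubstring/ubsequence)} without storing predecessor pointers for the whole graph, and your Hirschberg-style divide and conquer (cut the grid, record the $\cO(N_1+N_2)$ DP values on the cut from a forward and a backward sweep, recurse on the two sub-rectangles) is a legitimate way to close that gap; an alternative in the spirit of the paper's proof of \cref{thm:omega-1 intersection} is to note that any source-to-sink path visits at most $n_1+n_2+1$ cells and to peel it off one cell at a time, but your route is cleaner. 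One detail you should fix: for the geometric decay of the recomputation cost you cannot cut at the \emph{index-wise} middle row, because the per-cell costs $N_{1,i}+N_{2,j}$ (or $N_{1,i}m_{2,j}+N_{2,j}m_{1,i}$) are highly non-uniform --- a single row can carry a constant fraction of the total work --- so the cut must be taken at a weighted median of the row (or column) costs; with that change the two subproblems indeed cost at most half plus the shared boundary, and the recursion stays within the stated running times.
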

\begin{proof}
Notice that all the problems are solved through computing a certain value (reachability, $L(k),S(k),M((i,v))$) for all the explicit nodes of the intersection graph in a topological order. Each time this order is compatible with the order of sets of $T_1$ and $T_2$, and the recurrent formulas refer only to nodes from a single set $U_{i,j}$ or $U'_{i,j}$, that is, the basic computation is enclosed in a single cell of the grid (inspect \cref{fig:grid}). For a single cell of the grid the space used for computation is $\Theta(N_{1,i}+N_{2,j})$, while globally the information passed between the cells is bounded by $\cO(N_1+N_2)$ by going through the cells $(i,j)$ in a lexicographical (or reversed lexicographical) order (it is enough to store information for nodes of $U_{i,j},U'_{i,j}$ only for the next two values of $i$). Finally the size of the input as well as of all the generalized suffix trees is also bounded by $\cO(N_1+N_2)$. 
\end{proof}

\section{(Doubly) Elastic-Degenerate String Matching}\label{sec:DEDSM}

String matching (or pattern matching) on standard strings can be seen as a natural (and very useful) generalization of string equality. Since EDSI is basically a counterpart of the string equality problem (we seek for a pair of standard strings from $\L(T_1)$ and from $\L(T_2)$ that are equal), the problem of generalizing it to string matching arises naturally~\cite{DBLP:journals/iandc/IliopoulosKP21}. In particular, the ED String Matching (EDSM) problem of locating the occurrences of a pattern that is a standard string in an ED text has already been considered~\cite{DBLP:conf/cpm/AoyamaNIIBT18,bernardini_et_al:LIPIcs:2019:10597,elasticSICOMP,DBLP:conf/cpm/GrossiILPPRRVV17,DBLP:journals/iandc/IliopoulosKP21}.

For a standard string $p$ and an ED text $T$, one wants to check whether $p$ occurs in $T$, that is, if it is a substring of some string $t\in \mathcal{L}(T)$ (decision version), or find all the segments $i$ of $T$ such that an occurrence of $p$ starts within a string of the set $T[i]$; i.e., there exists a standard string $t\in\mathcal{L}(T)$ with an alignment such that $p$ occurs in $t$ starting at a position lying in the $i$th segment of the alignment. 

In this section, we consider the EDSM problem in its full generality, i.e., both the text and the pattern are ED strings. We then show how our solution instantiates to the special cases, where one of the two ED strings is a standard string. 

Let us now define the general problem in scope:

\defproblem{Doubly ED String Matching}{An ED string $T$ (called \emph{text}) and an ED string $P$ (called \emph{pattern}).}{YES if and only if there is a string $p\in\mathcal{L}(P)$ that is a substring of a string $t\in\mathcal{L}(T)$ (decision version); or all segments $i$ such that there is a string $p\in\mathcal{L}(P)$ whose occurrence in $T$ starts in $T[i]$ (reporting version).}

Through applying the solution to reversed ED strings $P$ and $T$ (we reverse the sequence of segments and the strings inside them), we can see that the reporting version can be used to find all the segments $i$ where an occurrence ends.
Further note that \textsc{Doubly ED String Matching} is equivalent to asking for the union of the results of EDSM over all the patterns from $\mathcal{L}(P)$ -- we take a binary OR of the results for the decision version.

The algorithms underlying Theorem~\ref{thm:NFA-EDSI} and Theorem~\ref{thm:omega-1 intersection} can both be extended to solve \textsc{Doubly ED String Matching}:

\begin{theorem}
    We can solve the reporting or decision version of \textsc{Doubly ED String Matching} on an ED text $T$ having length $n_T$, cardinality $m_T$ and size $N_T$, and an ED pattern $P$ having length $n_P$, cardinality $m_P$ and size $N_P$ in $\cO(N_Tm_P+N_Pm_T)$ time or in $\cO(N_T^{\omega-1}n_P+N_P^{\omega-1}n_T)$ time.
\end{theorem}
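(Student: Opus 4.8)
The plan is to reduce \textsc{Doubly ED String Matching} to \EDSI by padding the text with ``wildcard'' prefixes and suffixes, so that a substring match of $\mathcal{L}(P)$ inside $\mathcal{L}(T)$ becomes a full intersection of the two languages. Concretely, for the decision version I would build $T_1 = P$ unchanged, and build $T_2$ from $T$ by prepending a segment that generates \emph{every} suffix of a string in $\mathcal{L}(T[1\dd i])$ that can precede an occurrence, and appending a segment that generates every matching tail. This is cleanest if done locally: replace each $T[i]$ by a gadget that allows ``entering'' the matching region at an arbitrary position inside a string of $T[i]$ and ``leaving'' it at an arbitrary later position, using $\varepsilon$-transitions and the path-automaton view. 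In the path-automaton language of \cref{sec:EDSI}, the text automaton $A_T$ is augmented with an $\varepsilon$-transition from the (new, single) start state to every explicit and implicit node of $A_T$ (a match may begin anywhere), and an $\varepsilon$-transition from every node to the (new, single) accepting state (a match may end anywhere); the pattern automaton $A_P$ is left as is. Then $\mathcal{L}(P)$ has a substring occurrence in $\mathcal{L}(T)$ iff the languages of these two augmented automata intersect, which is exactly what the intersection-graph construction of \cref{cor:path-automata intersection} (for the $\cO(N_Tm_P+N_Pm_T)$ bound) resp.\ the reachability-based algorithm of \cref{thm:omega-1 intersection} (for the $\cO(N_T^{\omega-1}n_P+N_P^{\omega-1}n_T)$ bound) decides.

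For the reporting version I would not collapse the starting $\varepsilon$-edges into a single source; instead I keep, for each explicit state $i$ of $A_T$ and each implicit node on an edge leaving state $i$, a designated ``entry node'' tagged with the segment index $i$. Running the reachability computation of \cref{thm:omega-1 intersection} (or building the full intersection graph as in \cref{thm:NFA-EDSI}) from \emph{all} entry nodes simultaneously, and then checking from which entry tags a path reaches the (single) accepting state, yields exactly the set of segments $i$ in which some occurrence of a string of $\mathcal{L}(P)$ starts. The key point is that adding these $\varepsilon$-edges does not change the asymptotic size of either automaton: we add $\cO(N_T)$ nodes and $\cO(N_T)$ edges to $A_T$ and nothing to $A_P$, so the intersection graph still has $\cO(N_Tn_P+N_Pn_T)$ explicit nodes and $\cO(N_Tm_P+N_Pm_T)$ edges, and the per-cell algorithms of \cref{lem:pref=suf,lem:extensions} go through verbatim, giving the $\ctO(N_T^{\omega-1}n_P+N_P^{\omega-1}n_T)$ running time as well.

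The main technical care — and the step I expect to need the most attention — is handling occurrences of a pattern string that lie \emph{entirely inside a single set} $T[i]$, i.e.\ a string of $\mathcal{L}(P)$ that is a substring of one string $S\in T[i]$ without crossing a segment boundary, and symmetrically the case where the whole pattern $p\in\mathcal{L}(P)$ comes from a single segment $P[j]$. In the intersection-graph framework such matches correspond to paths that never touch an explicit node of $A_T$ (or of $A_P$), which is precisely the case the pruned construction of \cref{lem:NFA-intersection} discards and which the ``isolated edges'' remark after \cref{thm:omega-1 intersection} warns about. I would deal with this exactly as in the \textsc{ED Longest Common Substring} proof: separately solve the standard-string substring problem between $\bigcup_i\$ S$ over $S\in T[i]$ (with sentinels) and $\bigcup_j\#S'$ over $S'\in P[j]$ using a generalized suffix tree, in $\cO(N_T+N_P)$ time, and for the reporting version record the segment index carried by each such fully-internal occurrence. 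Combining this with the augmented-automaton computation covers all cases.

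Putting the pieces together: construct the augmented text automaton with entry tags, invoke \cref{thm:NFA-EDSI} (resp.\ \cref{thm:omega-1 intersection}) on the pair $(A_T^{\text{aug}}, A_P)$ to get the cross-segment occurrences, separately handle the single-segment occurrences via suffix trees in linear time, take the union of reported segment indices (or the OR of the two decision answers), and output. Correctness of the reduction is the observation that a string $p$ occurs in some $t\in\mathcal{L}(T)$ starting in segment $i$ if and only if $p$ is obtainable as a path in $A_T^{\text{aug}}$ that starts with the $\varepsilon$-edge out of the entry node tagged $i$ and ends with an $\varepsilon$-edge into the accepting state — which is immediate from the definition of $\mathcal{L}(T)$ and the placement of the added $\varepsilon$-edges — together with the single-segment case handled above. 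The time bounds are inherited directly from \cref{thm:NFA-EDSI,thm:omega-1 intersection} since the augmentation changes the parameters of the instance only by constant factors.
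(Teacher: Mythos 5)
Your core idea for the decision version---allow the match to start and end anywhere in $A_T$ while keeping $A_P$ anchored at its explicit start and accepting states---is exactly the paper's approach (the paper phrases it as marking every node $(k,1)$ as starting and every node $(k,n_P+1)$ as accepting on the unchanged intersection graph, rather than adding $\varepsilon$-edges to a new source/sink, which keeps the grid structure of \cref{lem:pref=suf,lem:extensions} intact). However, two of your steps have genuine problems. First, the ``fully internal occurrence'' patch is both unnecessary and incorrect. It is unnecessary because, unlike in \textsc{ED Longest Common Substring}, every match path here begins at a node $(k,1)$ and ends at a node $(k',n_P+1)$, whose second coordinates are explicit states of $A_P$; since every edge constructed in \cref{lem:NFA-intersection} leaving an explicit node of the intersection graph again ends at an explicit node, the entire path---including occurrences confined to a single string of a single $T[i]$---already lives in the pruned intersection graph, so no isolated edges are missed. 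It is incorrect because checking whether some single string $S'\in P[j]$ is a substring of some $S\in T[i]$ does not certify that a string of $\mathcal{L}(P)$ occurs: for $P=\{\texttt{a}\}\cdot\{\texttt{b}\}$ and $T=\{\texttt{xax}\}$ your suffix-tree step finds $\texttt{a}$ inside $\texttt{xax}$ and answers YES, while $\mathcal{L}(P)=\{\texttt{ab}\}$ has no occurrence. Taking the OR with this step therefore breaks the decision version.

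Second, the reporting mechanism via entry tags does not achieve the claimed bound as stated. A single multi-source reachability pass tells you \emph{whether} each node is reachable from \emph{some} entry node, not \emph{which} entry tags reach the accepting state; recovering the latter in general requires either $n_T$ separate passes or propagating tag sets, inflating the running time by a factor of up to $n_T$. The paper sidesteps this: since the accepting nodes $(k,n_P+1)$ are partitioned by the segment of $T$ containing $k$, a single reachability computation directly yields the segments where occurrences \emph{end} (namely, the segments $i$ for which some node of $\{(i,n_P+1)\}\cup U'_{i-1,n_P+1}\setminus\{(i-1,n_P+1)\}$ is reachable), and the starting segments are then obtained by rerunning the algorithm on the reversed ED strings. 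You should replace the tagging scheme with this reversal argument (or an equivalent one) to keep the $\cO(N_Tm_P+N_Pm_T)$ and $\ctO(N_T^{\omega-1}n_P+N_P^{\omega-1}n_T)$ bounds.
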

\begin{proof}
Like in \EDSI, we need to check whether an accepting state is reachable from a starting one, on the very same intersection graph. This time, however, for every state $k$ of the path-automaton of $T$, we make every node $(k,1)$ reachable (starting), and every node $(k,n_P+1)$ accepting.
That way, after computing every reachable node with one of the algorithms underlying Theorem~\ref{thm:NFA-EDSI} or Theorem~\ref{thm:omega-1 intersection}, an accepting node is reachable if and only if a string in $\mathcal{L}(P)$ occurs at any position of a string from $\mathcal{L}(T)$. The complexity is the same as before, since the size of graph remains unchanged.

For the reporting version, we observe that occurrences ending in set $T[i]$ correspond to reachable nodes in the set $\{(i,n_P+1)\}\cup U'_{i-1,n_P+1}\backslash \{(i-1,n_P+1)\}$. Hence, we simply report the list of such sets containing a reachable node, which can be done in $\cO(N_T)$ total time over all $i$. The starting positions can be reported through the use of reversed strings.
\end{proof}

In \cite[Corollary 10]{DBLP:conf/wabi/Ascone0CEGGP24} it was shown that \textsc{Doubly ED String Matching} cannot be solved in $\cO(N_PN_T^{1-\varepsilon})$ time or $\cO(N_TN_P^{1-\varepsilon})$ time, for any constant $\varepsilon>0$, unless the OV conjecture is false. This result does not contradict our result as the lengths $n_P,n_T$ can in general be large.

Notice that if the pattern $P$ is a standard string ($n_P=m_P=1$) the transitions from $U'_{i,1}$ to $U'_{i,2}$ are there solely to check if $P$ occurs in any string from $T[i]$ (which can be done in $\cO(N_P+N_T)$ time). Hence, similarly to \cref{cor:one-standard} the running time of our FMM algorithm is equal to $\ctO(n_TN_P^{\omega-1}+N_T)$.
This complexity (and design) matches the result of~\cite{elasticSICOMP}.

Symmetrically, if the text $T$ is a standard string ($n_T=m_T=1$), then the FMM solution runs in $\ctO(n_PN_T^{\omega-1}+N_P)$ time (this time like in case of \cref{cor:one-standard} the edges from $U_{1,j}$ to $U_{2,j}$ are not considered at all).

\section{Approximate \EDSI}\label{sec:AEDSI}

Another popular extension of string equality is \emph{approximate} equality, where one wants to find the distance between two strings: the minimal number of operations that modify one string into the other one, or decide whether this number is at most $k$, for a given integer $k>0$.

We denote by $d_H(S_1,S_2)$ (resp.~$d_E(S_1,S_2)$) the Hamming distance (resp.~edit distance) of two standard strings $S_1,S_2$. The problem of finding the Hamming distance of two standard strings is easily solvable in $\cO(N_1+N_2)$ time, while for edit distance the time increases to $\cO(N_1\cdot N_2)$ time in general case~\cite{DBLP:journals/jacm/WagnerF74} or $\cO(N_1+N_2+k^2)$~\cite{DBLP:journals/siamcomp/LandauMS98} for a given upper bound $k$.

Approximate EDSI gives another measure of similarity when the normal intersection turns out to be empty.

\begin{theorem}
Given a pair $T_1,T_2$ of ED strings of sizes $N_1$ and $N_2$, respectively, we can find the pair $S_1\in \L(T_1),S_2\in \L(T_2)$ minimizing the distance $d_H(S_1,S_2)$ or $d_E(S_1,S_2)$ in $\cO(N_1N_2)$ time.
\end{theorem}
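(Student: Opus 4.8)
The plan is to reduce both variants to a single-source single-sink shortest path computation on a weighted directed acyclic graph (an \emph{edit graph}) built as a product of the uncompacted path-automata of $T_1$ and $T_2$.

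First I would build the uncompacted path-automata $A_1^u$ and $A_2^u$ of $T_1$ and $T_2$; as recalled above these have $\cO(N_1)$ and $\cO(N_2)$ states and transitions respectively, each transition is labeled by a single letter or by $\varepsilon$, and each automaton is acyclic (every transition goes from a lower-numbered to a higher-numbered segment). Then I would form the product graph $G$ on the vertex set $V(A_1^u)\times V(A_2^u)$, of size $\cO(N_1N_2)$, with weighted edges encoding edit operations: for every pair consisting of a letter-transition $u\xrightarrow{a}u'$ of $A_1^u$ and a letter-transition $v\xrightarrow{b}v'$ of $A_2^u$ add a \emph{match/substitution} edge $(u,v)\to(u',v')$ of weight $0$ if $a=b$ and $1$ otherwise; for every letter-transition $u\xrightarrow{a}u'$ of $A_1^u$ and every vertex $v$ of $A_2^u$ add a \emph{deletion} edge $(u,v)\to(u',v)$ of weight $1$; symmetrically add \emph{insertion} edges of weight $1$; and for every $\varepsilon$-transition of $A_1^u$ or of $A_2^u$ add the corresponding weight-$0$ edge advancing only that automaton. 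Since $|E(A_1^u)|=\cO(N_1)$ and $|E(A_2^u)|=\cO(N_2)$, the graph $G$ has $\cO(N_1N_2)$ edges, and it is a DAG since every edge moves strictly forward in the topological order of $A_1^u$ and/or of $A_2^u$.

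The correctness step is the observation that directed paths in $G$ from $(1,1)$ (the pair of starting states) to $(n_1+1,n_2+1)$ (the pair of accepting states) correspond to aligned pairs $(S_1,S_2)$ with $S_1\in\mathcal{L}(T_1)$ and $S_2\in\mathcal{L}(T_2)$: projecting such a path onto $A_1^u$ gives an accepting path, hence a string $S_1\in\mathcal{L}(T_1)$, and likewise $S_2\in\mathcal{L}(T_2)$, while the sequence of match/deletion/insertion edges along the path is a sequence of edit operations transforming $S_1$ into $S_2$ whose cost equals the path weight; conversely every such pair together with an optimal alignment yields a path of that weight. Hence $\min_{S_1\in\mathcal{L}(T_1),\,S_2\in\mathcal{L}(T_2)} d_E(S_1,S_2)$ equals the minimum weight of a $(1,1)$-to-$(n_1+1,n_2+1)$ path in $G$, computed by one pass over $G$ in topological order in $\cO(N_1N_2)$ time; backtracking along the optimal path recovers the witnessing pair $(S_1,S_2)$. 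For the Hamming distance I would use the same construction but drop the deletion and insertion edges, keeping only match/substitution edges and $\varepsilon$-edges; then a path from $(1,1)$ to $(n_1+1,n_2+1)$ spells a pair of \emph{equal-length} strings and its weight is their Hamming distance, so the identical DAG shortest-path pass computes $\min d_H$ in $\cO(N_1N_2)$ time, returning FAIL when no such path exists (which is exactly the case when $\mathcal{L}(T_1)$ and $\mathcal{L}(T_2)$ share no string length).

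The part that needs care --- more a pitfall than a genuine obstacle --- is the handling of $\varepsilon$-transitions. One must verify that adding the weight-$0$ edges keeps $G$ acyclic (it does, because the $\varepsilon$-transitions of a path-automaton still go strictly forward), so that plain DAG relaxation suffices and no Dijkstra or Bellman--Ford is needed, and one must process the vertices in a topological order refining those of $A_1^u$ and $A_2^u$ so that mixing $\varepsilon$-moves with letter-moves out of the same explicit state is handled correctly. One should also double-check the edge count: the potentially dangerous deletion and insertion edges total only $\cO(N_1N_2)$ because they pair one automaton's $\cO(N_i)$ transitions with the other automaton's $\cO(N_j)$ vertices (not vertices with vertices), and the match/substitution edges total $\cO(N_1N_2)$ because they pair transitions with transitions; together with the $\cO(N_1+N_2)$ cost of building the automata this gives the claimed $\cO(N_1N_2)$ running time.
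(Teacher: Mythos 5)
Your proposal is correct and matches the paper's proof in all essentials: both build the uncompacted product of the two path-automata, add weight-$1$ edges for substitutions (and, for edit distance, insertions and deletions) on top of the weight-$0$ matching and $\varepsilon$-edges, and extract a minimum-weight source-to-sink path, with the same $\cO(N_1N_2)$ size and correctness argument. The only (immaterial) difference is that you exploit acyclicity to use a topological-order DAG relaxation, whereas the paper invokes Dial's implementation of Dijkstra's algorithm; both run in linear time in the size of the product graph.
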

\begin{proof}

We adapt the classic Wagner–Fischer algorithm~\cite{DBLP:journals/jacm/WagnerF74} to our case. We construct the standard (uncompacted) NFA intersection of the two path-automata for $T_1$ and $T_2$, set the weight of all its edges to $0$, and then add extra edges with weight $1$ that represent edit operations.

In case of Hamming distance, when a pair of transitions $u\xrightarrow{l_1} u'$, $v\xrightarrow{l_2} v'$ does not match ($l_1\neq l_2$) we still add the weight-$1$ edge between $(u,v)$ and $(u',v')$ ($l_1$ is substituted with $l_2$).

In case of edit distance we additionally construct weight-$1$ edges from $(u,v)$ to $(u',v)$ for every transition $u\xrightarrow{l_1} u'$ (corresponding to deletion of the letter $l_1$), and to $(u,v')$ for every transition  $v\xrightarrow{l_2} v'$ (corresponding to insertion of letter $l_2$).

Now all we need to do is to find the minimum cost path from the starting node to the accepting one (in case of Hamming distance only if such a path exists), which can be done in linear time using Dial's implementation of Dijkstra's algorithm~\cite{DBLP:journals/cacm/Dial69}.
By simply spelling the labels of the found path we obtain strings $S_1$ and $S_2$ (the difference between them is encoded through the labels of the weight-$1$ edges).
\end{proof}

Without having a threshold $k$, even when $T_1,T_2$ are standard strings ($n_1=n_2=m_1=m_2=1$), for the edit distance we cannot obtain an algorithm running in $\cO((N_1N_2)^{1-\varepsilon})$ time for a constant $\varepsilon>0$ unless SETH fails~\cite{DBLP:journals/corr/BackursI14}, or a faster algorithm through a parameterization with $n_1,n_2,m_1,m_2$.

On the other hand, if we are given a threshold $k$, we can adapt another classic algorithm, given by Landau and Vishkin~\cite{Landau1986-vm}, to obtain the following theorem.

\begin{theorem}
Given an ED string $T_1$ of cardinality $m_1$ and size $N_1$, an ED string $T_2$ of cardinality $m_2$ and size $N_2$, and an integer $k>0$, we can check whether a pair $S_1\in \L(T_1),S_2\in \L(T_2)$ with $d_H(S_1,S_2)\le k$ (resp. $d_E(S_1,S_2)\le k$) exists in $\cO(k(N_1m_2+N_2m_1))$ time (resp. in $\cO(k^2(N_1m_2+N_2m_1))$ time) and, if that is the case, return the pair with the smallest distance.
\end{theorem}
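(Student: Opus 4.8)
The plan is to combine the intersection-graph construction from \cref{the:ST} (and \cref{cor:path-automata intersection}) with the $k$-banded philosophy of the Landau--Vishkin algorithm, working on the \emph{uncompacted} intersection automaton so that single-letter edges are available for alignment operations. First I would build the intersection graph $G$ exactly as in the proof of the $\cO(N_1m_2+N_2m_1)$-time algorithm (\cref{thm:NFA-EDSI}), but keeping partial transitions into pairs of implicit nodes as in \cref{fct:matching statistics}, so that $G$ has $\cO(N_1m_2+N_2m_1)$ edges and every node carries a pair of states, one from each path-automaton. The key point is that a string $S_1\in\L(T_1)$ and a string $S_2\in\L(T_2)$ with small distance correspond to a path in $G$ that is \emph{almost} a path from the starting to the accepting node: it uses mostly matching (weight-$0$) edges, together with at most $k$ edit operations. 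As in the unbounded case, substitutions are modeled by weight-$1$ edges between $(u,v)$ and $(u',v')$ when the two outgoing single-letter transitions disagree; for edit distance we additionally add weight-$1$ ``horizontal'' and ``vertical'' edges for deletions and insertions.

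Next I would run the $k$-bounded search. Following Landau--Vishkin, instead of a full shortest-path computation we process the diagonals of the edit grid and, for each error budget $e=0,1,\dots,k$, extend along weight-$0$ (matching) edges as far as possible, then branch on one of the three unit-cost operations. On standard strings this gives $\cO(kN)$ (Hamming) and $\cO(k^2N)$ (edit) time because each of the $\cO(k)$ levels touches $\cO(k)$ diagonals and each matching extension is charged once. Here the ``text'' and ``pattern'' are replaced by the two path-automata, so the role of a single position is played by a node of $G$; a maximal matching extension from a node is precisely a maximal run of weight-$0$ edges, which — by the structure of $G$ discussed around \cref{lem:suffix tree graph construction} — stays inside the intersection automaton and can be followed in time proportional to the number of edges traversed. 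Summing over the $\cO(k)$ (resp.\ $\cO(k^2)$) levels and the $\cO(N_1m_2+N_2m_1)$ edges of $G$ gives the claimed $\cO(k(N_1m_2+N_2m_1))$ and $\cO(k^2(N_1m_2+N_2m_1))$ bounds. To recover the optimal pair once a valid path is found, I would trace back the path (as in the witness reconstruction of \cref{thm:NFA-EDSI}) and read off the labels of both components, exactly as in the unbounded-threshold theorem just proved.

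The main obstacle I expect is \emph{nondeterminism and $\varepsilon$-transitions}: unlike the two-standard-strings case, a node of $G$ can have several outgoing matching edges (one per choice of production in $T_1[i]$ or $T_2[j]$), and there are $\varepsilon$-transitions coming from empty strings in the ED strings. So the ``extend along matches until a mismatch'' primitive is not a single longest-common-extension query but a frontier-propagation over a DAG; I would need to argue that propagating the set of reachable nodes under weight-$0$ (including $\varepsilon$) edges at a fixed error level costs only $\cO(N_1m_2+N_2m_1)$ amortized per level, using the acyclicity of $G$ and a topological sweep so that each edge is relaxed $\cO(1)$ times per level (this is where the $\varepsilon$-regular bookkeeping from \cref{lem:path correspondence} or a simple visited-marking per level keeps the count under control). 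A secondary subtlety is making sure the compacted edges into implicit nodes are correctly refined to single-letter edges for the edit operations without blowing up the size beyond $\cO(N_1m_2+N_2m_1)$ — but this is handled exactly as in \cref{fct:matching statistics}, since the number of distinct (explicit or partial) transitions is already bounded there. Modulo these amortization arguments, the construction is a direct lift of Landau--Vishkin onto the intersection automaton.
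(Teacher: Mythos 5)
There is a genuine gap, and it sits exactly where you flag your ``main obstacle''. Your plan needs two things simultaneously: a graph with only $\cO(N_1m_2+N_2m_1)$ edges, and single-letter edges on which to hang the unit-cost substitution/insertion/deletion edges and along which to ``extend by matches''. These are incompatible. The $\cO(N_1m_2+N_2m_1)$ bound of \cref{cor:path-automata intersection} counts \emph{extended} (compacted) transitions; once you subdivide them into single-letter edges, the total size is the sum of the label lengths over all transition-pair checks, which is $\Theta(N_1N_2)$ in the worst case (for a fixed pair of segments $i,j$ it is already $\Theta(N_{1,i}N_{2,j})$, since every string of $T_1[i]$ is matched against every suffix position of every string of $T_2[j]$). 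This is precisely the uncompacted product automaton used in the \emph{unbounded} theorem, and it is the bound the present theorem is supposed to beat. Consequently your amortization claim --- that propagating the weight-$0$ frontier costs $\cO(N_1m_2+N_2m_1)$ per error level --- does not hold: walking maximal matching runs edge by edge costs the total \emph{length} of those runs, not their number, and in Landau--Vishkin the whole point is that each maximal extension is computed in $\cO(1)$ time by an LCE query rather than by walking. There is also no clean $\cO(k)$-diagonal band here: the product of two DAG-shaped automata does not reduce to $\cO(k)$ diagonals per error level, so the $\cO(k^2 N)$ accounting of the string case does not transfer as stated.

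The paper's proof avoids all of this by never uncompacting. It stays on the compacted intersection automaton of \cref{lem:NFA-intersection}, and for each of the $\cO(N_1m_2+N_2m_1)$ pairs of extended transitions $u\xrightarrow{l_1}u'$, $v\xrightarrow{l_2}v'$ it computes, using an LCP data structure and kangaroo jumps, all distances $k'\le k$ between $l_1$ and prefixes of $l_2$ (and symmetrically), in $\cO(k)$ time per pair for Hamming distance and $\cO(k^2)$ per pair for edit distance; each such $k'$ yields a single weight-$k'$ edge into the appropriate implicit state $v_x$. The resulting weighted graph has $\cO(N_1m_2+N_2m_1)$ edges for Hamming and $\cO(k)$ times that for edit distance, and a run of Dial's implementation of Dijkstra finishes the job. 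So the missing idea in your write-up is the replacement of the per-letter walk by $\cO(k)$/$\cO(k^2)$-time kangaroo-jump computations \emph{per compacted transition pair}; without it, your approach degrades to the $\cO(N_1N_2)$ algorithm of the preceding theorem.
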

\begin{proof}
This time we make use of the compacted intersection automaton from \cref{lem:NFA-intersection}.
In addition to standard weight-$0$ edges, which are constructed when $\LCP(l_1,l_2)=\min(|l_1|,|l_2|)$ for $l_1,l_2$ being the labels of extended transitions in the two path-automata, we also add new edges when one of the strings is at distance at most $k$ from a prefix of the other one.
More formally, for a pair of extended transitions $u\xrightarrow{l_1} u'$, $v\xrightarrow{l_2} v'$, if $l_1$ is at distance $k'$ from the length-$x$ prefix of $l_2$, we produce a weight-$k'$ edge from $(u,v)$ to $(u',v_x)$, where $v_x$ is the implicit state between $v$ and $v'$ representing the length $x$ prefix of $l_2$ (symmetrically for $l_2$ at distance at most $k$ from a prefix of $l_1$). 
All such values of $k'\le k$ can be found with the use of a standard LCP queries data structure and \emph{kangaroo jumps}~\cite{10.1145/8307.8309} in $\cO(k)$ time for Hamming distance and in $\cO(k^2)$ time for edit distance~\cite{Landau1986-vm}.

In case of Hamming distance, the total number of extended transitions is still $\cO(N_1m_2+N_2m_1)$ (the number of transition checks does not change). In case of edit distance, it can increase by a factor of $k$. Indeed, for a single pair of transitions $l_1,l_2$, we can produce up to $2k+1$ weighted transitions.

After that, once again, we can use Dial's implementation of Dijkstra's algorithm~\cite{DBLP:journals/cacm/Dial69} to find the smallest weight path in this graph, obtaining the claimed result.
\end{proof}

One may notice that the modifications to the standard intersection algorithms from this section and the previous one are independent; in the previous section those consisted of marking additional nodes as starting/accepting, while in this section those consisted of adding edges.
By applying both modifications simultaneously, we can solve the Approximate (Doubly) EDSM problem, that is, the problem of finding a pair of strings $S_P\in\mathcal{L}(P),S_T\in\mathcal{L}(T)$ such that $S_P$ is at distance at most $k$ from a substring of $S_T$ (for which this distance is minimized). We thus obtain the following result:

\begin{corollary}
Approximate (Doubly) EDSM can be solved in $\cO(N_1N_2)$ time in general, or in $\cO(k(N_1m_2+N_2m_1))$ time for Hamming distance and $\cO(k^2(N_1m_2+N_2m_1))$ time for edit distance when we are given an integer $k>0$ as an upper bound on the sought distance.
\end{corollary}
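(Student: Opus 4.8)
The plan is to observe that the two augmentations of the intersection-automaton framework used above are \emph{orthogonal} and can be applied simultaneously. The \textsc{Doubly ED String Matching} reduction only changes which explicit nodes of the intersection graph are declared starting or accepting; the approximate \EDSI construction only adds weighted edges (and, for edit distance, a handful of implicit nodes sitting on existing extended transitions) while leaving the original $0$-weight edges and the original source/sink untouched. Since neither modification touches the structure the other one relies on, gluing them together should be sound, and the running time is just the union of the two overheads.

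Concretely, for the unrestricted $\cO(N_1N_2)$ bound I would take the standard (uncompacted) product NFA of the path-automata of the two ED strings, of size $\cO(N_1N_2)$, assign weight $0$ to every original edge, and add the edit edges exactly as in the Wagner--Fischer-style construction above: substitution edges $(u,v)\to(u',v')$ whenever $u\xrightarrow{l_1}u'$, $v\xrightarrow{l_2}v'$ and $l_1\ne l_2$, plus, for edit distance, deletion edges $(u,v)\to(u',v)$ and insertion edges $(u,v)\to(u,v')$ for every single-letter transition. On top of this, following the \textsc{Doubly ED String Matching} reduction, I would mark every node $(k,1)$ as a source and every node $(k,n_2+1)$ as a sink, with the pattern playing the role of the second coordinate. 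A minimum-cost source-to-sink path, found by Dial's bucket implementation of Dijkstra's algorithm~\cite{DBLP:journals/cacm/Dial69} in time linear in the graph (the maximum relevant distance being $\cO(N_1+N_2)$), then witnesses a pair $S_1\in\mathcal{L}(T_1)$, $S_2\in\mathcal{L}(T_2)$ together with an occurrence window realizing the smallest distance; reading the two tracks of edge labels yields the witness. The total time is $\cO(N_1N_2)$, which also reuses the threshold-free approximate \EDSI bound established above.

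For the parameterized bounds I would instead start from the compacted intersection automaton of \cref{lem:NFA-intersection}, with $\cO(N_1m_2+N_2m_1)$ extended transitions, and augment it as in the bounded approximate \EDSI theorem: for every pair of extended transitions $u\xrightarrow{l_1}u'$, $v\xrightarrow{l_2}v'$ and every $k'\le k$ such that $l_1$ is at distance $k'$ from the length-$x$ prefix of $l_2$, add a weight-$k'$ edge from $(u,v)$ to the implicit state $v_x$, and symmetrically. The relevant $k'$ values are found by kangaroo jumps over an \LCP data structure in $\cO(k)$ time per transition pair for Hamming distance and $\cO(k^2)$ time per transition pair for edit distance~\cite{10.1145/8307.8309,Landau1986-vm}; for Hamming distance each pair contributes one edge, so the graph stays of size $\cO(N_1m_2+N_2m_1)$, whereas for edit distance each pair contributes $\cO(k)$ edges and implicit nodes, inflating the size by a $\Theta(k)$ factor. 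Adding the extra sources and sinks as above and running Dial's algorithm (whose cost is the number of edges plus the maximum path weight $k$) then gives $\cO(k(N_1m_2+N_2m_1))$ time for Hamming distance and $\cO(k^2(N_1m_2+N_2m_1))$ time for edit distance.

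The step I expect to need the most care is the correctness bookkeeping for the combined graph: one must check that minimum-cost source-to-sink paths are in a cost-preserving correspondence with quadruples consisting of a choice $S_1\in\mathcal{L}(T_1)$, a choice $S_2\in\mathcal{L}(T_2)$, an occurrence window of $S_2$ inside $S_1$ (possibly starting and ending strictly inside segments), and an optimal alignment of $S_2$ against that window. In particular, the $\varepsilon$-transitions of both path-automata, the implicit nodes created on extended transitions, and the ``any text position may start or end an occurrence'' marking must all interact correctly with the weighted edit edges, and chaining edit edges through implicit nodes must not spuriously under-count operations. Once this correspondence is pinned down, the complexity claims follow immediately from the graph sizes above and the linear-time behaviour of Dial's algorithm.
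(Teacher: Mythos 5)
Your proposal is correct and follows essentially the same route as the paper: the paper likewise observes that the Doubly-EDSM modification (extra starting/accepting nodes) and the approximate-\EDSI modification (extra weighted edges) are independent and can be applied simultaneously to the same intersection automaton, with Dial's algorithm extracting the minimum-cost path. Your accounting of the graph sizes and of the per-pair kangaroo-jump costs matches the bounds the paper derives in the two preceding theorems.
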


Once again we can notice that when pattern $P$ is a standard string this time complexity can be bounded by a smaller value -- we can check if the pattern $P$ appears in any of the strings from any of the sets $T[i]$ in $\cO(k(N_P+N_T))$ total time~\cite{Landau1986-vm}. Otherwise each edge used has at least one endpoint in a node $(k,j)$, where $j$ is an explicit state from the automaton representing $T$, hence the total number edge checks in this case is equal to $N_Pm_T$ (we can compute the backwards edges the same way as the forward ones), thus we obtain the total running time of $\cO(k(N_Pm_T+N_T))$ for Hamming distance and $\cO(k^2N_Pm_T+kN_T)$ time for edit distance. Our results match the time complexity of the corresponding results of~\cite{tcs-ed2020}. (Note that~\cite{tcs-ed2020} uses $G$ to denote $m_T$ and $m$ to denote $N_P$.)

Symmetrically, when text $T$ is a standard string those time complexities are $\cO(k(N_Tm_P+N_P))$ and $\cO(k^2N_Tm_P+kN_P))$, respectively, since in this case every useful edge has at least one endpoint in a node $(i,k)$, where $i$ is an explicit state from the automaton representing $P$.

\section{Open Questions}\label{sec:fin}

In our view, the main open questions that stem from our work are as follows.
\begin{itemize}
\item We showed an $\ctO(n_2N_1^{\omega-1}+n_1N_2^{\omega-1})$-time algorithm for $\EDSI$.
Can one design an $\cO(n_2N_1^{\omega-1-\epsilon}+n_1N_2^{\omega-1-\epsilon})$-time  (perhaps not combinatorial) algorithm for \EDSI, for some $\epsilon>0$?
\item We showed that there is no combinatorial $\cO((N_1+N_2)^{1.2-\epsilon}f(n_1,n_2))$-time algorithm for \EDSI. Can one  show a stronger conditional lower bound for combinatorial algorithms?
\item We showed an $\cO(N_1\log N_1\log n_1+N_2\log N_2\log n_2)$-time algorithm for outputting a representation of the intersection language of two unary ED strings. Can one design an
$o(N_1\log N_1\log n_1 + N_2\log N_2\log n_2)$-time algorithm?
\end{itemize}

\bibliographystyle{plain}
\bibliography{arxiv}

\begin{thebibliography}{10}

\bibitem{DBLP:conf/focs/AbboudW14}
Amir Abboud and Virginia Vassilevska~Williams.
\newblock Popular conjectures imply strong lower bounds for dynamic problems.
\newblock In {\em 55th {IEEE} Annual Symposium on Foundations of Computer
  Science, {FOCS} 2014, Philadelphia, PA, USA, October 18-21, 2014}, pages
  434--443. {IEEE} Computer Society, 2014.

\bibitem{wabi18}
Mai Alzamel, Lorraine A.~K. Ayad, Giulia Bernardini, Roberto Grossi, Costas~S.
  Iliopoulos, Nadia Pisanti, Solon~P. Pissis, and Giovanna Rosone.
\newblock Degenerate string comparison and applications.
\newblock In Laxmi Parida and Esko Ukkonen, editors, {\em 18th International
  Workshop on Algorithms in Bioinformatics, {WABI} 2018, August 20-22, 2018,
  Helsinki, Finland}, volume 113 of {\em LIPIcs}, pages 21:1--21:14. Schloss
  Dagstuhl - Leibniz-Zentrum f{\"{u}}r Informatik, 2018.

\bibitem{DBLP:journals/fuin/AlzamelABGIPPR20}
Mai Alzamel, Lorraine A.~K. Ayad, Giulia Bernardini, Roberto Grossi, Costas~S.
  Iliopoulos, Nadia Pisanti, Solon~P. Pissis, and Giovanna Rosone.
\newblock Comparing degenerate strings.
\newblock {\em Fundam. Inform.}, 175(1-4):41--58, 2020.

\bibitem{DBLP:conf/cpm/AoyamaNIIBT18}
Kotaro Aoyama, Yuto Nakashima, Tomohiro I, Shunsuke Inenaga, Hideo Bannai, and
  Masayuki Takeda.
\newblock Faster online elastic degenerate string matching.
\newblock In Gonzalo Navarro, David Sankoff, and Binhai Zhu, editors, {\em
  Annual Symposium on Combinatorial Pattern Matching, {CPM} 2018, July 2-4,
  2018 - Qingdao, China}, volume 105 of {\em LIPIcs}, pages 9:1--9:10. Schloss
  Dagstuhl - Leibniz-Zentrum f{\"{u}}r Informatik, 2018.

\bibitem{DBLP:conf/wabi/Ascone0CEGGP24}
Rocco Ascone, Giulia Bernardini, Alessio Conte, Massimo Equi, Est{\'{e}}ban
  Gabory, Roberto Grossi, and Nadia Pisanti.
\newblock A unifying taxonomy of pattern matching in degenerate strings and
  founder graphs.
\newblock In Solon~P. Pissis and Wing{-}Kin Sung, editors, {\em 24th
  International Workshop on Algorithms in Bioinformatics, {WABI} 2024,
  September 2-4, 2024, Royal Holloway, London, United Kingdom}, volume 312 of
  {\em LIPIcs}, pages 14:1--14:21. Schloss Dagstuhl - Leibniz-Zentrum f{\"{u}}r
  Informatik, 2024.

\bibitem{DBLP:journals/prl/AyadBP17}
Lorraine A.~K. Ayad, Carl Barton, and Solon~P. Pissis.
\newblock A faster and more accurate heuristic for cyclic edit distance
  computation.
\newblock {\em Pattern Recognit. Lett.}, 88:81--87, 2017.

\bibitem{DBLP:journals/nc/BaaijensBBVPRS22}
Jasmijn~A. Baaijens, Paola Bonizzoni, Christina Boucher, Gianluca~Della Vedova,
  Yuri Pirola, Raffaella Rizzi, and Jouni Sir{\'{e}}n.
\newblock Computational graph pangenomics: a tutorial on data structures and
  their applications.
\newblock {\em Nat. Comput.}, 21(1):81--108, 2022.

\bibitem{DBLP:conf/focs/BackursI16}
Arturs Backurs and Piotr Indyk.
\newblock Which regular expression patterns are hard to match?
\newblock In Irit Dinur, editor, {\em {IEEE} 57th Annual Symposium on
  Foundations of Computer Science, {FOCS} 2016, 9-11 October 2016, Hyatt
  Regency, New Brunswick, New Jersey, {USA}}, pages 457--466. {IEEE} Computer
  Society, 2016.

\bibitem{DBLP:journals/corr/BackursI14}
Arturs Backurs and Piotr Indyk.
\newblock Edit distance cannot be computed in strongly subquadratic time
  (unless {SETH} is false).
\newblock {\em {SIAM} J. Comput.}, 47(3):1087--1097, 2018.

\bibitem{DBLP:books/aw/Baeza-YatesR2011}
Ricardo Baeza{-}Yates and Berthier~A. Ribeiro{-}Neto.
\newblock {\em Modern Information Retrieval - the concepts and technology
  behind search, Second edition}.
\newblock Pearson Education Ltd., Harlow, England, 2011.

\bibitem{DBLP:conf/latin/BenderF00}
Michael~A. Bender and Martin Farach{-}Colton.
\newblock The {LCA} problem revisited.
\newblock In Gaston~H. Gonnet, Daniel Panario, and Alfredo Viola, editors, {\em
  {LATIN} 2000: Theoretical Informatics, 4th Latin American Symposium, Punta
  del Este, Uruguay, April 10-14, 2000, Proceedings}, volume 1776 of {\em
  Lecture Notes in Computer Science}, pages 88--94. Springer, 2000.

\bibitem{DBLP:conf/icdm/0001CGGLPPPSS20}
Giulia Bernardini, Alessio Conte, Garance Gourdel, Roberto Grossi, Grigorios
  Loukides, Nadia Pisanti, Solon~P. Pissis, Giulia Punzi, Leen Stougie, and
  Michelle Sweering.
\newblock Hide and mine in strings: Hardness and algorithms.
\newblock In Claudia Plant, Haixun Wang, Alfredo Cuzzocrea, Carlo Zaniolo, and
  Xindong Wu, editors, {\em 20th {IEEE} International Conference on Data
  Mining, {ICDM} 2020, Sorrento, Italy, November 17-20, 2020}, pages 924--929.
  {IEEE}, 2020.

\bibitem{DBLP:conf/latin/0001GPSSZ22}
Giulia Bernardini, Est{\'{e}}ban Gabory, Solon~P. Pissis, Leen Stougie,
  Michelle Sweering, and Wiktor Zuba.
\newblock Elastic-degenerate string matching with 1 error.
\newblock In Armando Casta{\~{n}}eda and Francisco
  Rodr{\'{\i}}guez{-}Henr{\'{\i}}quez, editors, {\em {LATIN} 2022: Theoretical
  Informatics - 15th Latin American Symposium, Guanajuato, Mexico, November
  7-11, 2022, Proceedings}, volume 13568 of {\em Lecture Notes in Computer
  Science}, pages 20--37. Springer, 2022.

\bibitem{bernardini_et_al:LIPIcs:2019:10597}
Giulia Bernardini, Paweł Gawrychowski, Nadia Pisanti, Solon~P. Pissis, and
  Giovanna Rosone.
\newblock Even faster elastic-degenerate string matching via fast matrix
  multiplication.
\newblock In Christel Baier, Ioannis Chatzigiannakis, Paola Flocchini, and
  Stefano Leonardi, editors, {\em 46th International Colloquium on Automata,
  Languages, and Programming, {ICALP} 2019, July 9-12, 2019, Patras, Greece},
  volume 132 of {\em LIPIcs}, pages 21:1--21:15. Schloss Dagstuhl -
  Leibniz-Zentrum f{\"{u}}r Informatik, 2019.

\bibitem{elasticSICOMP}
Giulia Bernardini, Paweł Gawrychowski, Nadia Pisanti, Solon~P. Pissis, and
  Giovanna Rosone.
\newblock Elastic-degenerate string matching via fast matrix multiplication.
\newblock {\em SIAM J. Comput.}, 51(3):549--576, 2022.

\bibitem{tcs-ed2020}
Giulia Bernardini, Nadia Pisanti, Solon~P. Pissis, and Giovanna Rosone.
\newblock Approximate pattern matching on elastic-degenerate text.
\newblock {\em Theor. Comput. Sci.}, 812:109--122, 2020.

\bibitem{DBLP:conf/soda/BilleG24}
Philip Bille and Inge~Li G{\o}rtz.
\newblock Sparse regular expression matching.
\newblock In David~P. Woodruff, editor, {\em Proceedings of the 2024 {ACM-SIAM}
  Symposium on Discrete Algorithms, {SODA} 2024, Alexandria, VA, USA, January
  7-10, 2024}, pages 3354--3375. {SIAM}, 2024.

\bibitem{bringmann2015quadratic}
Karl Bringmann and Marvin K{\"{u}}nnemann.
\newblock Quadratic conditional lower bounds for string problems and dynamic
  time warping, 2015.

\bibitem{DBLP:conf/iwpec/CalabroIP09}
Chris Calabro, Russell Impagliazzo, and Ramamohan Paturi.
\newblock The complexity of satisfiability of small depth circuits.
\newblock In Jianer Chen and Fedor~V. Fomin, editors, {\em Parameterized and
  Exact Computation, 4th International Workshop, {IWPEC} 2009, Copenhagen,
  Denmark, September 10-11, 2009, Revised Selected Papers}, volume 5917 of {\em
  Lecture Notes in Computer Science}, pages 75--85. Springer, 2009.

\bibitem{DBLP:conf/gbrpr/CarlettiFG0RV19}
Vincenzo Carletti, Pasquale Foggia, Erik Garrison, Luca Greco, Pierluigi
  Ritrovato, and Mario Vento.
\newblock Graph-based representations for supporting genome data analysis and
  visualization: Opportunities and challenges.
\newblock In Donatello Conte, Jean{-}Yves Ramel, and Pasquale Foggia, editors,
  {\em Graph-Based Representations in Pattern Recognition - 12th {IAPR-TC-15}
  International Workshop, GbRPR 2019, Tours, France, June 19-21, 2019,
  Proceedings}, volume 11510 of {\em Lecture Notes in Computer Science}, pages
  237--246. Springer, 2019.

\bibitem{DBLP:journals/bioinformatics/CislakGH18}
Aleksander Cisłak, Szymon Grabowski, and Jan Holub.
\newblock {SOPanG}: online text searching over a pan-genome.
\newblock {\em Bioinform.}, 34(24):4290--4292, 2018.

\bibitem{DBLP:books/daglib/0023376}
Thomas~H. Cormen, Charles~E. Leiserson, Ronald~L. Rivest, and Clifford Stein.
\newblock {\em Introduction to Algorithms, 3rd Edition}.
\newblock {MIT} Press, 2009.

\bibitem{DBLP:books/daglib/0020103}
Maxime Crochemore, Christophe Hancart, and Thierry Lecroq.
\newblock {\em Algorithms on {Strings}}.
\newblock Cambridge University Press, 2007.

\bibitem{DBLP:journals/cacm/Dial69}
Robert~B. Dial.
\newblock Algorithm 360: shortest-path forest with topological ordering {[H]}.
\newblock {\em Commun. {ACM}}, 12(11):632--633, 1969.

\bibitem{Signal1979}
N.~Rex Dixon and Thomas~B. Martin.
\newblock {\em Automatic Speech and Speaker Recognition}.
\newblock John Wiley \& Sons, Inc., USA, 1979.

\bibitem{consensus2021}
Esteban Domingo, Carlos Garc\'{\i}a-Crespo, and Celia Perales.
\newblock Historical perspective on the discovery of the quasispecies concept.
\newblock {\em Annu. Rev. Virol.}, 8(1):51--72, 2021.
\newblock PMID: 34586874.

\bibitem{DBLP:journals/talg/EquiMTG23}
Massimo Equi, Veli M{\"{a}}kinen, Alexandru~I. Tomescu, and Roberto Grossi.
\newblock On the complexity of string matching for graphs.
\newblock {\em {ACM} Trans. Algorithms}, 19(3):21:1--21:25, 2023.

\bibitem{DBLP:journals/algorithmica/EquiNACTM23}
Massimo Equi, Tuukka Norri, Jarno Alanko, Bastien Cazaux, Alexandru~I. Tomescu,
  and Veli M{\"{a}}kinen.
\newblock Algorithms and complexity on indexing founder graphs.
\newblock {\em Algorithmica}, 85(6):1586--1623, 2023.

\bibitem{DBLP:conf/focs/Farach97}
Martin Farach.
\newblock Optimal suffix tree construction with large alphabets.
\newblock In {\em 38th Annual Symposium on Foundations of Computer Science,
  {FOCS} 1997, Miami Beach, Florida, USA, October 19-22, 1997}, pages 137--143.
  {IEEE} Computer Society, 1997.

\bibitem{DBLP:journals/jacm/FredmanKS84}
Michael~L. Fredman, J{\'{a}}nos Koml{\'{o}}s, and Endre Szemer{\'{e}}di.
\newblock Storing a sparse table with {O}(1) worst case access time.
\newblock {\em J. {ACM}}, 31(3):538--544, 1984.

\bibitem{DBLP:conf/cpm/GaboryMPPRSZ23}
Est{\'{e}}ban Gabory, Njagi~Moses Mwaniki, Nadia Pisanti, Solon~P. Pissis,
  Jakub Radoszewski, Michelle Sweering, and Wiktor Zuba.
\newblock Comparing elastic-degenerate strings: Algorithms, lower bounds, and
  applications.
\newblock In Laurent Bulteau and Zsuzsanna Lipt{\'{a}}k, editors, {\em 34th
  Annual Symposium on Combinatorial Pattern Matching, {CPM} 2023, June 26-28,
  2023, Marne-la-Vall{\'{e}}e, France}, volume 259 of {\em LIPIcs}, pages
  11:1--11:20. Schloss Dagstuhl - Leibniz-Zentrum f{\"{u}}r Informatik, 2023.

\bibitem{frontiers}
Est{\'{e}}ban Gabory, Njagi~Moses Mwaniki, Nadia Pisanti, Solon~P. Pissis,
  Jakub Radoszewski, Michelle Sweering, and Wiktor Zuba.
\newblock Pangenome comparison via {ED} strings.
\newblock {\em Frontiers Bioinform.}, 2024.
\newblock Under Review.

\bibitem{10.1145/8307.8309}
Zvi Galil and Raffaele Giancarlo.
\newblock Improved string matching with k mismatches.
\newblock {\em SIGACT News}, 17(4):52–54, mar 1986.

\bibitem{DBLP:conf/cpm/GawrychowskiGL20}
Paweł Gawrychowski, Samah Ghazawi, and Gad~M. Landau.
\newblock On indeterminate strings matching.
\newblock In Inge~Li G{\o}rtz and Oren Weimann, editors, {\em 31st Annual
  Symposium on Combinatorial Pattern Matching, {CPM} 2020, June 17-19, 2020,
  Copenhagen, Denmark}, volume 161 of {\em LIPIcs}, pages 14:1--14:14. Schloss
  Dagstuhl - Leibniz-Zentrum f{\"{u}}r Informatik, 2020.

\bibitem{DBLP:conf/spire/Gibney20}
Daniel Gibney.
\newblock An efficient elastic-degenerate text index? {Not} likely.
\newblock In Christina Boucher and Sharma~V. Thankachan, editors, {\em String
  Processing and Information Retrieval - 27th International Symposium, {SPIRE}
  2020, Orlando, FL, USA, October 13-15, 2020, Proceedings}, volume 12303 of
  {\em Lecture Notes in Computer Science}, pages 76--88. Springer, 2020.

\bibitem{DBLP:conf/cpm/GrossiILPPRRVV17}
Roberto Grossi, Costas~S. Iliopoulos, Chang Liu, Nadia Pisanti, Solon~P.
  Pissis, Ahmad Retha, Giovanna Rosone, Fatima Vayani, and Luca Versari.
\newblock On-line pattern matching on similar texts.
\newblock In Juha K{\"{a}}rkk{\"{a}}inen, Jakub Radoszewski, and Wojciech
  Rytter, editors, {\em 28th Annual Symposium on Combinatorial Pattern
  Matching, {CPM} 2017, July 4-6, 2017, Warsaw, Poland}, volume~78 of {\em
  LIPIcs}, pages 9:1--9:14. Schloss Dagstuhl - Leibniz-Zentrum f{\"{u}}r
  Informatik, 2017.

\bibitem{DBLP:books/cu/Gusfield1997}
Dan Gusfield.
\newblock {\em Algorithms on Strings, Trees, and Sequences - Computer Science
  and Computational Biology}.
\newblock Cambridge University Press, 1997.

\bibitem{DBLP:journals/cacm/Heckel78}
Paul Heckel.
\newblock A technique for isolating differences between files.
\newblock {\em Commun. {ACM}}, 21(4):264--268, 1978.

\bibitem{DBLP:journals/iandc/IliopoulosKP21}
Costas~S. Iliopoulos, Ritu Kundu, and Solon~P. Pissis.
\newblock Efficient pattern matching in elastic-degenerate strings.
\newblock {\em Inf. Comput.}, 279:104616, 2021.

\bibitem{DBLP:journals/jcss/ImpagliazzoP01}
Russell Impagliazzo and Ramamohan Paturi.
\newblock On the complexity of k-{SAT}.
\newblock {\em J. Comput. Syst. Sci.}, 62(2):367--375, 2001.

\bibitem{DBLP:journals/jcss/ImpagliazzoPZ01}
Russell Impagliazzo, Ramamohan Paturi, and Francis Zane.
\newblock Which problems have strongly exponential complexity?
\newblock {\em J. Comput. Syst. Sci.}, 63(4):512--530, 2001.

\bibitem{IUPAC}
{IUPAC-IUB Commission on Biochemical Nomenclature}.
\newblock Abbreviations and symbols for nucleic acids, polynucleotides, and
  their constituents.
\newblock {\em Biochemistry}, 9(20):4022--4027, 1970.

\bibitem{DBLP:journals/amai/JacobsIW20}
Kayla Jacobs, Alon Itai, and Shuly Wintner.
\newblock Acronyms: identification, expansion and disambiguation.
\newblock {\em Ann. Math. Artif. Intell.}, 88(5-6):517--532, 2020.

\bibitem{DBLP:conf/acl-louhi/KirchhoffT16}
Katrin Kirchhoff and Anne~M. Turner.
\newblock Unsupervised resolution of acronyms and abbreviations in nursing
  notes using document-level context models.
\newblock In Cyril Grouin, Thierry Hamon, Aur{\'{e}}lie N{\'{e}}v{\'{e}}ol, and
  Pierre Zweigenbaum, editors, {\em Proceedings of the Seventh International
  Workshop on Health Text Mining and Information Analysis, Louhi@EMNLP 2016,
  Austin, TX, USA, November 5, 2016}, pages 52--60. Association for
  Computational Linguistics, 2016.

\bibitem{DBLP:books/daglib/0015106}
Jon~M. Kleinberg and {\'{E}}va Tardos.
\newblock {\em Algorithm {Design}}.
\newblock Addison-Wesley, 2006.

\bibitem{DBLP:conf/aaai/KubalN21}
Divesh~R. Kubal and Apurva Nagvenkar.
\newblock Effective ensembling of transformer based language models for
  acronyms identification.
\newblock In Amir Pouran~Ben Veyseh, Franck Dernoncourt, Thien~Huu Nguyen,
  Walter Chang, and Leo~Anthony Celi, editors, {\em Proceedings of the Workshop
  on Scientific Document Understanding co-located with 35th {AAAI} Conference
  on Artificial Inteligence, SDU@AAAI 2021, Virtual Event, February 9, 2021},
  volume 2831 of {\em {CEUR} Workshop Proceedings}. CEUR-WS.org, 2021.

\bibitem{DBLP:journals/bmcbi/KuoLLH09}
Cheng{-}Ju Kuo, Maurice H.~T. Ling, Kuan{-}Ting Lin, and Chun{-}Nan Hsu.
\newblock {BIOADI:} a machine learning approach to identifying abbreviations
  and definitions in biological literature.
\newblock {\em {BMC} Bioinform.}, 10({S-15}):7, 2009.

\bibitem{DBLP:journals/siamcomp/LandauMS98}
Gad~M. Landau, Eugene~W. Myers, and Jeanette~P. Schmidt.
\newblock Incremental string comparison.
\newblock {\em {SIAM} J. Comput.}, 27(2):557--582, 1998.

\bibitem{Landau1986-vm}
Gad~M Landau, Uzi Vishkin, and Ruth Nussinov.
\newblock An efficient string matching algorithm with k differences for
  nucleotide and amino acid sequences.
\newblock {\em Nucleic Acids Res.}, 14(1):31--46, 1986.

\bibitem{FiniteAutomata2004}
Mark~V. Lawson.
\newblock {\em Finite Automata}.
\newblock Chapman and Hall/CRC, 2004.

\bibitem{DBLP:journals/isci/LiuLH17}
Jie Liu, Caihua Liu, and Yalou Huang.
\newblock Multi-granularity sequence labeling model for acronym expansion
  identification.
\newblock {\em Inf. Sci.}, 378:462--474, 2017.

\bibitem{makinen2020linear}
Veli M{\"{a}}kinen, Bastien Cazaux, Massimo Equi, Tuukka Norri, and
  Alexandru~I. Tomescu.
\newblock Linear time construction of indexable founder block graphs.
\newblock In Carl Kingsford and Nadia Pisanti, editors, {\em 20th International
  Workshop on Algorithms in Bioinformatics, {WABI} 2020, September 7-9, 2020,
  Pisa, Italy (Virtual Conference)}, volume 172 of {\em LIPIcs}, pages
  7:1--7:18. Schloss Dagstuhl - Leibniz-Zentrum f{\"{u}}r Informatik, 2020.

\bibitem{DBLP:journals/ipl/ManberW94}
Udi Manber and Sun Wu.
\newblock An algorithm for approximate membership checking with application to
  password security.
\newblock {\em Inf. Process. Lett.}, 50(4):191--197, 1994.

\bibitem{DBLP:journals/csur/Navarro01}
Gonzalo Navarro.
\newblock A guided tour to approximate string matching.
\newblock {\em {ACM} Comput. Surv.}, 33(1):31--88, 2001.

\bibitem{DBLP:journals/tcs/RizzoENM24}
Nicola Rizzo, Massimo Equi, Tuukka Norri, and Veli M{\"{a}}kinen.
\newblock Elastic founder graphs improved and enhanced.
\newblock {\em Theor. Comput. Sci.}, 982:114269, 2024.

\bibitem{DBLP:conf/psb/SchwartzH03}
Ariel~S. Schwartz and Marti~A. Hearst.
\newblock A simple algorithm for identifying abbreviation definitions in
  biomedical text.
\newblock In Russ~B. Altman, A.~Keith Dunker, Lawrence Hunter, and Teri~E.
  Klein, editors, {\em Proceedings of the 8th Pacific Symposium on
  Biocomputing, {PSB} 2003, Lihue, Hawaii, USA, January 3-7, 2003}, pages
  451--462, 2003.

\bibitem{DBLP:journals/ijdar/TaghvaG99}
Kazem Taghva and Jeff Gilbreth.
\newblock Recognizing acronyms and their definitions.
\newblock {\em Int. J. Document Anal. Recognit.}, 1(4):191--198, 1999.

\bibitem{PanGenomeConsortium18}
{The Computational Pan{-}Genomics Consortium}.
\newblock Computational pan-genomics: status, promises and challenges.
\newblock {\em Briefings Bioinform.}, 19(1):118--135, 2018.

\bibitem{DBLP:conf/focs/WilliamsW10}
Virginia Vassilevska~Williams and Ryan Williams.
\newblock Subcubic equivalences between path, matrix and triangle problems.
\newblock In {\em 51th Annual {IEEE} Symposium on Foundations of Computer
  Science, {FOCS} 2010, October 23-26, 2010, Las Vegas, Nevada, {USA}}, pages
  645--654. {IEEE} Computer Society, 2010.

\bibitem{DBLP:conf/coling/VeysehDTN20}
Amir Pouran~Ben Veyseh, Franck Dernoncourt, Quan~Hung Tran, and Thien~Huu
  Nguyen.
\newblock What does this acronym mean? {Introducing} a new dataset for acronym
  identification and disambiguation.
\newblock In Donia Scott, N{\'{u}}ria Bel, and Chengqing Zong, editors, {\em
  Proceedings of the 28th International Conference on Computational
  Linguistics, {COLING} 2020, Barcelona, Spain (Online), December 8-13, 2020},
  pages 3285--3301. International Committee on Computational Linguistics, 2020.

\bibitem{DBLP:journals/jacm/WagnerF74}
Robert~A. Wagner and Michael~J. Fischer.
\newblock The string-to-string correction problem.
\newblock {\em J. {ACM}}, 21(1):168--173, 1974.

\bibitem{DBLP:journals/tcs/Williams05}
Ryan Williams.
\newblock A new algorithm for optimal 2-constraint satisfaction and its
  implications.
\newblock {\em Theor. Comput. Sci.}, 348(2-3):357--365, 2005.

\end{thebibliography}

\end{document}